\newcommand{\argmax}{\arg\!\max}
\newcommand{\vardbtilde}[1]{\tilde{\raisebox{0pt}[0.85\height]{$\tilde{#1}$}}}
\newcommand{\subtitle}[1]{%
  \posttitle{%
    \par\end{center}
    \begin{center}\large#1\end{center}
    \vskip0.5em}%
}
\newenvironment{psmallmatrix}
  {\left(\begin{smallmatrix}}  % small matrix
  {\end{smallmatrix}\right)}
\newtheorem{theorem}{Theorem}[section]
\newtheorem{lemma}{Lemma}
\newtheorem{proposition}[theorem]{Proposition}
\newtheorem{definition}{Definition}
\newcolumntype{C}[1]{>{\centering\arraybackslash}m{#1}}
\begin{document}
\pagestyle{plain}

\newtheoremstyle{mystyle}% name
{\topsep}% Space above
{\topsep}% Space below
{\it}% Body font
{}% Indent amount
{\bf}% Theorem head font
{.}%Punctuation after theorem head
{.5em}%Space after theorem head
{}% theorem head spec
\theoremstyle{mystyle}
\newtheorem{assumptionex}{Assumption}
\newenvironment{assumption}
  {\pushQED{\qed}\renewcommand{\qedsymbol}{}\assumptionex}
  {\popQED\endassumptionex}
\newtheorem{assumptionexp}{Assumption}
\newenvironment{assumptionp}
  {\pushQED{\qed}\renewcommand{\qedsymbol}{}\assumptionexp}
  {\popQED\endassumptionexp}
\renewcommand{\theassumptionexp}{\arabic{assumptionexp}$'$}

\newtheorem{assumptionexpp}{Assumption}
\newenvironment{assumptionpp}
  {\pushQED{\qed}\renewcommand{\qedsymbol}{}\assumptionexpp}
  {\popQED\endassumptionexpp}
\renewcommand{\theassumptionexpp}{\arabic{assumptionexpp}$''$}

\newtheorem{assumptionexppp}{Assumption}
\newenvironment{assumptionppp}
  {\pushQED{\qed}\renewcommand{\qedsymbol}{}\assumptionexppp}
  {\popQED\endassumptionexppp}
\renewcommand{\theassumptionexppp}{\arabic{assumptionexppp}$'''$}

\renewcommand{\arraystretch}{1.3}

\newcommand\carl[1]{\cmnt{#1}{Carl}}
\newcommand\ambarish[1]{\cmnt{#1}{Ambarish}}
\newcommand\jose[1]{\cmnt{#1}{Jose}}

\newcommand{\argmin}{\mathop{\mathrm{argmin}}}
\makeatletter
\newcommand{\grande}{\bBigg@{2.25}}
\newcommand{\enorme}{\bBigg@{5}}

\newcommand{\blind}{0}

\newcommand{\tit}{%Revisiting and Extending the Finite Selection Model for Experimental Design
Balanced and Robust Randomized Treatment Assignments: The Finite Selection Model for the \\  Health Insurance Experiment and Beyond}

\if0\blind

%{\title{\tit\thanks{For comments and suggestions, we thank...}}
{\title{\tit\thanks{We thank John Golden, Angela Lee, and Bijan Niknam for helpful research assistance and comments. We also thank participants at Euro-CIM 2023 for their valuable comments.
This work was supported through a grant from the Alfred P. Sloan Foundation (G-2020-13946).}}
\author{Ambarish Chattopadhyay\thanks{Stanford Data Science, Stanford University, 450 Jane Stanford Way Wallenberg, Stanford, CA 94305; email: \url{hsirabma@stanford.edu}.}, \and Carl N. Morris\thanks{Department of Statistics, Harvard University, 1 Oxford Street
Cambridge, MA 02138; email: \url{carl.morris@comcast.net}.},\and Jos\'{e} R. Zubizarreta\thanks{Departments of Health Care Policy, Biostatistics, and Statistics, Harvard University, 180 Longwood Avenue, Office 307-D, Boston, MA 02115; email: \url{zubizarreta@hcp.med.harvard.edu}.}
}

\date{} 

\maketitle
}\fi

\if1\blind
\title{\tit}
\date{} 
\maketitle
\fi

\begin{abstract}

\indent \hspace{.5cm} The Finite Selection Model (FSM) was developed by Carl Morris in the 1970s for the design of the RAND Health Insurance Experiment (HIE) (\citealt{morris1979finite}, \citealt{newhouse1993free}), one of the largest and most comprehensive social science experiments conducted in the U.S. 
The idea behind the FSM is that each treatment group takes its turns selecting units in a fair and random order to optimize a common assignment criterion. 
At each of its turns, a treatment group selects the available unit that maximally improves the combined quality of its resulting group of units in terms of the criterion. 
%In the FSM, a treatment group at each of its turns selects the available unit that maximally improves the combined quality of its resulting group of units according to a common optimality criterion.  
In the HIE and beyond, we revisit, formalize, and extend the FSM as a general tool for experimental design.

\hspace{.5cm} Leveraging the idea of D-optimality, we propose and analyze a new selection criterion in the FSM. The FSM using the D-optimal selection function has no tuning parameters, is affine invariant, and when appropriate, retrieves several classical designs such as randomized block and matched-pair designs. For multi-arm experiments, we propose algorithms to generate a fair and random selection order of treatments. We demonstrate FSM's performance in a case study based on the HIE and in ten randomized studies from the health and social sciences. On average, the FSM achieves 68\% better covariate balance than complete randomization and 56\% better covariate balance than rerandomization in a typical study. We recommend the FSM be considered in experimental design for its conceptual simplicity, efficiency, and robustness.
\end{abstract}

%\vspace*{.3in}

\begin{center}
\noindent Keywords:
%\small
{Causal inference; Covariate balance; Experimental design; Multi-valued treatments}
%\normalsize
\end{center}
\clearpage
\doublespacing

\singlespacing
\pagebreak
\tableofcontents
\pagebreak
\doublespacing
\section{Introduction}
\label{sec_introduction}

%%%%%%%%%%%%%%%%%%%%%%%%%%%%%%%%%%%%%%%%%%%
%%%%%%%%%%%%%%%%%%%%%%%%%%%%%%%%%%%%%%%%%%%
\subsection{The RAND Health Insurance Experiment}

In the 1970's, the challenge of financing and delivering high-quality and affordable health care to all Americans was at the center of national policy debate. 
At the time, two central questions were ``How much more medical care would people use if it is provided free of charge?'' and ``What are the consequences of using more medical care on their health?''
To address these and other related questions, an interdisciplinary team of researchers led by Joseph P. Newhouse at RAND designed and conducted the Health Insurance Experiment (HIE), a large-scale, multi-year, randomized public policy experiment developed and completed between 1971 and 1982.
To this day, the HIE is one of the largest and most comprehensive social science experiments ever conducted in the U.S.
Even now, four decades after its completion, evidence from the HIE is still fundamental to the national discussion on health care cost sharing and health care reform.

In the HIE, a representative sample of 2,750 families comprising more than 7,700 individuals was chosen from six urban and rural sites across the United States.
% check the exact scheme
At the beginning of the study, participants completed a baseline survey providing numerous demographic, medical, and socioeconomic measurements.
Families were then assigned to health insurance plans that varied substantially in their coinsurance rates and out-of-pocket expenditure maxima, for a total of 13 possible treatment groups. The goal of the study was to estimate the marginal averages of utilization and health outcomes in each of the six sites under each plan.

To provide the strongest possible evidence on health utilization and outcomes, the study had to be randomized. 
However, achieving balance for numerous continuous and categorical baseline covariates through randomization is challenging in experiments with so many treatment groups and different implementation sites.
In the HIE the groups had to be balanced and representative of the sites.
In the health and social sciences, there is an ever-increasing need for methods for random assignment of units into multiple treatment groups that are balanced, efficient, and robust.

%%%%%%%%%%%%%%%%%%%%%%%%%%%%%%%%%%%%%%%%%%%
\subsection{Toward balanced, efficient, and robust experimental designs}
 
Randomized experiments are considered to be the gold standard for causal inference, as randomization provides an unequivocal basis for inference and control. 
In randomized experiments, the act of randomization ensures balance on both observed and unobserved covariates \textit{on average}.
However, a given realization of the random assignment mechanism may produce substantial imbalances on one or more covariates. 
This imbalance problem can be exacerbated in settings like the HIE,
where treatments are multi-valued and many baseline covariates exist, leading to loss in efficiency of the effect estimates.

% more thorough lit review
A variety of methods have been proposed in the literature to address this problem, such as blocking (\citealt{fisher1925statistical}, \citealt{fisher1935design}, \citealt{cochran1957experimental}), optimal pair-matching (\citealt{greevy2004optimal}), greedy pair-switching (\citealt{krieger2019nearly}),
and designs using mixed-integer programming (\citealt{bertsimas2015power}). 
In particular, rerandomization (\citealt{morgan2012rerandomization}) has gained popularity over the last few years and has become commonplace in experiments. 
However, rerandomization may not protect against and be robust to chance imbalances in functions of the covariates that are not explicitly addressed by the rerandomization criterion \citep{banerjee2017decision}, especially in experiments with multi-valued ($>$2) treatments. Moreover, defining the rerandomization criterion requires the selection of a tuning parameter governing the acceptable degree of imbalance, which may be difficult to choose and require iteration in practice. 
%Broadly speaking, in large-scale experiments where the space of possible assignments is vast, there is a need for randomized methods that can effectively cut through this space to obtain a sufficiently well-balanced assignment, in a reasonable amount of time. 
%\textcolor{red}{Moreover, rerandomization rules out imbalanced assignments ex post, which may complicate inference (\citealt{athey2017econometrics}).}

To address these and other related challenges, we revisit and extend the Finite Selection Model (FSM) for experimental design. The original version of the FSM was proposed and developed by Carl N. Morris in the design of the HIE (\citealt{morris1979finite}, \citealt{newhouse1993free}, \citealt{morris1993the}).
The idea behind the FSM is that each treatment group takes turns in a fair and random order to select units from a pool of available units such that, at each stage, each treatment group selects the unit that maximally improves the combined quality of its current group of units. 
The criterion for measuring quality is flexible. 
Among other contributions, in this paper we develop a new criterion based on D-optimality, which does not require tuning parameters.

To illustrate, Figure \ref{fig:simu_boxplot} exhibits the performance of complete randomization, rerandomization, and the FSM in a version of the HIE data with four treatment groups and 20 covariates.
For rerandomization, we compute the maximum Mahalanobis distance (across all pairs of treatment groups) based on the 20 covariates and their squares and pairwise products (i.e., all second-order transformations), and following \cite{lock2011rerandomization}, accept 0.1\% of the assignments with the smallest covariate distance (see Sections \ref{sec_hiedata} and \ref{sec_intuition} for details). 
The figure displays the distribution of absolute standardized mean differences (ASMD; \citealt{rosenbaum1985constructing})\footnote{The absolute standardized mean difference for a single covariate $X$ between treatment groups $g$ and $g'$ is $\text{ASMD}(X) = {|\bar{X}_g - \bar{X}_{g'} |}/{\sqrt{(s^2_g + s^2_{g'})/{2}}}$, where $\bar{X}_g$ and $s^2_g$ are the mean and variance of $X$ in treatment group $g$, respectively. 
Please see \cite{rosenbaum1985constructing}) for details.} in covariates and the second-order transformations across multiple realizations of the randomization mechanisms for the three designs. 
Lower values of ASMD indicate better balance on the covariates or their transformations.
Better balance can improve the validity and credibility of a study, and can also translate into increased efficiency and robustness.

\begin{figure}[!ht]
\centering
\includegraphics[scale =0.55]{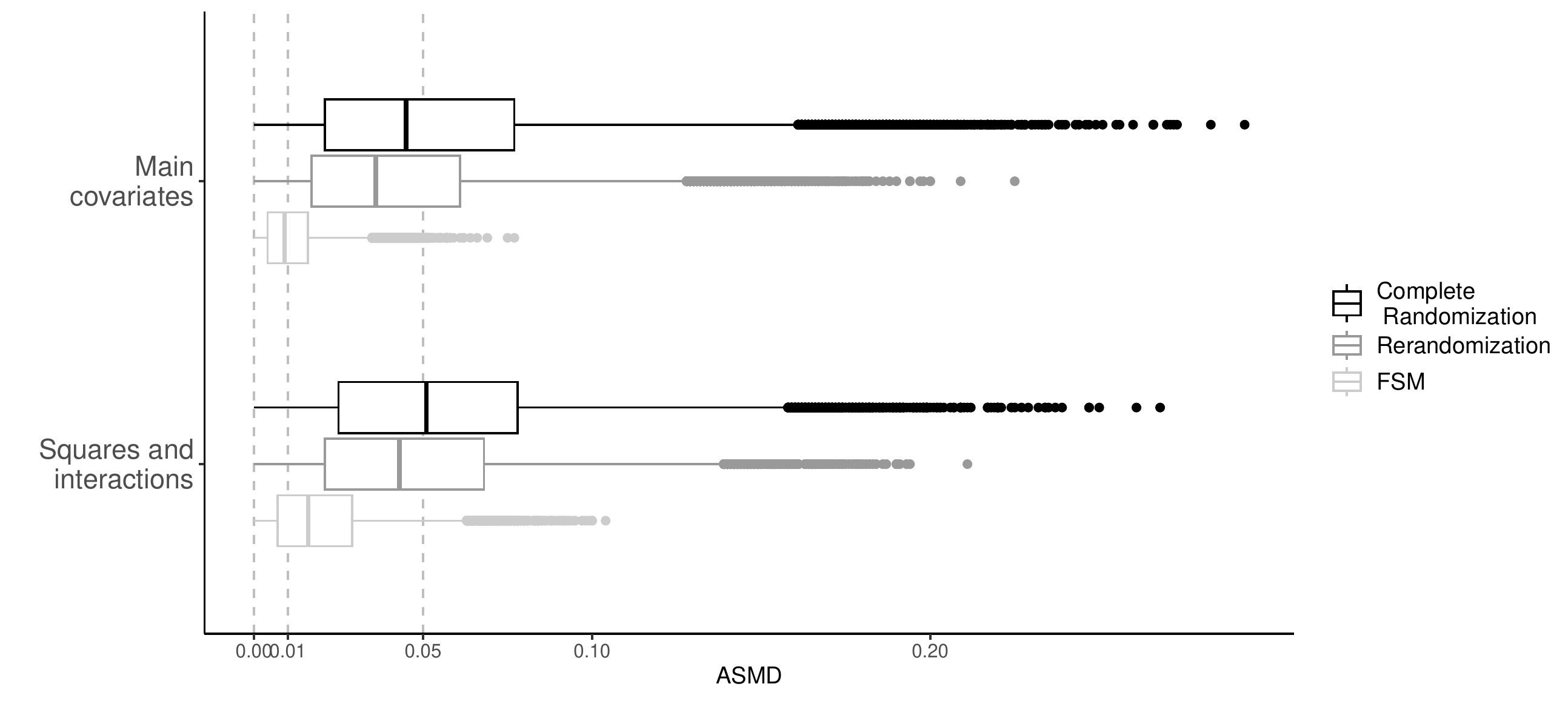}
\caption{Distributions of ASMD for complete randomization, rerandomization, and the FSM, for 20 baseline covariates in the HIE data. Without tuning parameters, the FSM handles multiple ($>$2) treatment groups and substantially improves covariate balance and, thereby, statistical efficiency.}
\label{fig:simu_boxplot}
\end{figure}

We observe that, as expected, rerandomization outperforms complete randomization in terms of imbalances on the main covariates and the second-order transformations. The FSM, however, markedly outperforms both methods for both types of covariates without requiring tuning parameters.
This analysis reveals that, while rerandomization performs well by common covariate balance standards (the majority of the ASMD is smaller than 0.1), there is room for improvement.
As we explain in Section \ref{sec_thehie}, in experiments like the HIE, the space of possible assignments is vast, and the FSM can meaningfully improve the assignment of units into treatment groups to achieve better balance and efficiency.

In a nutshell, the FSM does better because it progressively randomizes units into treatment groups in a controlled manner towards a criterion that is common to all groups and robust against general outcome models. As we show in theory and in practice in sections \ref{sec_theD}, \ref{sec_thehie}, and \ref{sec_additional} the FSM is a flexible tool for random assignment in various settings.

%\vspace{-.75cm}

%%%%%%%%%%%%%%%%%%%%
%%%%%%%%%%%%%%%%%%%%
\subsection{Contribution and outline}

In this paper, we revisit, formalize, and extend the FSM for experimental design.
We show that the FSM can be used for balanced, efficient, and robust random treatment assignments, outperforming common assignment methods on these three dimensions.
In particular, we re-introduce the FSM under the potential outcomes framework (\citealt{neyman1923application}, \citealt{rubin1974estimating}).
We use the sequentially controlled Markovian random sampling (SCOMARS, \citealt{morris1983sequentially}) algorithm to determine the selection order of treatments for two-group experiments and extend it to multi-group experiments.
We propose a new selection criterion for treatments based on the idea of D-optimality and discuss its theoretical properties.
%In particular, we show that the FSM using this selection criterion is invariant with respect to affine transformations of the covariates. 
Under suitable conditions, we show that the FSM retrieves several classical experimental designs, such as randomized block and matched-pair designs. 
We explain model-based approaches to inference under the FSM and develop randomization-based alternatives. 
We analyze the FSM's performance empirically and compare it to common assignment methods. 
Finally, we discuss potential extensions of the FSM to more complex experimental design settings, such as stratified experiments and experiments with sequential arrival of units. 
In an accompanying paper \citep{chattopadhyay2021randomized}, we describe how these methods can be implemented in the new \texttt{FSM} package for \texttt{R}, which is publicly available on CRAN.

The paper proceeds as follows. In Section \ref{sec_hiedesign}, we describe the design of the RAND Health Insurance Experiment, focusing on the assignment of each family to one of 13 health insurance plans. 
In Section \ref{sec_foundations}, we present the setup, notation, and main components of the FSM. 
In Section \ref{sec_theD}, we propose a selection criterion based on D-optimality and analyze its properties. 
In Section \ref{sec_inference}, we discuss inference under the FSM. 
In Section \ref{sec_thehie}, we evaluate the performance of the FSM and compare it to standard methods such as complete randomization and rerandomization using the HIE data. In Section \ref{sec_additional}, we perform a similar comparison using the data from ten experimental studies from the health and social sciences. 
Finally, in Section \ref{sec_practical} we consider extensions of the FSM to other settings such as multi-group, stratified, and sequential experiments.
In Section \ref{sec_summary}, we conclude with a summary and remarks.
In the Online Supplementary Materials, we present all the proofs of the propositions and theorems, extended theoretical results, further empirical results based on a simulation study, and supplemental experimenal results on the HIE study and the ten case studies.

%%%%%%%%%%%%%%%%%%%%
%%%%%%%%%%%%%%%%%%%%
%%%%%%%%%%%%%%%%%%%%

%%%%%%%%%%%%%%%%%%%%%%%%%%%%%%%%%%%%%%%%%%%
%%%%%%%%%%%%%%%%%%%%%%%%%%%%%%%%%%%%%%%%%%%
%%%%%%%%%%%%%%%%%%%%%%%%%%%%%%%%%%%%%%%%%%%
\section{Design of the Health Insurance Experiment}
\label{sec_hiedesign}
In the HIE, families were assigned to different health insurance plans using the original version of the FSM. 
Initially, assignments were made in each of the six HIE sites to 12 or 13 fee-for-service plans with varying combinations of coinsurance (cost sharing) rates and income-related deductibles. Coinsurance plans consisted of $0\%$ (free care), $25\%$, $50\%$, or $95\%$ coinsurance rates, plus a plan with mixed coinsurance rates, and an individual deductible plan. Within the cost sharing plans, families were further assigned to different out-of-pocket maxima where the out-of-pocket expenditures were capped at 5\%, 10\%, or 15\% of family income, with an annual maximum of \$1,000 \citep{brook2006health}. 
To ensure that the resulting treatment groups were balanced relative to the population of each site, the FSM considered a discard group of study non-participants as an additional treatment group.
%in its assignment process.

Listed in chronological order of study initiation, the following sites were tracked for several years: Dayton, OH; Seattle, WA; Fitchburg, MA; Franklin County, MA; Charleston, SC; and Georgetown County, SC.  The FSM was used, independently in each of the sites, to make random assignments to improve balance on up to 22 family-level baseline covariates across treatment groups. In each of the first two sites, the FSM was used multiple times for separate independent subsets of families to maintain baseline data schedules.
In addition to estimating the overall marginal effects of health insurance plan design on healthcare utilization and outcomes, the HIE team also sought to understand how the experimental results were affected by particular design choices, e.g., longer versus shorter enrollment duration, receiving versus not receiving participation incentives, 
higher versus lower interviewing frequency. To this end, four additional sub-experiments were conducted, and the FSM was used to randomize families to the sub-treatment groups.  
%%%%%%%%%%%%%%%%%%%%
%%%%%%%%%%%%%%%%%%%%
%\vspace{-0.2cm}

\section{Foundations and overview of the FSM}
\label{sec_foundations}
%%%%%%%%%%%%%%%%%%%%
%%%%%%%%%%%%%%%%%%%%
\subsection{Setup and notation}
\label{sec_setup}

Consider a sample of $N$ units indexed by $i = 1, ..., N$.
Each of these units is to be assigned into one of $G$ treatment groups labeled by $g$, with $g = 1, ..., G$.
Write $n_g$ for the pre-specified size of group $g$.  
Denote $Z_i \in \{1, 2, ..., G\}$ as the assigned treatment group label of unit $i$ and $\bm{Z} = (Z_1,...,Z_N)^\top$ as the vector of treatment group labels. 
Following the potential outcomes framework for causal inference \citep{neyman1923application, rubin1974estimating}, each unit $i$ has a potential outcome under each treatment $g$, $Y_i(g)$, but only one of these outcomes is observed: $Y^{\text{obs}}_i = \sum_{g = 1}^{G} \mathbbm{1}(Z_i = g) Y_i(g)$. Denote $\bm{Y}(g) = (Y_1(g),...,Y_N(g))^\top$ as the vector of potential outcomes under treatment $g$.
Each unit has a vector of $K$ observed covariates, $\boldsymbol{X}_{i}$.
We write $(\underline{\bm{X}}_{\text{full}})_{N \times k}$ for the matrix of observed covariates, and $\bar{\bm{X}}_{\text{full}}$ and  $\underline{\bm{S}}_{\text{full}}$ for the mean vector and covariance matrix of these covariates in the full sample, respectively. Denote $(\underline{\tilde{\bm{X}}}_{\text{full}})_{N \times (k+1)}$ as the design matrix in the full sample.\footnote{The design matrix includes a column of all 1's (for the intercept) and $k$ columns of covariates.} We assume that $\underline{\tilde{\bm{X}}}_{\text{full}}$ has full column rank.
In Table \ref{tab_notation} of the Online Supplementary Materials we provide a list of the notation used in this paper.

Based on this notation, $Y_i(g') - Y_i(g'')$ is the causal effect of treatment $g'$ relative to treatment $g''$ for unit $i$.
We are interested in estimating the sample average treatment effect $\text{SATE}_{g',g''} = \frac{1}{N}\sum_{i=1}^{N} \{Y_i(g') - Y_i(g'') \}$ and the population average treatment effect $\text{PATE}_{g',g''} = \mathbb{E} \{Y_i(g') - Y_i(g'')\}$.
For this, we will randomly assign the units into treatment groups using the FSM.

%%%%%%%%%%%%%%%%%%%%
%%%%%%%%%%%%%%%%%%%%
\subsection{Components of the FSM}
\label{sec_components}

% paragraph on SOM with an example
In the FSM, the $G$ treatment groups take turns selecting units in a random but controlled order while optimizing a common criterion. This is accomplished by the two components of the FSM, namely, the \textit{selection order matrix} and the \textit{selection function}.
\begin{enumerate}
 \item Selection order matrix (SOM): An SOM is a matrix 
that determines the order in which the treatment groups select the units. Typically, an SOM has two columns; the first specifies the stages of selection (from $1$ to $N$), and the second specifies the treatment group that
selects first at that stage.
    
 \item Selection function: A selection function is a function that determines which unit gets selected by the choosing treatment group at each stage. Typically, a selection function is based on an optimality criterion that is common to all treatment groups.
\end{enumerate}

 A good SOM guarantees that the selection of units is fair, so that no single treatment group selects all the units of a given type, and random, so that both observed and unobserved covariates are balanced in expectation and there is a basis for inference. A good selection function will produce efficient and robust inferences under a wide class of possible outcome functions.

To illustrate, Table \ref{tab1}(a) presents an example data set with 12 observations and one covariate, age. We consider assigning these 12 units into two groups of equal sizes using the FSM. Table \ref{tab1}(b) shows an example of an SOM in this setting. The SOM determines the order in which each treatment selects a unit at each stage. In the example, treatment group 2 selects first in stage 1, treatment group 1 selects in stage 2, and so on. Treatment groups select units based on the selection function.

\begin{singlespacing}
\begin{table}[H]
\caption{(a) Example data set; (b) selection order matrix and an assignment using the FSM.}
\begin{subtable}{.45\linewidth}
\centering
\caption{\footnotesize Data set}
 \scalebox{0.65}{
\begin{tabular}{cc}
  \toprule
 Index & Age \\ 
  \hline
1 & 24 \\ 
  2 & 30 \\ 
  3 & 34 \\ 
  4 & 36 \\ 
  5 & 40 \\ 
  6 & 41 \\ 
  7 & 45 \\ 
  8 & 46 \\ 
  9 & 50 \\ 
  10 & 54 \\ 
  11 & 56 \\ 
  12 & 60 \\ 
  \hline
Mean & 43 \\
   \bottomrule
\end{tabular}
}
\end{subtable}
   \begin{subtable}{.45\linewidth}
    \centering
   \caption{\footnotesize Selection order matrix and assignment}
   \scalebox{0.65}{
   \begin{tabular}{ccccc}
            \toprule 
    \multicolumn{2}{c}{Selection order matrix} & \multicolumn{3}{c}{Unit selected}\\
   \cmidrule(r){1-2} \cmidrule(r){3-5} 
Stage    & Treatment & Index & Age\\
    \toprule
1 & 2 & 1 & 24 \\ 
  2 & 1 & 12 & 60 \\ 
  3 & 1 & 2 & 30 \\ 
  4 & 2 & 11 & 56 \\ 
  5 & 1 & 3 & 34 \\ 
  6 & 2 & 10 & 54 \\ 
  7 & 1 & 9 & 50 \\ 
  8 & 2 & 4 & 36 \\ 
  9 & 1 & 5 & 40 \\ 
  10 & 2 & 8 & 46 \\ 
  11 & 2 & 6 & 41 \\ 
  12 & 1 & 7 & 45 \\ 
  \bottomrule
  \end{tabular}
  }
\end{subtable}
\label{tab1}
\end{table}
\end{singlespacing}

In general, it is crucial that the order of selection is random, but that no group chooses in a disproportionate manner. 
For two treatment groups of arbitrary sizes, this can be accomplished by means of the Sequentially Controlled Markovian Random Sampling (SCOMARS) algorithm \citep{morris1983sequentially}. 
In the FSM, SCOMARS specifies the probability of a treatment group selecting at stage $r$ ($r \in \{1,2,...,N\}$), conditional on the number of selections made by that group up to stage $r-1$. See the Online Supplementary Materials for a formal description of the algorithm. SCOMARS satisfies the sequentially controlled condition (\citealt{morris1983sequentially}), which requires the deviation of the observed number of selections made by a treatment group up to stage $r$ from its expectation to be strictly less than one. Intuitively, this condition ensures that throughout the selection process, no treatment group departs too much from its expected fair share of choices. Moreover, SCOMARS is Markovian because for each group, the probability of selection at stage $r$ depends solely on the number of selections made up to stage $r-1$.
For two groups of equal sizes (as in the example in Table \ref{tab1}), generating an SOM under SCOMARS boils down to successively generating $N/2$ independent random permutations of the treatment labels $(1, 2)$.
In Section \ref{sec_multi} and in the Online Supplementary Materials, we describe this and other extensions of SCOMARS to multi-group experiments. 
Unless otherwise specified, in the rest of the paper, we will use SCOMARS to generate the SOM for experiments with two treatment groups.

The selection function gives a value to each of the units available for selection at each stage.
This value depends on the characteristics of each available unit in addition to those already assigned to the treatment group that selects next.
In principle, any criterion can be used in the selection function.
For example, if the selection function is constant, then the treatment group selects a unit randomly from the available pool.
Alternatively, the selection function can compute the contribution of each unit to a measure of the accuracy of the estimator. 
In this spirit, we propose the \textit{D-optimal} selection function, which, at each stage, minimizes the generalized variance of the estimated regression coefficients in a linear potential outcome model (see Section \ref{sec_theD} for details). 

To build intuition, in Table \ref{tab1}(b) we discuss the special case of $k=1$ covariate. 
With the D-optimal selection function, the choosing group, in its first choice, selects the unit whose covariate value is farthest from the full-sample mean of the covariate; and in the subsequent choices, selects the unit whose covariate value is farthest from its current mean of the covariate. 
In the example in Table \ref{tab1}, treatment $2$ selects unit $1$ with age $24$, the farthest age from the full-sample mean $43$. 
In the next stage, treatment $1$ selects unit $12$ with age $60$, the farthest age from $43$.\footnote{Notice that for treatment 1's first selection, the mean of age remains 43 (i.e., the full-sample mean of age) and is not recalculated based on the 11 unselected units.} 
Next, treatment $1$ selects unit $2$ with age $30$, the farthest age from its current mean age $60$. 
The process continues until all the 12 units are selected.

In general, with multivariate data, the FSM first selects the units that are farthest from the full-sample mean of the covariates and successively approaches this target, ultimately selecting the units that are closest to it.
In the FSM, the SOM produces balance out of an optimality criterion that is common to all the treatment groups.
This is crucial so that all the choosers know the same, and as they choose, they produce groups that are balanced and equally robust against the unknown outcome model.

Another important feature of the FSM is that, in addition to several treatment groups, it can accommodate a discard group of unassigned units.
This is important, for example, in settings where the number of available units for assignment is greater than the total number of units that can feasibly be assigned (e.g., because of budgetary constraints).
This feature of the FSM was used in the HIE to secure the representativeness of the treatment groups relative to the target populations.

%%%%%%%%%%%%%%%%%%%%
%%%%%%%%%%%%%%%%%%%%
%%%%%%%%%%%%%%%%%%%%

\section{The D-optimal selection function}
\label{sec_theD}

%%%%%%%%%%%%%%%%%%%%
%%%%%%%%%%%%%%%%%%%%

%\subsection{Definition and behavior}
%label{sec_definition}

Here, we formalize the D-optimal selection function and provide an equivalent, closed-form characterization that explains how this criterion governs the selection of units at each stage. 
Without loss of generality, assume that treatment 1 selects at stage $r$, $r \in \{1,2,...,N\}$. 
Let $\tilde{n}_{r-1}$, $\bar{\bm{X}}_{r-1}$, $\underline{\bm{S}}_{r-1}$, and $\underline{\tilde{\bm{X}}}_{r-1}$ be the number, mean vector, covariance matrix, and the design matrix of the units selected after the $(r-1)$th stage by treatment 1, respectively. 

To define the selection function, we consider a linear potential outcome model of $Y_i(1)$ on $\bm{X}_i$, i.e., $Y_i(1) = \bm{\beta}^\top (1, \bm{X}_i^\top)^\top + \eta_i$, where $\eta_i$ is an error term satisfying $\mathbb{E}\{\eta_i|\bm{X}_i\} = 0$.\footnote{More generally, one can consider a linear model of $Y_i(1)$ on a vector of basis functions $\bm{B}(\bm{X}_i)$ of the covariates.} Denote $\mathcal{R}_{r-1}$ as the set of unselected units after stage $r-1$. For unit $i \in \mathcal{R}_{r-1}$, let $\underline{\tilde{\bm{X}}}_{r,i}$ be the resulting design matrix in treatment group 1 if unit $i$ is selected. When $\underline{\tilde{\bm{X}}}^\top_{r-1} \underline{\tilde{\bm{X}}}_{r-1}$ is invertible, the D-optimal selection function selects unit $i' \in \mathcal{R}_{r-1}$, where $i' \in \argmax\limits_{i \in \mathcal{R}_{r-1}}\det(\underline{\tilde{\bm{X}}}^\top_{r,i}\underline{\tilde{\bm{X}}}_{r,i})$. In other words, at the $r$th stage, the D-optimal selection function chooses the unit in $\mathcal{R}_{r-1}$ that optimally decreases the generalized variance of the estimated regression coefficients of the fitted linear model in treatment 1. Ties in the values of the generalized variances are resolved randomly. When $\underline{\tilde{\bm{X}}}^\top_{r-1} \underline{\tilde{\bm{X}}}_{r-1}$ is not invertible, we define the D-optimal selection function by using a form of Ridge augmentation (see Lemma \ref{lemma:dopt} in the Online Supplementary Materials). 
The following theorem provides an equivalent characterization of the D-optimal selection function that elucidates the selection made by the choosing treatment group at each stage. 
\begin{theorem}\normalfont
Assume treatment 1 chooses at stage $r$. 
Then the D-optimal selection function chooses unit $i'$ such that 
%with covariate vector $\bm{X}_{i'} \in \mathbb{R}^k$, where
%\vspace{-.5cm}
\begin{equation*}
    i' \in \argmax\limits_{i \in \mathcal{R}_{r-1}} (\bm{X}_i - \bar{\bm{X}}^*_{r-1})^\top (\underline{\bm{S}}^*_{r-1})^{-1} (\bm{X}_i - \bar{\bm{X}}^*_{r-1}),
\end{equation*} 
\vspace{-1.5cm}
\begin{singlespacing}
where 
{\small
\begin{equation*}
  \bar{\bm{X}}^*_{r-1} =
    \begin{cases}
       \bar{\bm{X}}_{\text{full}} & \text{if $\tilde{n}_{r-1} = 0$}\\
        \frac{\bar{\bm{X}}_{r-1}+\epsilon\bar{\bm{X}}_{\text{full}}}{1+\epsilon} & \text{if $\tilde{n}_{r-1} \geq 1$ and $\underline{\tilde{\bm{X}}}^\top_{r-1} \underline{\tilde{\bm{X}}}_{r-1}$ is not invertible}\\
       \bar{\bm{X}}_{r-1} & \text{if $\tilde{n}_{r-1} \geq 1$ and $\underline{\tilde{\bm{X}}}^\top_{r-1} \underline{\tilde{\bm{X}}}_{r-1}$ is invertible}
    \end{cases} 
\end{equation*}
}
and {\small
\begin{equation*}
    \underline{\bm{S}}^*_{r-1} =
    \begin{cases}
      \underline{\bm{S}}_{\text{full}} & \text{if $\tilde{n}_{r-1} = 0$}\\
        (\frac{1}{\tilde{n}_{r-1}}\underline{\bm{X}}_{r-1}^\top \underline{\bm{X}}_{r-1} + \frac{\epsilon}{N} \underline{\bm{X}}_{\text{full}}^\top \underline{\bm{X}}_{\text{full}}) - (1+\epsilon)\bar{\bm{X}}^*_{r-1}\bar{\bm{X}}^{*\top}_{r-1} & \text{if $\tilde{n}_{r-1} \geq 1$ and $\underline{\tilde{\bm{X}}}^\top_{r-1} \underline{\tilde{\bm{X}}}_{r-1}$ is not invertible}\\
       \underline{\bm{S}}_{r-1} & \text{if $\tilde{n}_{r-1} \geq 1$ and $\underline{\tilde{\bm{X}}}^\top_{r-1} \underline{\tilde{\bm{X}}}_{r-1}$ is invertible.}
    \end{cases}
\end{equation*}
}
\end{singlespacing}

\label{thm:mahal}
\end{theorem}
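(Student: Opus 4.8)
The plan is to turn the determinant maximization that defines the D-optimal selection function into the maximization of a leverage-type quadratic form, and then to rewrite that quadratic form as a Mahalanobis distance via a Schur-complement identity. Throughout, write $\tilde{\bm{x}}_i := (1,\bm{X}_i^\top)^\top$ for the design-matrix row that candidate unit $i \in \mathcal{R}_{r-1}$ would contribute to treatment $1$, so that $\underline{\tilde{\bm{X}}}^\top_{r,i}\underline{\tilde{\bm{X}}}_{r,i} = \underline{\tilde{\bm{X}}}^\top_{r-1}\underline{\tilde{\bm{X}}}_{r-1} + \tilde{\bm{x}}_i\tilde{\bm{x}}_i^\top$.

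First I would reduce to a rank-one update. Let $M_{r-1} := \underline{\tilde{\bm{X}}}^\top_{r-1}\underline{\tilde{\bm{X}}}_{r-1}$ when this matrix is invertible, and otherwise let $M_{r-1}$ be the Ridge-augmented (hence positive definite) cross-product matrix from Lemma \ref{lemma:dopt}; in both cases $M_{r-1}$ does not depend on $i$. The matrix determinant lemma gives $\det(M_{r-1}+\tilde{\bm{x}}_i\tilde{\bm{x}}_i^\top) = \det(M_{r-1})\,\big(1+\tilde{\bm{x}}_i^\top M_{r-1}^{-1}\tilde{\bm{x}}_i\big)$, so $\argmax_{i\in\mathcal{R}_{r-1}}\det(\underline{\tilde{\bm{X}}}^\top_{r,i}\underline{\tilde{\bm{X}}}_{r,i}) = \argmax_{i\in\mathcal{R}_{r-1}}\tilde{\bm{x}}_i^\top M_{r-1}^{-1}\tilde{\bm{x}}_i$. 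Since $t\mapsto\det(M_{r-1})(1+t)$ is strictly increasing, this reduction preserves the entire set of maximizers, so the random tie-break is unaffected.

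Next I would invoke the standard partitioned-inverse (Schur-complement) identity. Writing $M_{r-1} = \left(\begin{smallmatrix} a & \bm{c}^\top \\ \bm{c} & \underline{\bm{W}}\end{smallmatrix}\right)$ conformably with the intercept/covariate split (so $a>0$ is a scalar), block inversion together with $\tilde{\bm{x}}_i = (1,\bm{X}_i^\top)^\top$ yields
\[
\tilde{\bm{x}}_i^\top M_{r-1}^{-1}\tilde{\bm{x}}_i = \frac{1}{a} + \big(\bm{X}_i - a^{-1}\bm{c}\big)^\top\big(\underline{\bm{W}} - a^{-1}\bm{c}\bm{c}^\top\big)^{-1}\big(\bm{X}_i - a^{-1}\bm{c}\big),
\]
with $\underline{\bm{W}}-a^{-1}\bm{c}\bm{c}^\top$ the (positive definite) Schur complement of $a$ in $M_{r-1}$. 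As $1/a$ is free of $i$, maximizing the left side over $\mathcal{R}_{r-1}$ is the same as maximizing this quadratic form.

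Finally I would match the three cases by reading off $a$, $\bm{c}$, $\underline{\bm{W}}$ from $M_{r-1}$ and checking that $a^{-1}\bm{c} = \bar{\bm{X}}^*_{r-1}$ and that $\underline{\bm{W}}-a^{-1}\bm{c}\bm{c}^\top$ is a positive scalar multiple of $\underline{\bm{S}}^*_{r-1}$, so that the argmax of the quadratic form above coincides with the one in the theorem. When $\tilde n_{r-1}\ge 1$ and $\underline{\tilde{\bm{X}}}^\top_{r-1}\underline{\tilde{\bm{X}}}_{r-1}$ is invertible, $M_{r-1}$ has top-left entry $\tilde n_{r-1}$, off-diagonal block $\tilde n_{r-1}\bar{\bm{X}}_{r-1}$, and lower block $\underline{\bm{X}}^\top_{r-1}\underline{\bm{X}}_{r-1}$, giving $a^{-1}\bm{c}=\bar{\bm{X}}_{r-1}$ and Schur complement proportional to $\underline{\bm{S}}_{r-1}$; when $\tilde n_{r-1}\ge 1$ but $\underline{\tilde{\bm{X}}}^\top_{r-1}\underline{\tilde{\bm{X}}}_{r-1}$ is singular, plugging in the form of the Ridge-augmented $M_{r-1}$ from Lemma \ref{lemma:dopt} (with its $\epsilon/N$-weighting of $\underline{\tilde{\bm{X}}}^\top_{\text{full}}\underline{\tilde{\bm{X}}}_{\text{full}}$ and its $1/\tilde n_{r-1}$ normalization of the selected scatter) gives top-left entry $1+\epsilon$ and off-diagonal block $\bar{\bm{X}}_{r-1}+\epsilon\bar{\bm{X}}_{\text{full}}$, so $a^{-1}\bm{c} = (\bar{\bm{X}}_{r-1}+\epsilon\bar{\bm{X}}_{\text{full}})/(1+\epsilon)$ and the Schur complement equals the displayed $\underline{\bm{S}}^*_{r-1}$; and when $\tilde n_{r-1}=0$ one gets $a^{-1}\bm{c}=\bar{\bm{X}}_{\text{full}}$ and Schur complement proportional to $\underline{\bm{S}}_{\text{full}}$. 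In each case the stated Mahalanobis characterization follows. I expect the only real difficulty to be the bookkeeping in this last step — verifying that the Ridge-augmented base matrix stays positive definite under the rank-one update and that its block entries reproduce exactly the piecewise $\bar{\bm{X}}^*_{r-1}$ and $\underline{\bm{S}}^*_{r-1}$ (in particular tracking the $\epsilon$ and the $1/\tilde n_{r-1}$ normalization) — together with the easy but necessary remark that every reduction used is a strictly increasing affine map, so the random tie-breaking rule is preserved throughout.
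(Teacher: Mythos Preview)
Your proposal is correct and follows essentially the same route as the paper: reduce the determinant via the matrix determinant lemma to the leverage $\tilde{\bm{x}}_i^\top M_{r-1}^{-1}\tilde{\bm{x}}_i$, then use the block-inverse/Schur-complement formula for $M_{r-1}$ to rewrite this as a constant plus the stated Mahalanobis distance, case by case. The paper's write-up differs only cosmetically (it derives your Schur identity by splitting $\tilde{\bm{x}}_i$ as $(1,\bar{\bm{X}}^{*\top}_{r-1})^\top + (0,(\bm{X}_i-\bar{\bm{X}}^*_{r-1})^\top)^\top$ and using $(1,\bar{\bm{X}}^{*\top}_{r-1})^\top \propto M_{r-1}\bm{e}_1$); one small caveat is that Lemma~\ref{lemma:dopt} in the paper covers only the invertible case, so the exact form of the Ridge-augmented $M_{r-1}$ you invoke (and the choice $M_{r-1}=\underline{\tilde{\bm X}}^\top_{\text{full}}\underline{\tilde{\bm X}}_{\text{full}}$ when $\tilde n_{r-1}=0$) should be read off from the definition in Section~\ref{sec_theD} and the proof of the theorem rather than from that lemma.
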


Theorem \ref{thm:mahal} shows that at every stage, the D-optimal selection function selects the unit among the remaining pool of available units whose covariate vector maximizes a type of Mahalanobis distance. In its first choice, treatment 1 maximizes the Mahalanobis distance from the covariate distribution in the full sample (in particular, from $\bar{\bm{X}}_{\text{full}}$), thereby choosing the most outlying unit available in the full sample. For the subsequent stages where $\underline{\tilde{\bm{X}}}^\top_{r-1} \underline{\tilde{\bm{X}}}_{r-1}$ is not invertible, treatment 1 maximizes the Mahalanobis distance from a mixture covariate distribution between treatment group 1 and the full sample, where $\epsilon$ determines the mixing rate. Finally, the latter selections by treatment 1 maximize the Mahalanobis distance from the covariate distribution in treatment group 1. Therefore, with every selection, treatment 1 maximizes the overall separation of the covariates from its current mean, which increases the efficiency of the estimated regression coefficients. 

%%%%%%%%%%%%%%%%%%%%%%%%%%%%%%%%
%%%%%%%%%%%%%%%%%%%%%%%%%%%%%%%%
%\subsection{Properties}
%\label{sec_properties}

By definition, the D-optimal selection function improves the accuracy of the fitted linear model in each treatment group by sequentially minimizing the generalized variance of the estimated regression coefficients. With the D-optimal selection function, we can also establish several additional desirable properties of the FSM. In particular, leveraging the connection between D-optimality and Mahalanobis distance, we can show that FSM with the D-optimal selection function is affine invariant, i.e., the selections of units by the treatment groups remain unchanged even if the covariates are transformed linearly. See Section \ref{sec_properties_dopt} in the Online Supplementary Materials for a proof. An implication of this property is that the FSM is invariant with respect to changes in the location and scale of the covariates.

The FSM with the D-optimal selection function is appealing also because it can encompass several classical designs, such as randomized blocked and matched-pair designs. Theorem \ref{thm:retrieve} formalizes this result.
In the traditional randomized block design (RBD), the units are grouped into blocks of size $G$ according to a categorical, blocking variable, and each treatment is randomly applied to exactly one unit within each block (see, e.g., \citealt{cox2000theory}, Section 3.4). Here we consider a more general version of an RBD where the blocks are of size $c \times G$ (where $c$ is a fixed positive integer) and each treatment is applied to $c$ units within each block. This is a special case of a stratified randomized experiment with strata of equal size and equal allocation among treatments per stratum. In a matched-pair design with $G=2$ treatments, similar units are grouped into pairs, and each treatment is randomly applied to one unit within each pair. This is also a special case of a stratified randomized experiment with equal allocation per strata, where the size of each stratum equals two. 

\begin{theorem}\normalfont
\begin{enumerate}[label=(\alph*)]

\item Consider $N = cBG$ units belonging to $B$ blocks of equal size that are to be randomly assigned into $G$ treatment groups of equal size, where $c$ is a fixed positive integer.
Then, if the linear model in the FSM consists of an intercept and indicators of any $B-1$ levels of the blocking variable, the FSM with the D-optimal selection function produces the same assignment as an RBD.
\item Consider $N/2$ identical pairs of units in terms of baseline covariates $\bm{X}_i$ that are to be assigned into $G = 2$ treatment groups of equal size.
Assume $\bm{X}_i$ is drawn from a continuous distribution.
Then, if the linear model in the FSM consists of the intercept and the covariates $\bm{X}_i$, then the FSM almost surely produces the same assignment mechanism as a matched-pair design.
\end{enumerate}
\label{thm:retrieve}
\end{theorem}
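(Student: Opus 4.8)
The plan is to deduce both parts from the Mahalanobis characterization of Theorem \ref{thm:mahal}: in each special case the covariates have enough structure that the ``$\argmax$'' in Theorem \ref{thm:mahal} collapses to a transparent combinatorial rule, and it then remains to check that the assignment distribution this rule induces (over the randomness in the SOM and in the tie-breaking) coincides with that of the target classical design.

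For part (a), I would first compute the relevant determinant for block-indicator covariates. If, when treatment $g$ is about to choose, it holds $m_b$ units from block $b$ (so $\sum_{b=1}^B m_b = \tilde n_{r-1}$), the Gram matrix of the design ``intercept $+$ $B-1$ block indicators'' is the arrowhead matrix with diagonal $(\tilde n_{r-1},m_1,\dots,m_{B-1})$, off-diagonal first row/column $(m_1,\dots,m_{B-1})$, and otherwise zero; a Schur-complement computation gives $\det(\underline{\tilde{\bm{X}}}^\top_{r-1}\underline{\tilde{\bm{X}}}_{r-1})=\prod_{b=1}^B m_b$. Adding a unit from block $b$ multiplies the $b$-th factor by $(m_b+1)/m_b$, and since the Ridge correction in Theorem \ref{thm:mahal} is built from the full sample, whose block counts are all equal to $cG$, it only shifts every factor by the same amount; hence the augmented objective is maximized for the same choice. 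So by Theorem \ref{thm:mahal} the D-optimal rule, at every stage, picks an available unit from a block in which the choosing treatment currently has the fewest units, uniformly among tied candidates (and uniformly among all units when the group is empty, since equal block sizes make every unit equidistant from $\bar{\bm{X}}_{\text{full}}$ by the block-relabeling symmetry). The second step is a feasibility induction over the $cB$ rounds of the SCOMARS SOM (each treatment choosing exactly once per round), with invariant: after $t$ rounds every treatment's block counts lie in $\{\lfloor t/B\rfloor,\lceil t/B\rceil\}$. The greedy rule never pushes a count above $\lceil t/B\rceil$, and a counting argument shows a treatment always has an available minimum-count block to choose from --- if every such block were exhausted, then summing the at-most-$\lceil(t{+}1)/B\rceil$ counts of the other $G-1$ treatments in that block would force $c\le\lfloor t/B\rfloor$, which fails before the last round. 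After $cB$ rounds every treatment holds exactly $c$ units in every block, so the FSM always outputs a valid RBD assignment. Finally, for uniformity: the law of the FSM is invariant under permuting units within a block (the selection function sees only covariate values, the SOM only treatment labels, ties are broken uniformly), so the probability of a given valid RBD assignment depends on it only through the block-by-treatment count matrix, which is the matrix with all entries $c$; hence the FSM is uniform over valid RBD assignments, exactly as an RBD.

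For part (b) I would run an induction over the $N/2$ rounds of the two-group SCOMARS SOM, with invariant: at the start of each round the two treatment groups have identical current covariate mean and covariance, and every as-yet-untouched unit still sits in an intact identical pair. This holds initially (both groups empty). To propagate it, note that by the invariant the two groups evaluate the \emph{same} selection criterion of Theorem \ref{thm:mahal} --- whether or not $\underline{\tilde{\bm{X}}}^\top_{r-1}\underline{\tilde{\bm{X}}}_{r-1}$ is yet invertible, since in the singular regime both use the same mixture $\bar{\bm{X}}^*_{r-1},\underline{\bm{S}}^*_{r-1}$. Because $\bm{X}_i$ is drawn from a continuous distribution, almost surely the maximizer of that criterion over the remaining units is attained at exactly one intact pair; the treatment choosing first in the round takes one member of that pair (the within-pair tie broken by a fair coin), after which that pair's partner is the unique maximizer of the same criterion over the same set, so the treatment choosing second is forced to take it. Thus each round consumes one identical pair, one member to each treatment, and the invariant is restored. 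Consequently the FSM splits every pair one-to-each-treatment and, across pairs, makes independent fair coin flips, i.e.\ it reproduces the matched-pair assignment mechanism almost surely.

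The main obstacle is the feasibility induction in part (a): one must make sure the myopic ``fill the emptiest available block'' rule produced by the determinant computation never strands a treatment with only full-or-exhausted blocks available, which is precisely what the counting argument together with the round structure of the SOM is designed to rule out, and one must confirm that the Ridge-augmented criterion used in the non-invertible early stages still implements this same rule for block-indicator covariates. Part (b) is comparatively routine once the ``identical criterion for both groups'' invariant is in place; there the only delicate point is the almost-sure uniqueness of the criterion's maximizing pair, which is exactly where the continuity of $\bm{X}_i$ is used.
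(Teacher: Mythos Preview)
Your proposal is correct. For part (b) your invariant-based induction is essentially the paper's own argument: the paper writes out the criterion explicitly at each successive pair of stages and observes that both groups face the same maximization because they have selected units with identical covariate values so far; you package this as ``same mean and covariance'' and iterate, which is the same reasoning.

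For part (a) your route genuinely differs from the paper's. The paper never computes a determinant: it applies a linear change of coordinates that sends each unit's design row to the standard basis vector indexed by its block, so the full-sample Gram matrix becomes $G\,\underline{\bm I}_B$ and each group's Gram matrix becomes diagonal in its block counts; a Woodbury step then makes the criterion transparent, and the paper proceeds chunk by chunk, showing directly that at each chunk a treatment selects uniformly from a block it has drawn from least often. In the paper's treatment, the characterization of the selection rule, feasibility, and uniformity all fall out of one inductive pass over chunks. Your approach instead reads off ``pick from the emptiest available block'' from the arrowhead determinant $\prod_b m_b$, then runs a separate feasibility induction on the counts, then deduces uniformity from the within-block permutation symmetry of the FSM law. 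This decomposition is arguably cleaner conceptually and avoids the Woodbury identity, at the cost of a slightly more delicate counting argument. The one place you should tighten is the Ridge step: ``shifts every factor by the same amount'' is correct but terse --- by the matrix determinant lemma the Ridge-augmented quadratic form criterion is equivalent to maximizing $\det(A+vv^\top)/\det(A)$ with $v=(1,\bm X_i^\top)^\top$, and the same Schur-complement computation gives $\det A=\prod_b(\tfrac{m_b}{\tilde n}+\tfrac{\epsilon cG}{N})$, whence selecting from block $b$ gives ratio $(\mu_b+1)/\mu_b$ with $\mu_b=\tfrac{m_b}{\tilde n}+\tfrac{\epsilon cG}{N}$, confirming that the shift is indeed equal across blocks and the rule is preserved.
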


In the first case, Theorem \ref{thm:retrieve}(a) states that, by including the levels of a blocking variable as regressors, the FSM with the D-optimal selection function automatically blocks on that variable. Thus, the FSM retrieves an RBD without explicitly performing separate randomizations within each block. 
In the second case, Theorem \ref{thm:retrieve}(b) states that, by including the covariates as regressors, the FSM with the D-optimal selection function produces the same assignment as a matched-pair experiment, without explicitly performing separate randomizations in each pair. This phenomenon is particularly useful when the sample consists of near-identical twins but that are difficult to identify a priori due to multiple covariates.

%%%%%%%%%%%%%%%%%%%%
%%%%%%%%%%%%%%%%%%%%
%%%%%%%%%%%%%%%%%%%%

\section{Inference under the FSM}
\label{sec_inference}

Using the FSM we can make model- and randomization-based inferences. Both modes of inference are feasible for any selection function and any randomized SOM. 
In model-based inference, the sample is typically assumed to be drawn randomly from some superpopulation, and inference for the PATE is done by modeling the observed outcome distribution conditional on the treatment indicators and the covariates. 
For instance, let the potential outcome model under treatment $g$ be $Y_i(g) = \bm{\beta}^\top_{g}\bm{B}(\bm{X}_i) + \epsilon_{ig}$, where  $\bm{B}(\bm{X}_i) = ( B_1(\bm{X}_i),...,B_b(\bm{X}_i) )^\top$ is a vector of $b$ basis functions of the covariates, and $\epsilon_{ig}$, $i \in \{1,2,...,N\}$ are mutually independent errors, independent of the covariates. 
Under this model, $\text{PATE}_{g',g''}$ can be unbiasedly estimated by $\widehat{\text{PATE}}_{g',g''} = \hat{\bm{\beta}}^\top_{g'}\overline{\bm{B}(\bm{X})} - \hat{\bm{\beta}}^\top_{g''}\overline{\bm{B}(\bm{X})}$, where $\overline{\bm{B}(\bm{X})} = \frac{1}{N}\sum_{i=1}^{N}\bm{B}(\bm{X}_i)$ and $\hat{\bm{\beta}}_g$ is the OLS estimator of $\bm{\beta}_g$  obtained by fitting a linear regression of $Y^{\text{obs}}_i$ on $\bm{B}(\bm{X}_i)$ in treatment group $g = g', g''$. 
We call this the regression imputation estimator of $\text{PATE}_{g',g''}$. 
The standard error of this estimator and the corresponding confidence interval for $\text{PATE}_{g',g''}$ can be obtained using standard OLS theory. 
We note that, in model-based inference, the standard errors and confidence intervals do not take into account the randomness stemming from the assignment mechanism. 

%Moreover, often the regression models proposed at the design stage are considered misspecified and are later modified at the analysis stage by, e.g., incorporating covariates (or transformations thereof) that are deemed important predictors for the outcome. 
%\textcolor{red}{Due to the balancing properties of the FSM, the regression imputation estimators tend to exhibit sufficient precision even when the model posited by the FSM is misspecified (see sections \ref{sec_definition}, \ref{sec_simulation}, and \ref{sec_thehie}). }

In randomization-based inference, the potential outcomes and the covariates are typically considered fixed and the assignment mechanism is the only source of randomness (see Chapter 2 of \citealt{rosenbaum2002observational} and chapters 5--7 of \citealt{imbens2015causal} for overviews). 
Inference for causal effects can be done via exact randomization tests for sharp null hypotheses on unit-level causal effects (\citealt{fisher1935design}), or via estimation under Neyman's repeated sampling approach (\citealt{neyman1923application}). 
Under the FSM, randomization tests for sharp null hypotheses can be performed by approximating the distribution of the test statistic through repeated realizations of the FSM. 
To illustrate, consider testing the sharp null hypothesis of zero unit-level causal effects, i.e., $H_0: Y_i(2) - Y_i(1) = 0$ for all $i$, at level $\alpha$ using the FSM. While any choice of test statistic preserves the validity of the test, a common choice is the absolute difference-in-means statistic $|\frac{1}{n_2}\sum_{i:Z_i = 2}Y^{\text{obs}}_i - \frac{1}{n_1}\sum_{i:Z_i = 1}Y^{\text{obs}}_i| = |\frac{1}{n_2}\sum_{i:Z_i = 2}Y_i(2) - \frac{1}{n_1}\sum_{i:Z_i = 1}Y_i(1)| = : T\{\bm{Z},\bm{Y}(1),\bm{Y}(2)\}$.
Large values of $T\{\bm{Z},\bm{Y}(1),\bm{Y}(2)\}$ are considered evidence against $H_0$. 
Under $H_0$, $Y_i(2) = Y_i(1) = Y^{\text{obs}}_i$ and the vectors of potential outcomes $\bm{Y}(1)$ and $\bm{Y}(2)$ are known and fixed. 
The $p$-value of the test is given by  $p = P_{H_0}[T\{\bm{Z},\bm{Y}(1),\bm{Y}(2)\}\geq t_{\text{obs}}]$, where $t_{\text{obs}}$ is the value of the test statistic for the observed realization of $\bm{Z}$ under the FSM. 
We can compute this $p$-value by Monte Carlo approximation, i.e., we generate independent vectors of assignments $\bm{Z}^{(m)} = (Z^{(m)}_1,...,Z^{(m)}_N)^\top$, $m \in \{1,2,...,M\}$ using the FSM and approximate the $p$-value as $\hat{p} = \frac{1}{M}\sum_{m = 1}^{M}\mathbbm{1}\big[T\{\bm{Z}^{(m)},\bm{Y}(1),\bm{Y}(2)\}\geq t_{\text{obs}}\big]$. We reject $H_0$ at level $\alpha$ if $\hat{p}\leq \alpha$. 

Similar tests can be applied for more general sharp hypotheses of treatment effects (e.g., dilated and tobit effects; \citealt{rosenbaum2002observational}, \citeyear{rosenbaum2010design2}). 
We can invert these tests to obtain a confidence interval for the hypothesized effect (\citealt{rosenbaum2002observational}, Section 2.6.1). Moreover, we can get a point estimate of the effect by solving a Hodges-Lehmann estimating equation corresponding to these tests (\citealt{rosenbaum2002observational}, Section 2.7.2).
Finally, under Neyman's approach, we can estimate the sample average treatment effect $\text{SATE}_{g',g''}$ by the difference-in-means statistic. 
In particular, for groups of equal size, this difference-in-means statistic is unbiased for $\text{SATE}_{g',g''}$ under the FSM (see Proposition \ref{fsm_prop:unbiased} for a proof).

%%%%%%%%%%%%%%%%%%%%%%%%%%%%
%%%%%%%%%%%%%%%%%%%%%%%%%%%%
%%%%%%%%%%%%%%%%%%%%%%%%%%%%

\section{The Health Insurance Experiment}
\label{sec_thehie}

\subsection{Data}
\label{sec_hiedata}

We evaluate the performance of the FSM relative to other common treatment assignment approaches using the baseline data of the HIE. To this end, we consider a version of the HIE data presented in \cite{aron2013rand}. This dataset comprises the six cost-sharing plans described in Section \ref{sec_hiedesign}. To make the group sizes more homogeneous, we combine the groups with $25\%$, $50\%$, and mixed coinsurance plans. Thus, in our analysis, we have $G = 4$ treatment groups corresponding to $g = 1$, ``free care'' ($n_1 = 564$); $g = 2$, ``$25\%, 50\%$, or mixed  coinsurance'' ($n_2 = 456$); $g = 3$, ``$95\%$ coinsurance'' ($n_3 = 372$); and $g = 4$, ``individual deductible'' ($n_4 = 495$). In total, there are $N = n_1 + ... + n_4 = 1,887$ families. We assign all $N$ families to the four treatment groups (i.e., without a discard group of non-participants).   
In this version of the HIE data, we pool the data across five of the six sites, and we randomly assign all the families to the four treatment groups. Due to loss of data, the Dayton site is excluded from this analysis. % no discard group

We consider $k = 20$ family-level baseline covariates, where $X_1, ..., X_5$ are scaled non-binary covariates, $X_6, ..., X_{14}$ are binary covariates, and $X_{15}, ..., X_{20}$ are binary covariates indicating missing data (see Table \ref{tab:hie1} for a description of each baseline covariate).
Using this data, we compare complete randomization (CRD), rerandomization (RR), and the FSM in terms of balance and efficiency. For the FSM, we generate the SOM by first using SCOMARS on the combined groups $\{1,2\}$ and $\{3,4\}$, and then using SCOMARS again to split each combined group into its component groups. For the FSM, we also use the D-optimal selection function based on a linear potential outcome model on the main covariates. The assignments under the FSM are generated using the open source R package \texttt{FSM} available on CRAN. 
For rerandomization, we consider two balance criteria, one based on the Wilks' lambda statistic (RR Wilks; \citealt{lock2011rerandomization}, Section 5.2) and the other based on the maximum pairwise Mahalanobis distance between any two treatment groups (RR Mahalanobis; \citealt{morgan2012rerandomization}). The balance criteria for both RR Wilks and RR Mahalanobis are based on all the main covariates and the squares and pairwise products of the scaled (non-binary) covariates. Finally, for both rerandomization methods, we use an acceptance rate of 0.001 \citep{lock2011rerandomization}.  
We draw 400 independent assignments for each approach. The results under RR Wilks and RR Mahalanobis are roughly the same (see Section \ref{appsec_hie} in the Online Supplementary Materials), and hence, for conciseness, here we only discuss the results for RR Mahalanobis. 
The runtime of each of these assignments was approximately 78 seconds with RR Mahalanobis and 28 seconds with the FSM on a Windows 64-bit laptop computer with an Intel(R) Core i7 processor. 

%%%%%%%%%%%%%%%%%%%%%%%%%%%%
%%%%%%%%%%%%%%%%%%%%%%%%%%%%
\subsection{Balance}

Figures \ref{fig:test1}(a) and \ref{fig:test1}(b) display the distributions of ASMD across randomizations for the main covariates and their second-order transformations (squares and pairwise products). 
RR balances the main covariates and the second-order terms better than CRD. However, in both cases, the FSM improves considerably over CRD and RR. In fact, with the FSM, the average imbalance is less than half (0.02) of those under CRD and RR.
Also, with both CRD and RR, it is common to see imbalances greater than 0.1 ASMD, whereas such extreme imbalances are non-existent with the FSM. 
\begin{figure}[!ht]
%\centering
\begin{subfigure}{.33\textwidth}
  %\centering
  \includegraphics[scale =0.33]{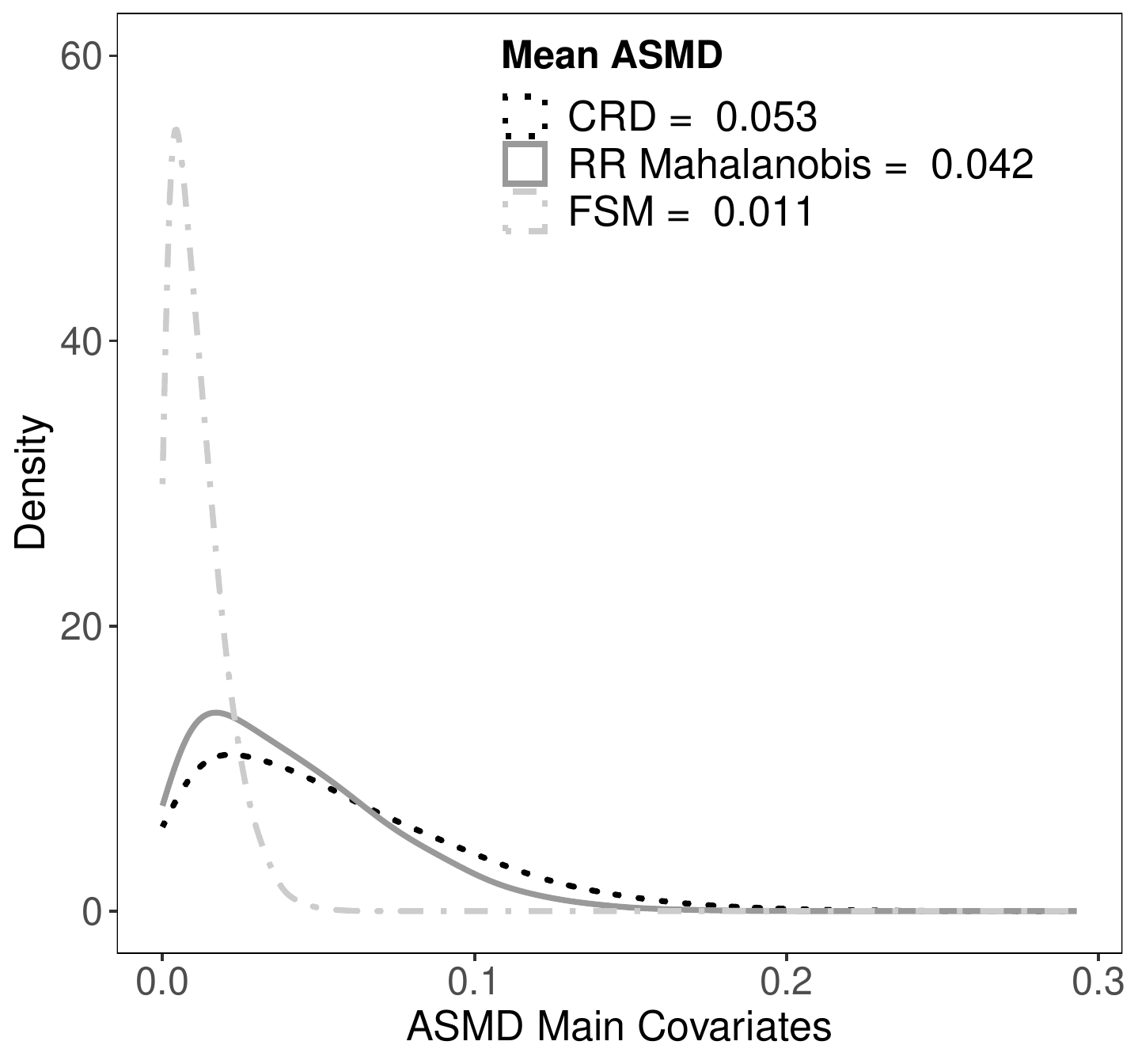}
  \caption{\footnotesize Main covariates}
  %\label{fig:simu_asmd_org}
\end{subfigure}%
\begin{subfigure}{.33\textwidth}
  %\centering
  \includegraphics[scale =0.33]{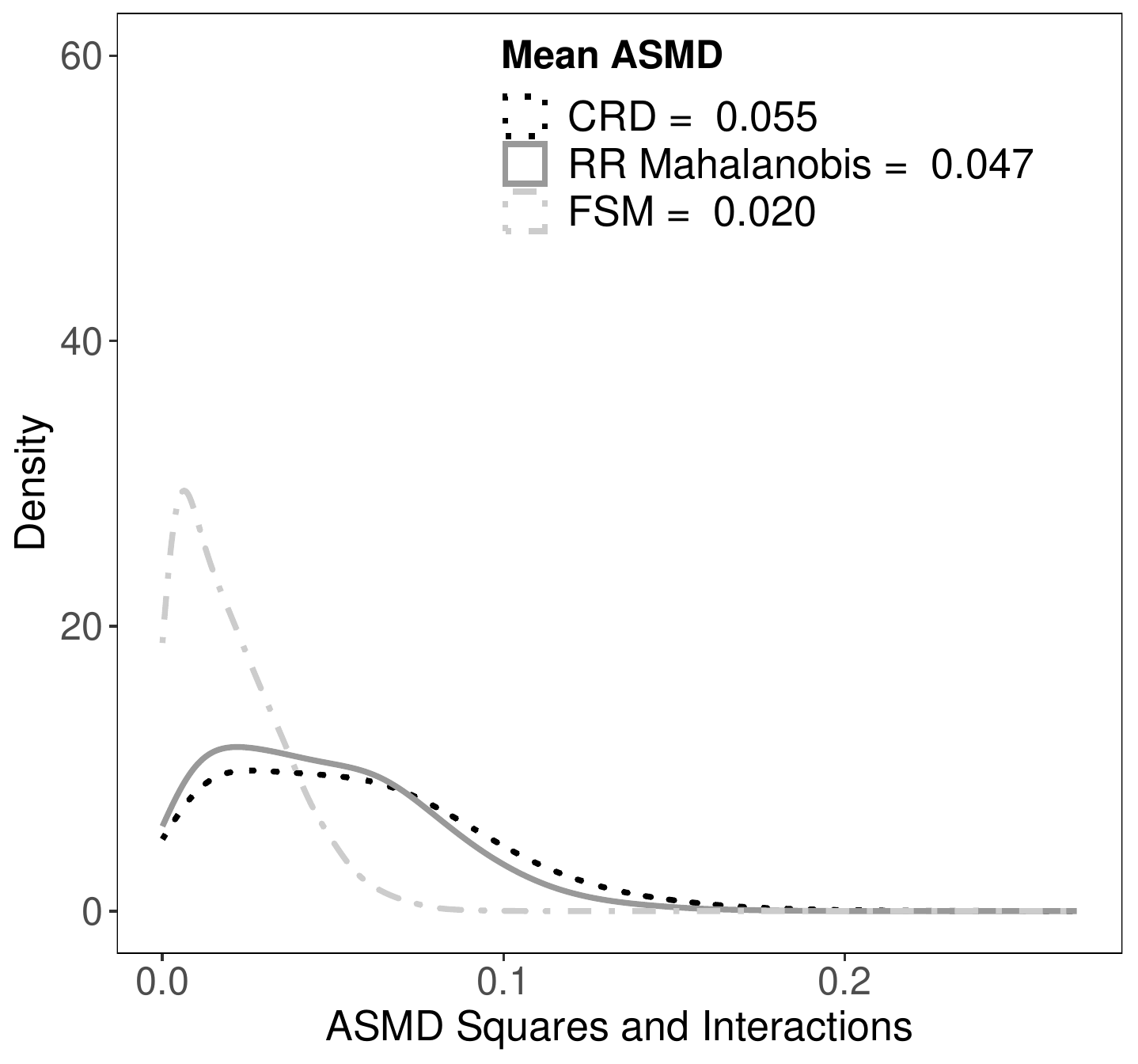}
  \caption{\footnotesize Second order terms}
  %\label{fig:simu_asmd_sqint}
\end{subfigure}
\begin{subfigure}{.33\textwidth}
  %\centering
  \includegraphics[scale =0.33]{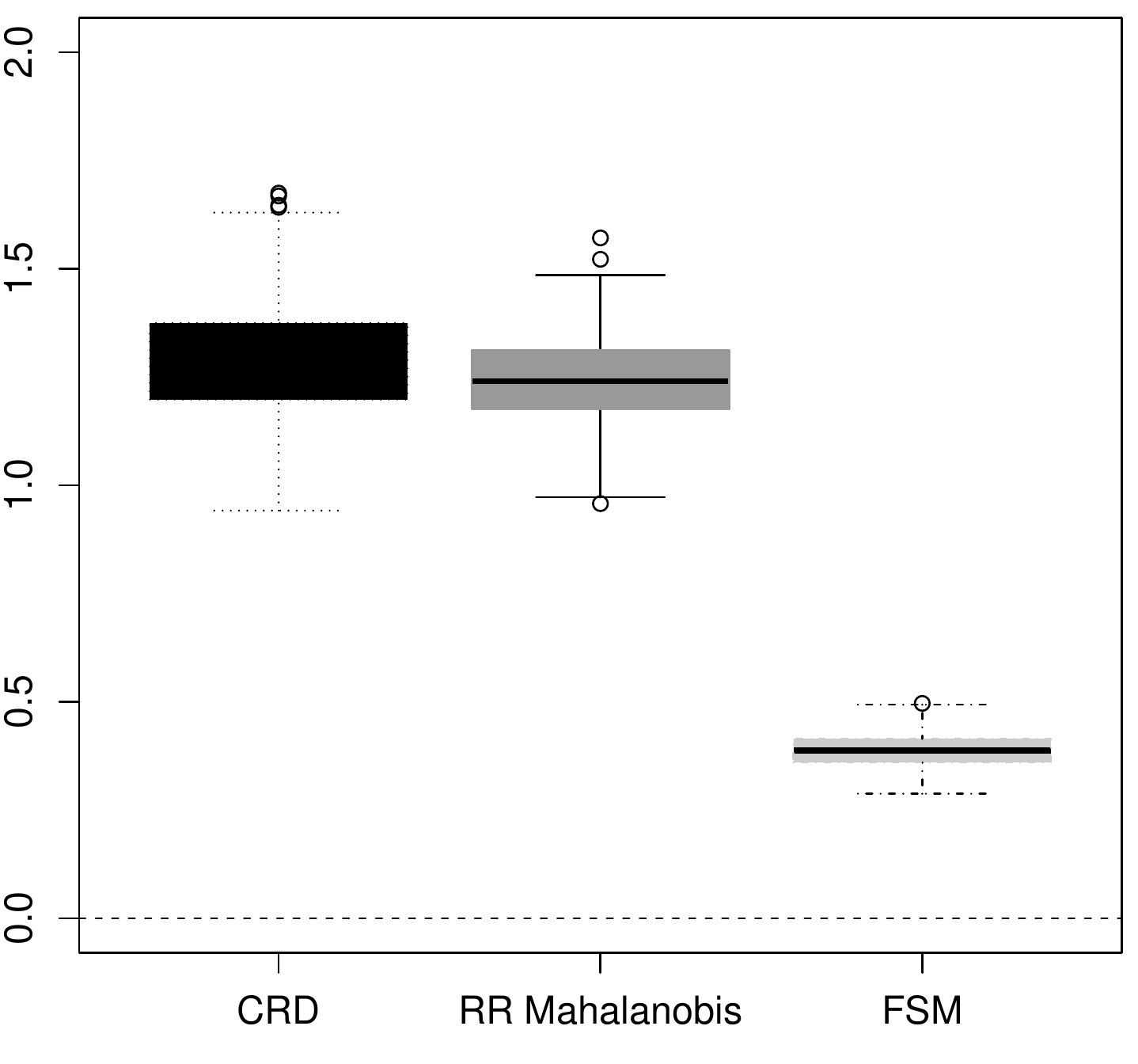}
  \caption{\footnotesize Frobenius norm}
  %\label{fig:simu_frob_cor}
\end{subfigure}
\caption{Distributions of absolute standardized mean differences (ASMD) of the main covariates (panel (a)) and their squares and pairwise products (panel (b)) across randomizations. For each plot, the legend presents the average ASMD across simulations for each method. Panel (c) shows the distributions of discrepancies between the correlation matrices of the covariates in treatment groups 1 and 2, as measured by the Frobenius norm, $||\underline{\bm{R}}_1 - \underline{\bm{R}}_2||_F$.
In terms of the main covariates, second-order transformations, and correlation matrices, the FSM substantially outperforms CRD and RR.}
\label{fig:test1}
\end{figure}

A related question is how well the methods balance all second-order features of the joint distribution of the covariates.
Figures \ref{fig:test1}(c) and \ref{figfrob} provide an answer to this question in the boxplots of the discrepancies between correlation matrices across randomizations. As a measure of discrepancy, we consider the Frobenius norm of the difference between correlation matrices in two groups, i.e., $||\underline{\bm{R}}_{g} - \underline{\bm{R}}_{g'} ||_F$, where $\underline{\bm{R}}_g$ is the sample correlation matrix in group $g$ and $||\cdot||_F$ is the Frobenius norm.\footnote{The Frobenius norm of a matrix is the square root of the sum of squares of all its elements.} Smaller values of $||\underline{\bm{R}}_g - \underline{\bm{R}}_{g'}||_F$ indicate better balance on the correlation matrix of the covariates between the groups $g$ and $g'$. 
As in the aforementioned second-order transformations, we see a similar performance between complete randomization and rerandomization, which is considerably improved by the FSM with a median about three times smaller.

%\textcolor{red}{Arguably, one could improve the performance of rerandomization}; for example, by restricting imbalances on these transformations using the rerandomization criterion. With $k$ continuous covariates, a Mahalanobis distance needs to include $\frac{k(k+3)}{2}$ variables to control the imbalances on the means of all the covariates and their squares and pairwise products. However, with large $k$, calculating the Mahalanobis distance becomes computationally expensive and, in the extreme case (when $\frac{k(k+3)}{2}>N$), infeasible. The FSM, by contrast, only requires $k$ main covariates in the construction of its D-optimality criterion (see Theorem \ref{thm:mahal}) to produce adequate balance on these transformations. Moreover, when feasible, these transformations can also be included in the FSM model, which would then also improve balance on higher order transformations of them.

%%%%%%%%%%%%%%%%%%%%%%%%%%%%
%%%%%%%%%%%%%%%%%%%%%%%%%%%%
\subsection{Efficiency}
\label{sec_hie_efficiency}

In this section, we evaluate the estimation accuracy of the methods under model- and randomization-based approaches to inference.
The main differences between the model- and randomization-based standard errors is that in the model-based approach, the variance calculation does not explicitly take into account the variability arising through the randomization distribution, whereas in the randomization-based approach it does. For illustration, here we consider estimating the average treatment effect of treatment 3 relative to treatment 2, i.e., $\text{SATE}_{3,2}$ and $\text{PATE}_{3,2}$. The results for the average treatment effects with other pairs of treatment groups are similar.

Under the model-based approach, we consider two potential outcome models, one that is linear on the main covariates (Model A1), and another that is linear on the main covariates and the second-order transformations of the scaled covariates (Model A2).
The results are summarized in Table 4.
While the performance of the three methods is similar under Model A1, under Model A2 there are substantial differences, with the FSM outperforming both complete randomization and rerandomization. 
In fact, under Model A2, there is a 14-15\% reduction in the average standard error, and a 53-64\% reduction in the maximum standard error, with the FSM.

% FSM avg 0.069 and 0.07 and  under linear model, mean 0.071 and 0.071 under sqint model

\begin{singlespacing}
\begin{table}[H]
   \caption{Average and maximum model-based standard errors relative to the FSM across randomizations. 
  % The average (respectively, maximum) standard errors under Model A1 and A2 are 0.069 (0.07) and 0.07 (0.071), respectively.
   Under Model A1 (linear model on the covariates), the FSM is slightly more efficient than RR and CRD. Under Model A2 (linear model on the covariates and their second-order transformations), the FSM is considerably more efficient than CRD and RR.}
   \begin{subtable}{.5\linewidth}
   \centering
   \caption{\footnotesize Model A1}
   \scalebox{0.72}{
   %  \centering
            \begin{tabular}{p{2.5cm}ccc}
    \toprule 
    \multirow{2}{5cm}{} & \multicolumn{3}{c}{Designs}\\
   \cline{2-4}
    & CRD & RR Mahalanobis & FSM\\
    \toprule
Average SE & 1.02 & 1.01 & 1.00 \\
Maximum SE & 1.04 & 1.02 & 1.00 \\
\bottomrule
  \end{tabular}
}
    \end{subtable}%
    \begin{subtable}{.5\linewidth}
    \centering
   \caption{\footnotesize Model A2}
   \scalebox{0.72}{
       %\centering
    %   \caption{\footnotesize (b) RMSE }
        \begin{tabular}{p{2.5cm}ccc}
    \toprule 
    \multirow{2}{5cm}{} & \multicolumn{3}{c}{Designs}\\
   \cline{2-4}
    & CRD & RR Mahalanobis & FSM\\
    \toprule
Average SE & 1.15 & 1.14 & 1.00 \\
Maximum SE & 1.64 & 1.53 & 1.00 \\
\bottomrule 
  \end{tabular}
  }
  \end{subtable}
  \label{tab:hie_var_model}
    \end{table}
\end{singlespacing}

Under the randomization-based approach, we consider the generative models $Y(3) = 10 + 2X_1 + 3X_2 + 0.5X_3 + 0.3X_4 + \eta$ (Model B1) and $Y(3) = 10 + 2X_1 + 2X_2X_3 - X_4X_5 + \eta$ (Model B2) where $Y(3) = Y(2)$ and $\eta \sim \mathcal{N}(0,1.5^2)$. Here, both the generative models satisfy the sharp-null hypothesis of zero treatment effect for every unit and hence, $\text{SATE}_{3,2} = 0$. Under each design, $\text{SATE}_{3,2}$ is estimated using the standard difference-in-means estimator and the corresponding randomization-based SE is obtained by generating 400 randomizations and computing the standard deviation of the estimator across these 400 randomizations.  
The results are summarized in Table \ref{tab:hie_var_rand}. See Appendix \ref{appsec_hie} for similar comparisons under a set of different generative models of the potential outcome.
In terms of efficiency, we see again a clear advantage of the FSM.
Under both Model B1 and Model B2, the average standard errors of complete randomization and rerandomization are more than twice of those under the FSM.

\begin{singlespacing}
\begin{table}[H]
   \caption{Randomization-based standard errors relative to the FSM. The standard error for the FSM is 0.11 under Model B1 (linear model on the covariates) and 0.64 under Model B2 (linear model on the covariates and their second-order transformations). Under both models, the FSM is considerably more efficient than both CRD and RR.}
   \begin{subtable}{.5\linewidth}
   \centering
   \caption{\footnotesize Model B1}
   \scalebox{0.75}{
   %  \centering
            \begin{tabular}{p{1.5cm}ccc}
    \toprule 
    \multirow{2}{5cm}{} & \multicolumn{3}{c}{Designs}\\
   \cline{2-4}
    & CRD & RR Mahalanobis & FSM\\
    \toprule
SE & 2.47 & 2.08 & 1 \\
\bottomrule
  \end{tabular}
}
    \end{subtable}%
    \begin{subtable}{.5\linewidth}
    \centering
   \caption{\footnotesize Model B2}
   \scalebox{0.75}{
       %\centering
    %   \caption{\footnotesize (b) RMSE }
        \begin{tabular}{p{1.5cm}ccc}
    \toprule 
    \multirow{2}{5cm}{} & \multicolumn{3}{c}{Designs}\\
   \cline{2-4}
    & CRD & RR Mahalanobis & FSM\\
    \toprule
SE & 2.63 & 2.25 & 1 \\
\bottomrule 
  \end{tabular}
  }
  \end{subtable}
\label{tab:hie_var_rand}
    \end{table}
\end{singlespacing}

%%%%%%%%%%%%%%%%%%%%%%%%%%%%
%%%%%%%%%%%%%%%%%%%%%%%%%%%%
\subsection{Intuition and further explorations}
\label{sec_intuition}

Our analysis illustrates some important differences between the FSM, CRD, and RR.
With respect to RR, these differences pertain to the specification, role, and implementation of the assignment criterion.
First, regarding the specification of the criterion, while RR uses the Mahalanobis distance, the FSM uses the D-optimality criterion, which, coupled with a suitable SOM, leads to robust assignments under a more general class of potential outcome models.

Second, regarding the role of this criterion, while RR essentially constrains the allowable treatment assignments, the FSM seeks to optimize them toward the criterion.
In essence, while RR solves a feasibility problem by resampling, the FSM aims to solve a maximization problem by step-wise assignment. 
Furthermore, the feasibility problem solved by RR depends on the balance threshold, which can be difficult to select in practice.
While a very high threshold can accept assignments with poor covariate balance, a very low one can be computationally onerous.

Third, regarding the implementation of the criterion, while RR assigns all units in one step and then discards imbalanced assignments, the FSM assigns units in multiple steps (one at a time) in a random but optimal fashion determined by the selection order and the selection criterion. 
This difference is crucial because in experiments like the HIE with several treatment groups and many covariates, the space of possible treatment assignments is vast. 
As shown in our analyses, optimally selecting among these assignments in a step-wise manner can make a substantial improvement in terms of balance, efficiency, computational time, and, ultimately, in the use of scarce resources available for experimentation.
\footnote{Figures \ref{fig:simu_boxplot} and \ref{fig:test1} show that, although RR does well under common balance standards (the mean differences are systematically lower than the typical threshold of 0.1 ASMD), there is room to select better (more balanced) random treatment assignments, which is achieved by the FSM.}

%%%%%%%%%%%%%%%%%%%%%%%%%%%%
%%%%%%%%%%%%%%%%%%%%%%%%%%%%
%\subsection{Driving the point home}

To better see this, we asked how we would need to modify RR to achieve comparable performance to the FSM? 
Using the HIE data, we approximated the randomization distribution of the imbalance criterion of RR (i.e., the maximum Mahalanobis distance $M$ across all pairs of treatment groups) by generating random assignments for 100 hours.
See Table \ref{tab:hie_mahaldist} for a summary of the results. 
The table displays summary statistics of the distribution of $M$ under CRD, RR, and the FSM.
%Under CRD, we generated 13 million assignments; and from these assignments, for RR we used 0.001 of them with lowest imbalances.
%For the FSM we used the 400 assignments from \ref{sec_hiedata}.
As shown in Table \ref{tab:hie_mahaldist}, the highest (worst-case) value of $M$ under the FSM is smaller than the smallest (best-case) value of $M$ under CRD and RR. 
Importantly, even if we set the RR acceptance rate to 0.0000001 (i.e., 1 over 10 million), we still have imbalances higher than the worst-case imbalance of the FSM. 
%In a laptop computer, it took more than five hours to generate a single assignment under RR, whereas the FSM took 28 seconds. 
%In other words, with the HIE data, in five hours of computation, the best assignment under RR was worse than the worst of the FSM. 
%In other words, even after generating random assignments for 100 hours, we could not find an assignment that is as good as the worst assignment of the FSM in terms of the Mahalanobis distance. 
In sum, even with an acceptance rate as low as 0.0000001, RR did not perform as well as the FSM, despite taking 100 hours on average to generate a single assignment, as opposed to the 30 seconds of running time of the FSM.

%%%%%%%%%%%
\begin{singlespacing}
    \begin{table}[H]
        \centering
        \scalebox{0.8}{
        \begin{tabular}{ccccccc}
        \toprule
     Design &  Minimum & 1st Quartile  & Median  & Mean & 3rd Quartile  & Maximum \\
     \hline
CRD & 18.5 & 39.5 & 43.9  & 44.4  & 48.7 & 96.1 \\

%RR (0.001) &  17.8  & 25.5  & 26.3  & 26.1  & 27.0  & 28.1 \\

RR (0.001) & 18.5  & 25.4 & 26.2 & 25.9 &  26.7 & 27.1\\

FSM & 2.8   & 4.7  & 5.3  & 5.4 & 6.0 & 10.6 \\
\bottomrule
        \end{tabular}
        }
        \caption{Distribution of the maximum pairwise Mahalanobis distance across groups ($M$). 
        For CRD, we obtain this distribution by generating over 10 million random assignments for 100 hours. For RR (0.001), we obtain this distribution using 0.1\% of all these assignments with the smallest values of $M$. For the FSM, we obtain this distribution using the  400 random assignments from Section \ref{sec_hiedata}.}
        \label{tab:hie_mahaldist}
    \end{table}
\end{singlespacing}

\section{Ten further studies in the health and social sciences}
\label{sec_additional}

In addition to the previous study, we evaluate the performance of the FSM in ten randomized studies from the health and social sciences. These ten studies are labelled (1) Crepon, which evaluates the impact of a microcredit program in rural Morocco on assets, profits, and consumption \citep{crepon2015estimating}; (2) Angrist, which evaluates the impact of cash incentives on certification rates among low-achievers in Israel \citep{angrist1999using}; (3) Finkelstein, which evaluates the impact of the Camden Coalition of Healthcare Providers' Hotspotting program on hospital readmission rates among patients with high use of healthcare services \citep{finkelstein2020health}; (4) Durocher, which evaluates the impact of intravenous infusion versus intramusculur oxytocin on postpartum blood loss and hemmorhage rates \citep{durocher2019does}; (5) Lalonde, which evaluates the impact of Nationally Supported Work program on earnings \citep{lalonde1986evaluating}; (6) Karlan, which evaluates the impact of loans with an indemnity component on demand for credit and investment decisions of farmers \citep{karlan2014agricultural}; (7) Dupas, which evaluates the impact of different cost provisions for allocating dilute-chlorine water treatment solution on chlorine residuals in households' stored water \citep{dupas2016targeting}; (8) Blattman, which evaluates the impact of industrial job offers and entrepreneurial programs on health, income and other measures \citep{blattman2018impacts}; (9) Ambler, which evaluates the impact of offering Salvadoran migrant maching funds for educational remittances on educational investments and other outcomes \cite{ambler2015channeling}; (10) Wantchekon, which evaluates the impact of townhall meeting based on programmatic, nonclientelist platforms on clientelism, voter turnout, and vote shares \citep{fujiwara2013can}. Table \ref{tab:ten_studies} provides details on the design parameters considered in these studies.

\begin{singlespacing}
\begin{table}[!ht]
\centering
\scalebox{0.7}{
\begin{tabular}{ccccp{1.2cm}ccccp{1.2cm}ccccc}
\toprule 
\multirow{2}{1cm}{Study} & \multicolumn{4}{c}{Design parameters} & \multicolumn{5}{c}{Main covariates} & \multicolumn{5}{c}{Second-order transformations}\\
 \cline{2-5} \cline{6-10} \cline{11-15} 
 & $N$ & $G$ & $(n_1,...,n_G)$ & $k$ & CRD & RR & FSM & $\frac{\text{CRD}}{\text{FSM}}$ & $\frac{\text{RR}}{\text{FSM}}$ & CRD & RR & FSM & $\frac{\text{CRD}}{\text{FSM}}$ & $\frac{\text{RR}}{\text{FSM}}$\\
\toprule
Crepon  & 4465 & 2 & (2266, 2199) & 33 & 0.024 & 0.018 & 0.015 & 1.6 & 1.2 & 0.024 & 0.023 & 0.018 & 1.3 & 1.3\\
Angrist & 3821 & 2 & (1910,1911) & 20 & 0.025 & 0.014 & 0.002 & 12.5 & 7.0 & 0.026 & 0.023 & 0.003 & 8.7 & 7.7\\ 
Finkelstein & 782 & 2 & (389,393) & 10 & 0.062 & 0.020 & 0.010 & 6.2 & 2.0 & 0.059 & 0.048 & 0.013 & 4.5 & 3.7\\ 
Durocher & 480 & 2 & (239,241) & 12 & 0.072 & 0.031 & 0.017 & 4.2 & 1.8 & 0.073 & 0.068 & 0.022 & 3.3 & 3.1\\ 
Lalonde & 445 & 2 & (222,223) & 10 & 0.083 & 0.044 & 0.014 & 5.9 & 3.1 & 0.077 & 0.070 & 0.019 & 4.1 & 3.7\\ 
Karlan & 169 & 2 & (84, 85) & 16 & 0.124 & 0.059 & 0.053 & 2.3 & 1.1 & 0.123 & 0.119 & 0.060 & 2.1 & 2.0\\ 
Dupas & 1118 & 3 & (351, 382, 385) & 11 & 0.059 & 0.018 & 0.010 & 5.9 & 1.8 & 0.058 & 0.044 & 0.017 & 3.4 & 2.6\\
Blattman & 947 & 3 & (358,304,285) & 34 & 0.064 & 0.048 & 0.026 & 2.5 & 1.8 & 0.065 & 0.064 & 0.036 & 1.8 & 1.8\\
Ambler & 991 & 4 & (360, 211, 203, 217) & 16 & 0.073 & 0.053 & 0.015 & 4.9 & 3.5 & 0.073 & 0.071 & 0.017 & 4.3 & 4.2\\ 
Wantchekon & 24 & 2 & (12, 12) & 10 & 0.334 & 0.170 & 0.245 & 1.4 & 0.7 & 0.333 & 0.289 & 0.237 & 1.4 & 1.2\\ 
\hline
Average & & & & & 0.092 & 0.048 & 0.041 & & & 0.091 & 0.082 & 0.044 & & \\ 
Average* & & & & & 0.065 & 0.034 & 0.018 & & & 0.064 & 0.059 & 0.023 & &\\ 
   \bottomrule
\end{tabular}
}
\caption{Design parameters and balance results for ten case studies in the health and social sciences. The second average denoted with an asterisk ($^*$) excludes the Wantchekon study because $\underline{\tilde{\bm{X}}}^\top_{r} \underline{\tilde{\bm{X}}}_{r}$ matrix is non-invertible for the first $r = 22$ selections.}
%\footnote{In this average, we do not consider...}  }
\label{tab:ten_studies}
\end{table}
\end{singlespacing}

For each study, we generate 100 assignments of complete randomization (CRD), Rerandomization with Mahalanobis distance (based on the main covariates) and  0.001 acceptance rate (RR), and the FSM (based on the main covariates). The mean ASMD of the main covariates and their squares and interactions under each method are presented in Table \ref{tab:ten_studies}. See figures \ref{fig:five_studies} and \ref{fig:five_studies2} in the Online Supplementary Materials for plots of the  distributions of these imbalances, alongside the Frobenius norms of $\underline{\bm{R}}_1 - \underline{\bm{R}}_2$.\footnote{Groups 1 and 2 are chosen haphazardly as a typical pair of groups. The results for the other pairs of groups are similar.}

Table \ref{tab:ten_studies} shows that for each study, CRD achieves a similar mean balance on the main covariates and their squares and interactions. RR improves balance over CRD considerably for the main covariates, but only mildly for the squares and interactions. 
By contrast, for almost all the studies, the FSM substantially improves balance over CRD and RR in terms of both the main covariates and their transformations.
The only exception is the Wantchekon study, where the group sizes barely exceed the number of covariates $k = 10$. 
%leaving only one degree of freedom for the regression model in each treatment group. 
The FSM is not designed for settings like this, where the number of covariates is greater than or close to the minimum treatment group size.
%, mainly because the design matrices in each group are rank deficient. 
In such settings, the matrix $\underline{\tilde{\bm{X}}}^\top_{r} \underline{\tilde{\bm{X}}}_{r}$ is non-invertible for almost every selection stage of the FSM, and therefore, the D-optimal selection function in the FSM relies on ridge augmentation to feasibly select units (see Section \ref{sec_theD}), producing suboptimal selections.

Across the ten studies, the ASMD on the main covariates are 55\% ($=\frac{0.092 - 0.041}{0.092}$) and 15\% ($=\frac{0.048 - 0.041}{0.048}$) lower on average with the FSM than CRD and RR, respectively. 
If we exclude Wantchekon, then these percent reductions in ASMD are amplified to 72\% and 47\%.
Similarly, across the ten studies, the ASMD on the squares and interactions of the covariates with the FSM are about 50\% smaller than both CRD and RR, and without Wantchekon, they are at least 60\% smaller. 
In fact, FSM has better balance on both the main covariates and their second-order transformations over CRD and RR uniformly across the first nine studies (as shown by the $\frac{\text{CRD}}{\text{FSM}}$ and $\frac{\text{RR}}{\text{FSM}}$ columns). 
For each study, the relative improvement in balance under the FSM over RR is larger for the second-order transformations than for the main covariates. 
In particular, for half of the ten studies, the mean ASMD of the second-order transformations under RR are at least three times larger than those under the FSM, implying substantial improvement in balance on these transformations under the FSM. 

Overall, averaging the ASMD of the main covariates and their second-order transformations across the first nine studies, we see that the FSM achieves 68\% better covariate balance than complete randomization and 56\% better covariate balance than rerandomization in a typical study. Across these studies, the FSM's performance relative to CRD and RR is consistent with those in the HIE study in Section \ref{sec_thehie}.\footnote{Notably, the relative performances of the methods in the HIE study are comparable to those of the Ambler study, which involves roughly half the sample size of the HIE study and similar values of the other design parameters. For instance, the average ASMD of the main covariates under CRD, RR, and the FSM in the Ambler study are roughly $\sqrt{2}$ times those in the HIE study, where $\sqrt{2}$ is the factor that corrects for the difference in sample size.}
A similar pattern to the HIE study is also noted in the plots of $||\underline{\bm{R}}_1 - \underline{\bm{R}}_2||_{F}$ in figures \ref{fig:five_studies} and \ref{fig:five_studies2} in the Online Supplementary Materials, where for most studies, the worst (least balanced) assignment among all the draws of the FSM has a better balance on the correlation matrices than the best (most balanced) assignment among all the draws of CRD and RR. As discussed in Section \ref{sec_hie_efficiency}, under both model- and randomization-based approaches to inference, better balance directly translates to more efficient estimates of treatment effects.  

Therefore, similar to the HIE case study, these results show that across a range of randomized experiments, the FSM is a flexible and robust approach to randomization. 

%%%%%%%%%%%%%%%%%%%%%%%%%%%%
%%%%%%%%%%%%%%%%%%%%%%%%%%%%
%%%%%%%%%%%%%%%%%%%%%%%%%%%%

\section{Practical considerations and extensions}
\label{sec_practical}

%%%%%%%%%%%%%%%%%%%%%%%%%%%%
%%%%%%%%%%%%%%%%%%%%%%%%%%%%
\subsection{Multi-group experiments}
\label{sec_multi}
As discussed, the FSM can readily handle experiments with multiple treatment groups. In so doing, the key methodological consideration is the choice of an SOM. 
As in two-group experiments, we would like to generate an SOM that is randomized and sequentially controlled, so that at every stage of the random selection process, the number of selections made by each treatment group up to that stage is close to its fair share. 
Constructing a sequentially controlled SOM for multi-group experiments with arbitrary group sizes is an open problem. However, such constructions are possible for several practically relevant configurations of the group sizes, namely (a) groups of equal size, (b) groups having one of two distinct sizes, and (c) groups of more than two distinct sizes such that when combined by groups of equal size they have the same total size. 
In the Online Supplementary Materials, we provide algorithms to construct an SOM for all three configurations and prove that the resulting SOM is sequentially controlled. 
%At a high level, these algorithms first combine the groups, generate an SOM at the combined group level, and then within each combined group, generate an SOM at its component group level.
In practice, for more general group size configurations, one strategy to generate an SOM is to first identify one of these three configurations that is structurally similar to the configuration at hand, and then use the corresponding SOM-generating algorithm. The resulting algorithm may not always be sequentially controlled, but is still likely to produce a well-controlled randomized selection order. 

%%%%%%%%%%%%%%%%%%%%
%%%%%%%%%%%%%%%%%%%%
\subsection{Stratified experiments}
\label{sec_stratified}

In stratified experiments, units are grouped into two or more strata, and within each stratum, units are randomly assigned to treatment.
Here we propose a family of extensions of the FSM to such settings.
Typically, in stratified experiments the treatment group sizes within each stratum are pre-specified by the investigator.
%The main challenge arises when the strata have different sizes. 
The main challenge arises when the treatment group sizes differ across strata. 
To address this challenge, we construct an augmented SOM with information of the treatment group that selects at each stage and the stratum that it selects from.
%By construction, the resulting SOM is consistent with the pre-fixed treatment group sizes within each stratum.
This construction guarantees that each treatment group is assigned the pre-specified number of units in each stratum.
In the Online Supplementary Materials, we discuss two approaches to construct such an SOM. 
At a high level, one approach generates a separate SOM for each stratum, while the other approach uses SCOMARS to determine the order of stratum labels for each treatment.

\subsection{Sequential experiments}
\label{sec_sequential}

Sequential experiments are experiments where units progressively become available for random assignment, possibly in batches of varying sizes. 
Here we describe extensions of the FSM to such settings. 
The simplest approach is to run an independent FSM for each new batch of available units.
However, in general, this approach fails to account for accrued covariate imbalances between the treatment groups. 
To address this issue, we propose an alternative approach that considers the new batch as a continuation of the previous one. 
More specifically, for each unit in the new batch we evaluate the value of the D-optimal selection function using all the units already assigned to the selecting treatment group.
%By carrying over past covariate information to the new batch,
In this way, this approach tends to remove accrued covariate imbalances. 
See the Online Supplementary Materials for technical details.
%In this setting, it is also important to determine the size of the batches. 
%For a fixed total number of units enrolled in the experiment, increasing the batch size (and hence, reducing the number of batches) tends to increase the overall balance and efficiency properties of the FSM. 
%How to optimally determine the batch sizes for the FSM is an important open question for practice.
In sequential experiments, the balance and efficiency of the FSM assignments tend to increase with the batch sizes.

%%%%%%%%%%%%%%%%%%%%
%%%%%%%%%%%%%%%%%%%%
%%%%%%%%%%%%%%%%%%%%

\section{Summary and remarks}
\label{sec_summary}
We revisited, formalized, and extended the FSM for experimental design. We proposed a new selection function based on D-optimality that requires no tuning parameters. We showed that, equipped with this selection function, the FSM has a number of appealing properties. First, the FSM is affine invariant and hence, it self-standardizes covariates with possibly different units of measurements. 
%Second, for approximately symmetric data, the FSM yields near-exact balance on a large class of covariate transformations, including transformations that are not part of the assumed linear model under the FSM. 
Second, the FSM produces randomized block designs without explicitly randomizing in each block. 
Third, the FSM also produces matched-pair designs without explicitly constructing the matched pairs beforehand and randomizing within each pair. 
We described how both model-based and randomization-based inference on treatment effects can be conducted using the FSM. For a range of practically relevant configurations of group sizes in multi-group experiments, we proposed new algorithms to generate a fair and random selection order of treatments under the FSM. We also discussed potential extensions of the FSM to stratified and sequential experiments.
In a case study on the RAND Health Insurance Experiment, and ten additional randomized studies from the health and social sciences, we showed that the FSM is a robust approach to randomization, exhibiting better performance than complete randomization and rerandomization in terms of balance and efficiency. 

While there are settings where complete randomization may perform better than the FSM in terms of efficiency, such settings are less common and involve jagged, i.e., highly non-smooth, potential outcome models.
In settings where these models are reasonably smooth, the FSM is expected to perform well. 
Overall, through our extensive explorations with real and simulated experimental data, the FSM has consistently stood out as a robust design that can handle multiple treatment groups and a fairly large number of categorical and continuous covariates without requiring tuning parameters and nor coarsening covariates.
We recommend giving strong considerations to the FSM in experimental design for its conceptual simplicity, practicality, balance, and robustness.

%%%%

%%% notes: 

%% For RR Mahal with multiple treatment groups, approximate dist is not obviouos (not chisq). So we have to generate the dist. We can only do so much. The max Mahal for FSM is at most 10.5. For RR 0.001, it is 27. In my simulated distribution, the min value is 22. So, FSM can pluck the rarest of cases where things are very well balanced.

%%%%%%%%%%%%%%
%%%%%%%%%%%%%%%%%%%%
%%%%%%%%%%%%%%%%%%%%
%%%%%%%%%%%%%%%%%%%%

%\section{Summary and remarks}
%\vspace{1cm}
%\pagebreak
\onehalfspacing
\bibliographystyle{asa}
\bibliography{mybibliography21}
%\nobibliography{append}

%%%%%%%%%%%%%%%%%%%%
%%%%%%%%%%%%%%%%%%%%
%%%%%%%%%%%%%%%%%%%%
\clearpage
\appendix
\section*{Supplementary materials} 

\addcontentsline{toc}{section}{Supplementary Materials}
\renewcommand{\thesubsection}{\Alph{subsection}}
\setcounter{table}{0}
\renewcommand{\thetable}{A\arabic{table}}

\renewcommand{\theequation}{A\arabic{equation}}
\setcounter{figure}{0}
\renewcommand\thefigure{A\arabic{figure}}

\setcounter{theorem}{0}
\renewcommand\thetheorem{A\arabic{theorem}}

\setcounter{lemma}{0}
\renewcommand\thelemma{A\arabic{lemma}}

%%%%%%%%%%%%%%%%%%%%
%%%%%%%%%%%%%%%%%%%%
\subsection{Notation and estimands}

\begin{singlespacing}
\begin{table}[htbp]
\caption{Notation}
\label{tab_notation}
\begin{center}
\begin{tabular}{r c p{10cm} }
\hline
$N$ & $\triangleq$ & Full sample size\\
$i$ & $\triangleq$ & Index of unit, $i = 1, ..., N$\\
$G$ & $\triangleq$  & Number of treatments\\
$g$ & $\triangleq$ & Index of treatment group, $g = 1, 2, ..., G$\\
%$T_g$ & $\triangleq$ & Assignment indicator into treatment group $g$\\
$n_g$ & $\triangleq$ & Size of treatment group $g$\\
$k$ & $\triangleq$ & Number of baseline covariates\\
$\boldsymbol{X}_{i}$ & $\triangleq$ & Observed vector of baseline covariates of unit $i$\\
$\underline{\bm{X}}_{\text{full}}$ & $\triangleq$ & $N \times k$ matrix of covariates in the full sample\\
$\underline{\tilde{\bm{X}}}_{\text{full}}$ & $\triangleq$ & $N \times k+1$ design matrix in the full sample\\
$\bar{\bm{X}}_{\text{full}}$ & $\triangleq$ & $k\times 1$ vector of means of the baseline covariates in the full sample\\
$\underline{\bm{S}}_{\text{full}}$ & $\triangleq$ & $k\times k$ covariance matrix of the baseline covariates in the full sample\\
$Y_i(g)$ & $\triangleq$ & Potential outcome of unit $i$ under treatment $g$	\\
$\bm{Y}(g)$ & $\triangleq$ & Vector of potential outcomes under treatment $g$, $(Y_1(g), ..., Y_N(g))^\top$ \\
$Z_i$ & $\triangleq$ & Treatment assignment indicator of unit $i$, $Z_i \in \{1,2,...,G\}$ \\ 
$\bm{Z}$ & $\triangleq$ & Vector of treatment assignment indicators, $(Z_1,...,Z_N)^\top$ \\
$Y^{\text{obs}}_i$ & $\triangleq$ & Observed outcome of unit $i$, $Y^{\text{obs}}_i = \sum_{g = 1}^{G} \mathbbm{1}(Z_i = g) Y_i(g)$\\
%$\bm{Y}^{\text{obs}}$ & $\triangleq$ & Vector observed outcomes, $(Y^{\text{obs}}_1,...,Y^{\text{obs}}_N)^\top$\\
\hline
\end{tabular}
\end{center}
\end{table}
\end{singlespacing}

\begin{singlespacing}
\begin{table}[H]
\caption{Estimands}
\label{tab_estimands}
\begin{center}
\begin{tabular}{r c p{10cm} }
\hline
$Y_i(g') - Y_i(g'')$ & $\triangleq$ & Unit level causal effect of treatment $g'$ relative to treatment $g''$ for unit $i$; $g',g''
\in \{1,2,...,G\}$ 	\\
$\text{SATE}_{g',g''}$ & $\triangleq$ & $\frac{1}{N}\sum_{i=1}^{N} \{Y_i(g') - Y_i(g'') \}$, the Sample Average Treatment Effect of treatment $g'$ relative to treatment $g''$ \\
$\text{PATE}_{g',g''}$ & $\triangleq$ & $\mathbb{E} \{Y_i(g') - Y_i(g'')\}$, the Population Average Treatment Effect of treatment $g'$ relative to treatment $g''$ \\\hline
\end{tabular}
\end{center}
\end{table}
\end{singlespacing}
%%%%%%%%%%%%%%%%%%%%
%%%%%%%%%%%%%%%%%%%%
\subsection{Proofs of theoretical results}

\begin{lemma} \normalfont
Let treatment 1 be the choosing group at the $r$th stage. Also, let $\underline{\tilde{\bm{X}}}_{r-1}$ be the $\tilde{n}_{r-1} \times (k+1)$ design matrix in treatment group 1 after the $(r-1)$th stage, where $\tilde{n}_{r-1} \geq 1$ and $\text{rank}(\underline{\tilde{\bm{X}}}_{r-1}) = k+1$. The D-optimal selection function chooses unit $i'$ with covariate vector $\bm{X}_{i'} \in \mathbb{R}^k$, where 
\begin{equation}
      i' \in 
        \argmax\limits_{i \in \mathcal{R}_{r-1}}(1, \bm{X}^\top_i) (\underline{\tilde{\bm{X}}}^\top_{r-1} \underline{\tilde{\bm{X}}}_{r-1})^{-1} \begin{psmallmatrix}
       1\\
       \bm{X}_i
       \end{psmallmatrix} 
    \end{equation}

\label{lemma:dopt}
\end{lemma}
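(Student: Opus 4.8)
The goal is to show that maximizing $\det(\underline{\tilde{\bm{X}}}^\top_{r,i}\underline{\tilde{\bm{X}}}_{r,i})$ over $i \in \mathcal{R}_{r-1}$ is equivalent to maximizing the quadratic form $(1,\bm{X}^\top_i)(\underline{\tilde{\bm{X}}}^\top_{r-1}\underline{\tilde{\bm{X}}}_{r-1})^{-1}(1,\bm{X}^\top_i)^\top$. The key algebraic observation is that adding unit $i$ to treatment group 1 corresponds to a rank-one update of the information matrix: if we write $\tilde{\bm{x}}_i = (1,\bm{X}^\top_i)^\top \in \mathbb{R}^{k+1}$, then $\underline{\tilde{\bm{X}}}^\top_{r,i}\underline{\tilde{\bm{X}}}_{r,i} = \underline{\tilde{\bm{X}}}^\top_{r-1}\underline{\tilde{\bm{X}}}_{r-1} + \tilde{\bm{x}}_i\tilde{\bm{x}}_i^\top$. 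So the plan is: first establish this decomposition, which is immediate from stacking the new row onto the design matrix and expanding the Gram matrix.

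The second step is to invoke the matrix determinant lemma (the rank-one update formula for determinants): for an invertible matrix $\bm{A}$ and a vector $\bm{u}$, $\det(\bm{A} + \bm{u}\bm{u}^\top) = \det(\bm{A})\,(1 + \bm{u}^\top\bm{A}^{-1}\bm{u})$. Applying this with $\bm{A} = \underline{\tilde{\bm{X}}}^\top_{r-1}\underline{\tilde{\bm{X}}}_{r-1}$ (invertible by the rank hypothesis $\text{rank}(\underline{\tilde{\bm{X}}}_{r-1}) = k+1$) and $\bm{u} = \tilde{\bm{x}}_i$ gives
\begin{equation*}
\det(\underline{\tilde{\bm{X}}}^\top_{r,i}\underline{\tilde{\bm{X}}}_{r,i}) = \det(\underline{\tilde{\bm{X}}}^\top_{r-1}\underline{\tilde{\bm{X}}}_{r-1})\left(1 + (1,\bm{X}^\top_i)(\underline{\tilde{\bm{X}}}^\top_{r-1}\underline{\tilde{\bm{X}}}_{r-1})^{-1}\begin{psmallmatrix}1\\\bm{X}_i\end{psmallmatrix}\right).
\end{equation*}
Since $\det(\underline{\tilde{\bm{X}}}^\top_{r-1}\underline{\tilde{\bm{X}}}_{r-1})$ is a fixed positive constant that does not depend on the candidate unit $i$, the $\argmax$ over $i \in \mathcal{R}_{r-1}$ of the left-hand side coincides with the $\argmax$ of the quadratic form, and the additive $1$ and the positive multiplicative constant are irrelevant to the maximizer. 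This yields the claimed characterization, with ties broken randomly exactly as in the definition of the D-optimal selection function.

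I do not anticipate a serious obstacle here; the result is essentially a direct application of the matrix determinant lemma. The only points requiring a little care are (i) noting explicitly that invertibility of $\underline{\tilde{\bm{X}}}^\top_{r-1}\underline{\tilde{\bm{X}}}_{r-1}$ is exactly the stated rank condition, so the lemma applies and the inverse in the quadratic form is well-defined; and (ii) confirming that $\det(\underline{\tilde{\bm{X}}}^\top_{r-1}\underline{\tilde{\bm{X}}}_{r-1}) > 0$ (a Gram matrix of a full-column-rank matrix is positive definite), so that multiplying through by it preserves the direction of the optimization. Both are routine. If one wants a self-contained argument avoiding the named lemma, one can instead derive the factorization via the Schur complement of the bordered matrix $\begin{psmallmatrix}\underline{\tilde{\bm{X}}}^\top_{r-1}\underline{\tilde{\bm{X}}}_{r-1} & \tilde{\bm{x}}_i\\ \tilde{\bm{x}}_i^\top & -1\end{psmallmatrix}$, but the determinant lemma is the cleanest route.
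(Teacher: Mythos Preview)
Your proposal is correct and follows essentially the same approach as the paper: both establish the rank-one update $\underline{\tilde{\bm{X}}}^\top_{r,i}\underline{\tilde{\bm{X}}}_{r,i} = \underline{\tilde{\bm{X}}}^\top_{r-1}\underline{\tilde{\bm{X}}}_{r-1} + \tilde{\bm{x}}_i\tilde{\bm{x}}_i^\top$ and then reduce the determinant to the factor $1 + \tilde{\bm{x}}_i^\top(\underline{\tilde{\bm{X}}}^\top_{r-1}\underline{\tilde{\bm{X}}}_{r-1})^{-1}\tilde{\bm{x}}_i$. The only cosmetic difference is that the paper derives the matrix determinant lemma inline via the Sylvester identity $\det(\underline{\bm{I}}_m + \underline{\bm{A}}\underline{\bm{B}}) = \det(\underline{\bm{I}}_n + \underline{\bm{B}}\underline{\bm{A}})$ after symmetrically factoring out $(\underline{\tilde{\bm{X}}}^\top_{r-1}\underline{\tilde{\bm{X}}}_{r-1})^{1/2}$, whereas you cite the lemma directly.
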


\begin{proof}
We follow the notations outlined in Section \ref{sec_theD}. At the $r$th stage, D-optimal selection function selects unit $i' \in \mathcal{R}_{r-1}$, where $i' \in \argmax\limits_{i \in \mathcal{R}_{r-1}}\det(\underline{\tilde{\bm{X}}}^\top_{r,i}\underline{\tilde{\bm{X}}}_{r,i})$. Now, for $i \in \mathcal{R}_{r-1}$, 
\begin{align}
    \det(\underline{\tilde{\bm{X}}}^\top_{r,i}\underline{\tilde{\bm{X}}}_{r,i}) &= \det \Big\{
\underline{\tilde{\bm{X}}}^\top_{r-1} \underline{\tilde{\bm{X}}}_{r-1} + \begin{psmallmatrix}
1\\
\bm{X}_i
\end{psmallmatrix}(1, \bm{X}^\top_i)\Big\}\\
& = \det(\underline{\tilde{\bm{X}}}^\top_{r-1}\underline{\tilde{\bm{X}}}_{r-1}) \det \Big\{\underline{\bm{I}} + (\underline{\tilde{\bm{X}}}^\top_{r-1}\underline{\tilde{\bm{X}}}_{r-1})^{-\frac{1}{2}} \begin{psmallmatrix}
1\\
\bm{X}_i
\end{psmallmatrix}  (1, \bm{X}^\top_i) (\underline{\tilde{\bm{X}}}^\top_{r-1}\underline{\tilde{\bm{X}}}_{r-1})^{-\frac{1}{2}} \Big\}\\
& = \det (\underline{\tilde{\bm{X}}}^\top_{r-1}\underline{\tilde{\bm{X}}}_{r-1}) \Big\{1 +  (1, \bm{X}^\top_i) (\underline{\tilde{\bm{X}}}^\top_{r-1}\underline{\tilde{\bm{X}}}_{r-1})^{-1} \begin{psmallmatrix}
1\\
\bm{X}_i
\end{psmallmatrix}  \Big\}, \label{eq_B_1}
\end{align}
where the final equality holds since for two matrices $\underline{\bm{A}}_{m \times n}$ and $\underline{\bm{B}}_{n \times m}$, $\det(\underline{\bm{I}}_m + \underline{\bm{A}}\underline{\bm{B}}) = \det(\underline{\bm{I}}_n + \underline{\bm{B}}\underline{\bm{A}})$. Equation \ref{eq_B_1} implies that the selected unit $i'$ maximizes $(1, \bm{X}^\top_i) (\underline{\tilde{\bm{X}}}^\top_{r-1}\underline{\tilde{\bm{X}}}_{r-1})^{-1} \begin{psmallmatrix}
1\\
\bm{X}_i
\end{psmallmatrix}$. This completes the proof.

\end{proof}

\subsubsection*{Proof of Theorem \ref{thm:mahal}}
\begin{proof}
We use the notations in Section \ref{sec_setup} and Table \ref{tab_notation}.
We first consider the case where $\tilde{n}_{r-1} = 0$. The selected unit $i'$ satisfies, 
\begin{equation}
i' \in 
        \argmax\limits_{i \in \mathcal{R}_{r-1}}(1, \bm{X}^\top_i) (\underline{\tilde{\bm{X}}}^\top_{\text{full}} \underline{\tilde{\bm{X}}}_{\text{full}})^{-1} \begin{psmallmatrix}
       1\\
       \bm{X}_i
       \end{psmallmatrix}.    
\end{equation}
Now, denoting $\bm{e}_1 = (1,0,...,0)$ as the $k \times 1$ first standard unit vector, we have
\begin{align}
    (1, \bm{X}^\top_i) (\underline{\tilde{\bm{X}}}^\top_{\text{full}} \underline{\tilde{\bm{X}}}_{\text{full}})^{-1} \begin{psmallmatrix}
       1\\
       \bm{X}_i
       \end{psmallmatrix} & = (1, \bm{X}^\top_i) (\underline{\tilde{\bm{X}}}^\top_{\text{full}} \underline{\tilde{\bm{X}}}_{\text{full}})^{-1} \begin{psmallmatrix}
       1\\
       \bar{\bm{X}}_{\text{full}}
       \end{psmallmatrix} + (1, \bm{X}^\top_i) (\underline{\tilde{\bm{X}}}^\top_{\text{full}} \underline{\tilde{\bm{X}}}_{\text{full}})^{-1} \begin{psmallmatrix}
       0\\
       \bm{X}_i - \bar{\bm{X}}_{\text{full}}
       \end{psmallmatrix} \nonumber\\
       & = (1, \bm{X}^\top_i) (\underline{\tilde{\bm{X}}}^\top_{\text{full}} \underline{\tilde{\bm{X}}}_{\text{full}})^{-1} \underline{\tilde{\bm{X}}}^\top_{\text{full}} \underline{\tilde{\bm{X}}}_{\text{full}} \frac{\bm{e}_1}{N}  + (1, \bm{X}^\top_i) (\underline{\tilde{\bm{X}}}^\top_{\text{full}} \underline{\tilde{\bm{X}}}_{\text{full}})^{-1} \begin{psmallmatrix}
       0\\
       \bm{X}_i - \bar{\bm{X}}_{\text{full}} 
       \end{psmallmatrix} \nonumber \\
          & = \frac{1}{N} + \{0, (\bm{X}_i - \bar{\bm{X}}_{\text{full}})^\top \} (\underline{\tilde{\bm{X}}}^\top_{\text{full}} \underline{\tilde{\bm{X}}}_{\text{full}})^{-1} \begin{psmallmatrix}
       0\\
       \bm{X}_i - \bar{\bm{X}}_{\text{full}}
        \end{psmallmatrix} \nonumber\\
        & = \frac{1}{N} + \frac{1}{N}(\bm{X}_i - \bar{\bm{X}}_{\text{full}})^\top (\underline{\bm{S}}_{\text{full}})^{-1} (\bm{X}_i - \bar{\bm{X}}_{\text{full}}).
\end{align}
Here the last equality holds since, by the formula for the inverse of a partitioned matrix, $(\underline{\tilde{\bm{X}}}^\top_{\text{full}} \underline{\tilde{\bm{X}}}_{\text{full}})^{-1} = \begin{psmallmatrix}
\underline{\bm{B}}_{11} & \underline{\bm{B}}_{12}\\
\underline{\bm{B}}_{21} & \underline{\bm{B}}_{22}\\
\end{psmallmatrix}$, where $\underline{\bm{B}}^{-1}_{22} = \underline{\bm{X}}^\top_{\text{full}}\underline{\bm{X}}_{\text{full}} - N \bar{\bm{X}}_{\text{full}}\bar{\bm{X}}^\top_{\text{full}} = N \underline{\bm{S}}_{\text{full}}.$ This completes the proof of the $\tilde{n}_{r-1} = 0$ case. The proof for the case where $\tilde{n}_{r-1} \geq 1$ and $\underline{\tilde{\bm{X}}}^\top_{r-1} \underline{\tilde{\bm{X}}}_{r-1}$ is invertible follows similar steps and hence is omitted.

We now consider the case where $\tilde{n}_{r-1} \geq 1$ and $\underline{\tilde{\bm{X}}}^\top_{r-1} \underline{\tilde{\bm{X}}}_{r-1}$ is not invertible. We denote $\bar{\bm{X}}^*_{r-1} = \frac{\bar{\bm{X}}_{r-1}+\epsilon\bar{\bm{X}}_{\text{full}}}{1+\epsilon}$ and $\underline{\bm{S}}^*_{r-1}  = (\frac{1}{\tilde{n}_{r-1}}\underline{\bm{X}}_{r-1}^\top \underline{\bm{X}}_{r-1} + \frac{\epsilon}{N} \underline{\bm{X}}_{\text{full}}^\top \underline{\bm{X}}_{\text{full}}) - (1+\epsilon)\bar{\bm{X}}^*_{r-1}\bar{\bm{X}}^{*\top}_{r-1}$.
The selected unit $i'$ satisfies,
\begin{equation}
    i' \in \argmax\limits_{i \in \mathcal{R}_{r-1}} (1, \bm{X}^\top_i) \Big(\frac{1}{\tilde{n}_{r-1}}\underline{\tilde{\bm{X}}}^\top_{r-1} \underline{\tilde{\bm{X}}}_{r-1} + \frac{\epsilon}{N} \underline{\tilde{\bm{X}}}^\top_{\text{full}} \underline{\tilde{\bm{X}}}_{\text{full}}  \Big)^{-1} \begin{psmallmatrix}
       1\\
       \bm{X}_i
       \end{psmallmatrix}
\end{equation}
Denoting $\underline{\tilde{\bm{G}}} = \begin{psmallmatrix}
\sqrt{\frac{1}{\tilde{n}_{r-1}}}\underline{\bm{X}}_{r-1}\\
\sqrt{\frac{\epsilon}{N}}\underline{\bm{X}}_{\text{full}}
\end{psmallmatrix}$, we have
\begin{align}
 &(1, \bm{X}^\top_i) \Big(\frac{1}{\tilde{n}_{r-1}}\underline{\tilde{\bm{X}}}^\top_{r-1} \underline{\tilde{\bm{X}}}_{r-1} + \frac{\epsilon}{N} \underline{\tilde{\bm{X}}}^\top_{\text{full}} \underline{\tilde{\bm{X}}}_{\text{full}}  \Big)^{-1} \begin{psmallmatrix}
       1\\
       \bm{X}_i
       \end{psmallmatrix} \nonumber\\
       &= (1, \bm{X}^\top_i) (\underline{\tilde{\bm{G}}}^\top \underline{\tilde{\bm{G}}})^{-1} \begin{psmallmatrix}
       1\\
       \bm{X}_i
       \end{psmallmatrix}  \nonumber\\
&=  (1, \bm{X}^\top_i) (\underline{\tilde{\bm{G}}}^\top \underline{\tilde{\bm{G}}})^{-1} \begin{psmallmatrix}
       0\\
       \bm{X}_i - \bar{\bm{X}}^*_{r-1}
       \end{psmallmatrix} + (1, \bm{X}^\top_i) (\underline{\tilde{\bm{G}}}^\top \underline{\tilde{\bm{G}}})^{-1} \begin{psmallmatrix}
       1\\
       \bar{\bm{X}}^*_{r-1}
       \end{psmallmatrix}     \nonumber \\
& = (0, (\bm{X}_i-\bar{\bm{X}}^*_{r-1})^\top) (\underline{\tilde{\bm{G}}}^\top \underline{\tilde{\bm{G}}})^{-1} \begin{psmallmatrix}
       0\\
       \bm{X}_i - \bar{\bm{X}}^*_{r-1}
       \end{psmallmatrix} + \frac{1}{1+\epsilon} \nonumber \\
& = (\bm{X}_i-\bar{\bm{X}}^*_{r-1})^\top (\underline{\bm{S}}^*_{r-1})^{-1} (\bm{X}_i-\bar{\bm{X}}^*_{r-1}) + \frac{1}{1+\epsilon}.
\end{align}
Here, the third equality holds since $\begin{psmallmatrix}
1 \\
\bar{\bm{X}}^*_{r-1}
\end{psmallmatrix} = \frac{1}{1+\epsilon}\underline{\tilde{\bm{G}}}^\top \underline{\tilde{\bm{G}}} \bm{e}_1$ and the fourth equality holds since $(\underline{\tilde{\bm{G}}}^\top \underline{\tilde{\bm{G}}})^{-1} = \begin{psmallmatrix}
\underline{\bm{B}}_{11} & \underline{\bm{B}}_{12}\\
\underline{\bm{B}}_{21} & \underline{\bm{B}}_{22}\\
\end{psmallmatrix}$, where $\underline{\bm{B}}^{-1}_{22} = (\frac{1}{\tilde{n}_{r-1}}\underline{\bm{X}}_{r-1}^\top \underline{\bm{X}}_{r-1} + \frac{\epsilon}{N} \underline{\bm{X}}_{\text{full}}^\top \underline{\bm{X}}_{\text{full}}) - (1+\epsilon) \bar{\bm{X}}^*_{r-1}\bar{\bm{X}}^{*\top}_{r-1} =  \underline{\bm{S}}^*_{r-1}.$ This completes the proof.
\end{proof}

\subsubsection*{Proof of Theorem \ref{thm:retrieve}}
\begin{proof}
(a) We first consider the setting of a standard block design where $N = BG$ (i.e., $c=1$). The blocks are labelled $1,2,...,B$. Here, the SOM is constructed by stacking $B$ independent random permutations of the `chunk' $(1,2,...,G)$. We will show that the choices made by the treatment groups in the FSM follow the assignment mechanism of an RBD.

Consider the first randomized chunk of the SOM, i.e., a random permutation of $(1,2,...,G)$. At the first stage of this randomized chunk, the choosing treatment group aims to maximize $(1, \bm{X}^\top_i) (\underline{\tilde{\bm{X}}}^\top_{\text{full}} \underline{\tilde{\bm{X}}}_{\text{full}})^{-1} \begin{psmallmatrix}
       1\\
       \bm{X}_i
       \end{psmallmatrix}$. Note that we can write $\underline{\tilde{\bm{X}}}_{\text{full}}$ as $\underline{\tilde{\bm{X}}}_{\text{full}} = \begin{psmallmatrix}
       \underline{\bm{D}}\\
       \vdots \\
       \underline{\bm{D}}
       \end{psmallmatrix}$, where $\underline{\bm{D}}_{B \times B} = \begin{psmallmatrix}
       1 & 1 & 0 & ... & 0 & 0 \\
       1 & 0 & 1 & ... & 0 & 0 \\
       \vdots & \vdots & \vdots & ... & \vdots & \vdots \\ 
       1 & 0 & 0 & ... & 0 & 1 \\
       1 & 0 & 0 & ... & 0 & 0 \\
       \end{psmallmatrix}$. Now, consider a transformation of the rows of the design matrix given by $\vardbtilde{\bm{X}}_i = (\underline{\bm{D}}^\top)^{-1} \begin{psmallmatrix}
       1\\
       \bm{X}_i
       \end{psmallmatrix} $. The transformed design matrix is $\underline{\vardbtilde{\bm{X}}}_{\text{full}} =  \underline{\tilde{\bm{X}}}_{\text{full}} \underline{\bm{D}}^{-1} = \begin{psmallmatrix}
       \underline{\bm{I}}_B\\
       \vdots \\
       \underline{\bm{I}}_B
       \end{psmallmatrix}$. We note that the $\vardbtilde{\bm{X}}_i$s nothing but standard unit vectors. Now, 
\begin{align}
(1, \bm{X}^\top_i) (\underline{\tilde{\bm{X}}}^\top_{\text{full}} \underline{\tilde{\bm{X}}}_{\text{full}})^{-1} \begin{psmallmatrix}
       1\\
       \bm{X}_i
       \end{psmallmatrix} & = \vardbtilde{\bm{X}}^\top_i (\underline{\vardbtilde{\bm{X}}}^\top_{\text{full}} \underline{\vardbtilde{\bm{X}}}_{\text{full}})^{-1}  \vardbtilde{\bm{X}}_i.
\end{align}       
Therefore, the selection function remains the same under the above transformation. Now, $\vardbtilde{\bm{X}}^\top_i (\underline{\vardbtilde{\bm{X}}}^\top_{\text{full}} \underline{\vardbtilde{\bm{X}}}_{\text{full}})^{-1}  \vardbtilde{\bm{X}}_i = \frac{1}{G} \vardbtilde{\bm{X}}^\top_i \vardbtilde{\bm{X}}_i = \frac{1}{G}$ for all $i$, which essentially implies that the choosing group has no preference among the units for selection and hence chooses any one of the $N$ units randomly. Similarly, at the subsequent stages of this randomized chunk, the corresponding choosing groups select one of the remaining units randomly. 

Next, we consider the second randomized chunk of the SOM. Without loss of generality, suppose treatment 1 gets to choose first in this chunk. 
Also, without loss of generality, suppose that in its first choice, treatment 1 had selected a unit from block 1.
We claim that in this selection, treatment 1 will choose one of the remaining units randomly from any block other than block 1, which respects the assignment mechanism of an RBD. 

To prove the claim, we first consider the objective function at this stage.
Treatment 1 aims to maximize $(1, \bm{X}^\top_i) \Big(\frac{1}{\tilde{n}_{r-1}}\underline{\tilde{\bm{X}}}^\top_{r-1} \underline{\tilde{\bm{X}}}_{r-1} + \frac{\epsilon}{N} \underline{\tilde{\bm{X}}}^\top_{\text{full}} \underline{\tilde{\bm{X}}}_{\text{full}}  \Big)^{-1} \begin{psmallmatrix}
       1\\
       \bm{X}_i
       \end{psmallmatrix}$. Here, we denote the current stage by $r$. Using the same transformation as in the case of the first chunk, we can write the objective function as $\vardbtilde{\bm{X}}^\top_i \Big(\frac{1}{\tilde{n}_{r-1}}\underline{\vardbtilde{\bm{X}}}^\top_{r-1} \underline{\vardbtilde{\bm{X}}}_{r-1} + \frac{\epsilon}{N} \underline{\vardbtilde{\bm{X}}}^\top_{\text{full}} \underline{\vardbtilde{\bm{X}}}_{\text{full}}  \Big)^{-1} \vardbtilde{\bm{X}}_i$, where $\underline{\vardbtilde{\bm{X}}}_{r-1}  = \underline{\tilde{\bm{X}}}_{r-1} \underline{\bm{D}}^{-1}$. Since $\underline{\vardbtilde{\bm{X}}}^\top_{\text{full}} \underline{\vardbtilde{\bm{X}}}_{\text{full}}  = G \underline{\bm{I}}_B$, it is equivalent to maximize
       \begin{align}
          \vardbtilde{\bm{X}}^\top_i \Big(\underline{\bm{I}}_b +  \frac{B}{ \tilde{n}_{r-1}\epsilon G}\underline{\vardbtilde{\bm{X}}}^\top_{r-1} \underline{\vardbtilde{\bm{X}}}_{r-1}\Big )^{-1} \vardbtilde{\bm{X}}_i &=           \vardbtilde{\bm{X}}^\top_i \Big(\underline{\bm{I}}_b +  \delta \underline{\vardbtilde{\bm{X}}}^\top_{r-1} \underline{\vardbtilde{\bm{X}}}_{r-1}\Big )^{-1} \vardbtilde{\bm{X}}_i \label{eqA:3.3_0} \\
          & = \vardbtilde{\bm{X}}^\top_i \Big\{ \underline{\bm{I}}_b -  \delta \underline{\vardbtilde{\bm{X}}}^\top_{r-1}(\underline{\bm{I}}_{\tilde{n}_{r-1}} +  \delta \underline{\vardbtilde{\bm{X}}}_{r-1} \underline{\vardbtilde{\bm{X}}}^\top_{r-1})^{-1}\underline{\vardbtilde{\bm{X}}}_{r-1} \Big \} \vardbtilde{\bm{X}}_i. \label{eqA:3.3_1}
       \end{align}
Here, $\delta = \frac{B}{\tilde{n}_{r-1} \epsilon G}$. The final equality holds by the Woodbury matrix identity. Now, in this case, $\underline{\vardbtilde{\bm{X}}}_{r-1} = (1,0,...,0)$ (since treatment 1 has only selected one unit from block 1 up to this stage). So, the objective function in Equation \ref{eqA:3.3_1} equals $1 - \frac{\delta}{1+\delta} \vardbtilde{\bm{X}}^\top_i (\underline{\vardbtilde{\bm{X}}}^\top_{r-1} \underline{\vardbtilde{\bm{X}}}_{r-1})  \vardbtilde{\bm{X}}^\top_i$. Since $\delta>0$, it is equivalent to minimize $\vardbtilde{\bm{X}}^\top_i (\underline{\vardbtilde{\bm{X}}}^\top_{r-1} \underline{\vardbtilde{\bm{X}}}_{r-1})  \vardbtilde{\bm{X}}^\top_i = \vardbtilde{\bm{X}}^\top_i \begin{psmallmatrix}
1 & \bm{0}^\top_{1\times (B-1)}\\
\bm{0}_{(B-1) \times 1} & \underline{\bm{0}}_{(B-1)\times (B-1)} 
\end{psmallmatrix}\vardbtilde{\bm{X}}_i$, which takes the value $0$ for a unit in any block other than block 1 and $1$ for a unit in block $1$. This proves the claim for treatment 1. Moreover, by similar reasoning, the claim holds for all the other treatment groups in this randomized chunk.

Next, we consider a general randomized chunk of the SOM. Once again, without loss of generality, suppose treatment 1 gets to choose first in this chunk. Also, for simplicity of exposition and without loss of generality, suppose treatment 1 has already selected from blocks $1,2,...,b$, implying that $\tilde{n}_{r-1} = b$ and $\underline{\vardbtilde{\bm{X}}}_{r-1} = \begin{psmallmatrix}
\underline{\bm{I}}_b & \underline{\bm{0}}_{b \times (B-b)}
\end{psmallmatrix}$. This form of $\underline{\vardbtilde{\bm{X}}}_{r-1}$, along with Equation \ref{eqA:3.3_1} implies that it is equivalent to minimize $\vardbtilde{\bm{X}}^\top_i (\underline{\vardbtilde{\bm{X}}}^\top_{r-1} \underline{\vardbtilde{\bm{X}}}_{r-1})  \vardbtilde{\bm{X}}^\top_i = \vardbtilde{\bm{X}}^\top_i \begin{psmallmatrix}
\underline{\bm{I}}_{b} & \underline{\bm{0}}_{b\times (B-b)}\\
\underline{\bm{0}}^\top_{(B-b) \times b} & \underline{\bm{0}}_{(B-b)\times (B-b)} 
\end{psmallmatrix}\vardbtilde{\bm{X}}_i$, which is minimized for any unit $i$ belonging to the blocks $b+1,...,B$. This shows that at this stage, treatment 1 randomly chooses a unit from a block other than the blocks it has already chosen from. By similar reasoning, at subsequent stages of this randomized chunk, the choosing group follows the same selection strategy for their own group. This completes the proof of the theorem for the setting of a standard block design. 

We now prove the theorem for the general block design setting with $N = cBG$, $c>1$. The proof strategy is exactly the same as the $c=1$ setting. Here the SOM is generated by randomly permuting the chunk $(1,2,...,G)$ $B\times c$ times. Once the selections are completed for the the first $B$ chunks, the resulting assignment resembles that of a standard RBD (by the previous proof), where each treatment group randomly chooses exactly one unit from each block. For the $(B+1)$th chunk, suppose, without loss of generality, that treatment 1 gets to choose first. At this stage (denoted by stage $r$), treatment 1 tries to maximize,
\begin{align}
  (1, \bm{X}^\top_i) (\underline{\tilde{\bm{X}}}^\top_{r-1} \underline{\tilde{\bm{X}}}_{r-1})^{-1} \begin{psmallmatrix}
       1\\
       \bm{X}_i
       \end{psmallmatrix} & = \vardbtilde{\bm{X}}^\top_i (\underline{\vardbtilde{\bm{X}}}^\top_{r-1} \underline{\vardbtilde{\bm{X}}}_{r-1})^{-1}  \vardbtilde{\bm{X}}_i \nonumber \\ 
       & = \vardbtilde{\bm{X}}^\top_i  \vardbtilde{\bm{X}}_i = 1, 
\end{align}
where the penultimate equality holds since $\underline{\vardbtilde{\bm{X}}}_{r-1} = \underline{\bm{I}}_B$. Thus, similar to the first randomized chunk in the setting of $c=1$, treatment 1 (and the other treatments) randomly chooses one of the available units. 

Finally, we consider a general chunk. Without loss of generality, suppose treatment 1 gets to choose first in this chunk. We can write the corresponding transformed design matrix $\underline{\vardbtilde{\bm{X}}}_{r-1}$ as 
\begin{align}
    \underline{\vardbtilde{\bm{X}}}_{r-1} = \begin{psmallmatrix}
    &\underline{\bm{I}}_B \\
    &\underline{\bm{I}}_B \\
    &\vdots \\
    &\underline{\bm{I}}_B \\
    \underline{\bm{I}}_{b} & & \underline{\bm{0}}_{b\times (B-b)}
    \end{psmallmatrix}.
\end{align}
Here, without loss of generality, we have assumed that treatment 1 has chosen $c_0+b$ times from the first $b$ blocks and $c_0$ times from the remaining blocks, where $c_0<c$. This implies that treatment 1 aims to maximize.
\begin{align}
\vardbtilde{\bm{X}}^\top_i (\underline{\vardbtilde{\bm{X}}}^\top_{r-1} \underline{\vardbtilde{\bm{X}}}_{r-1})^{-1}  \vardbtilde{\bm{X}}_i & = \vardbtilde{\bm{X}}^\top_i \Big\{ c_0 \underline{\bm{I}}_{B} + \begin{psmallmatrix}
\underline{\bm{I}}_{b} \\
\underline{\bm{0}}^\top_{(B-b) \times b}
\end{psmallmatrix} \begin{psmallmatrix}
\underline{\bm{I}}_{b} & \underline{\bm{0}}_{b \times (B-b)}
\end{psmallmatrix} \Big\}^{-1}     \vardbtilde{\bm{X}}_i,    
\end{align}
which has the same form as the objective function in Equation \ref{eqA:3.3_0} in the $c=1$ setting. Thus, following similar arguments as in the $c=1$ setting, we conclude that at this stage, treatment 1 selects a unit randomly from blocks $b+1,...,B$, which conforms to the assignment mechanism of an RBD. Also, at subsequent stages of the randomized chunk, the choosing group follows the same selection strategy for their own group. This completes the proof of the theorem.

\noindent (b) With two groups of equal sizes, the SOM consists of successive random permutations of the `chunk' $(1,2)$. By Theorem \ref{thm:mahal}, for the first pair of stages of selection, the objective function (to maximize) is given by  
\begin{align}
(\bm{X}_i - \bar{\bm{X}}_{\text{full}})^\top (\underline{\bm{S}}_{\text{full}})^{-1} (\bm{X}_i - \bar{\bm{X}}_{\text{full}}).    \label{eqA:3.3_2}   
\end{align}
Under the assumption of identical twins and continuous data generating distributions, with probability 1, there are exactly two units (one being a twin of the other), whose common covariate value $\bm{X}^{(1)}$ (say) maximizes the objective function in Equation \ref{eqA:3.3_2}. Therefore, the choosing group at the first stage selects one of these two identical twins randomly, and in the next stage, the other treatment selects the remaining twin. This respects the assignment mechanism of a matched-pair design.

Consider the next pairs of stages. The objective function of the choosing treatment group is given by:
\begin{align}
    &(\bm{X}_i - \frac{1}{1+\epsilon}\bm{X}^{(1)})^\top \Big\{\bm{X}^{(1)}\bm{X}^{(1)\top} + \frac{\epsilon}{N}\underline{\bm{X}}^\top_{\text{full}}\underline{\bm{X}}_{\text{full}} - (1+\epsilon) \bm{X}^{(1)}\bm{X}^{(1)\top}   \Big\}^{-1}  (\bm{X}_i - \frac{1}{1+\epsilon}\bm{X}^{(1)})  \label{eqA:3.3_4}
\end{align}
Similar to the previous case, here also we have (with probability 1) exactly two units, one being a twin of the other, whose common covariate value $\bm{X}^{(2)}$ maximizes the objective function in Equation \ref{eqA:3.3_4}. Thus, the choosing group at the first stage of this pair selects one of these two twins randomly, and in the next stage, the other treatment chooses the remaining twin. 
Proceeding in this manner, it follows that, at the end of the selection process, each treatment group ends up selecting one twin randomly from $\frac{N}{2}$ identical twins, which is equivalent to a matched-pair design. This completes the proof.

\end{proof}

\subsubsection*{Proof of Proposition \ref{prop:aopt}}
\begin{proof}
The A-optimal selection function aims to minimize
\begin{align}
    &\text{trace }\Big\{\underline{\bm{T}}(\underline{\tilde{\bm{X}}}^\top_{r,i} \underline{\tilde{\bm{X}}}_{r,i})^{-1} \Big\}\\
    &= \text{trace}\Big[\underline{\bm{T}} \{\underline{\tilde{\bm{X}}}^\top_{r-1} \underline{\tilde{\bm{X}}}_{r-1} + \begin{psmallmatrix}
    1\\
    \bm{X}_i
    \end{psmallmatrix} (1, \bm{X}^\top_i)\}^{-1} \Big] \nonumber\\
    & = \text{trace} \Big\{\underline{\bm{T}} (\underline{\tilde{\bm{X}}}^\top_{r-1} \underline{\tilde{\bm{X}}}_{r-1})^{-1} - \underline{\bm{T}} \frac{(\underline{\tilde{\bm{X}}}^\top_{r-1} \underline{\tilde{\bm{X}}}_{r-1})^{-1} \begin{psmallmatrix}
    1\\
    \bm{X}_i
    \end{psmallmatrix} (1, \bm{X}^\top_i)    (\underline{\tilde{\bm{X}}}^\top_{r-1} \underline{\tilde{\bm{X}}}_{r-1})^{-1}}{1+ (1, \bm{X}^\top_i) (\underline{\tilde{\bm{X}}}^\top_{r-1} \underline{\tilde{\bm{X}}}_{r-1})^{-1} \begin{psmallmatrix}
    1\\
    \bm{X}_i
    \end{psmallmatrix} }    \Big\}  \nonumber\\
    & = \text{trace}\{\underline{\bm{T}} (\underline{\tilde{\bm{X}}}^\top_{r-1} \underline{\tilde{\bm{X}}}_{r-1})^{-1}\} - \text{trace}\Big\{\underline{\bm{T}} \frac{(\underline{\tilde{\bm{X}}}^\top_{r-1} \underline{\tilde{\bm{X}}}_{r-1})^{-1} \begin{psmallmatrix}
    1\\
    \bm{X}_i
    \end{psmallmatrix} (1, \bm{X}^\top_i)    (\underline{\tilde{\bm{X}}}^\top_{r-1} \underline{\tilde{\bm{X}}}_{r-1})^{-1}}{1+ (1, \bm{X}^\top_i) (\underline{\tilde{\bm{X}}}^\top_{r-1} \underline{\tilde{\bm{X}}}_{r-1})^{-1} \begin{psmallmatrix}
    1\\
    \bm{X}_i
    \end{psmallmatrix} }    \Big\} \nonumber\\
    & = \text{trace}\{\underline{\bm{T}} (\underline{\tilde{\bm{X}}}^\top_{r-1} \underline{\tilde{\bm{X}}}_{r-1})^{-1}\} - \text{trace}\Big\{  \frac{(1, \bm{X}^\top_i) (\underline{\tilde{\bm{X}}}^\top_{r-1} \underline{\tilde{\bm{X}}}_{r-1})^{-1} \underline{\bm{T}}(\underline{\tilde{\bm{X}}}^\top_{r-1} \underline{\tilde{\bm{X}}}_{r-1})^{-1} \begin{psmallmatrix}
    1\\
    \bm{X}_i
    \end{psmallmatrix}}{1+ (1, \bm{X}^\top_i) (\underline{\tilde{\bm{X}}}^\top_{r-1} \underline{\tilde{\bm{X}}}_{r-1})^{-1} \begin{psmallmatrix}
    1\\
    \bm{X}_i
    \end{psmallmatrix} } \Big\}    \label{eqA:3.4_1}
\end{align}
\end{proof}
Here the second equality holds due to the Sherman-Morrison-Woodbury formula, the third and fourth equality hold due to the linearity and cyclicality of $\text{trace}(\cdot)$, respectively. Equation \ref{eqA:3.4_1} shows that it is equivalent to maximize $\text{trace}\Big\{  \frac{(1, \bm{X}^\top_i) (\underline{\tilde{\bm{X}}}^\top_{r-1} \underline{\tilde{\bm{X}}}_{r-1})^{-1} \underline{\bm{T}}(\underline{\tilde{\bm{X}}}^\top_{r-1} \underline{\tilde{\bm{X}}}_{r-1})^{-1} \begin{psmallmatrix}
    1\\
    \bm{X}_i
    \end{psmallmatrix}}{1+ (1, \bm{X}^\top_i) (\underline{\tilde{\bm{X}}}^\top_{r-1} \underline{\tilde{\bm{X}}}_{r-1})^{-1} \begin{psmallmatrix}
    1\\
    \bm{X}_i
    \end{psmallmatrix} } \Big\}$. This completes the proof.

\begin{proposition}\normalfont
For $n_1 = n_2 = ... = n_G$, the difference-in-means statistic between treatment $g'$ and $g''$ is unbiased for $\text{SATE}_{g',g''}$. 
\label{fsm_prop:unbiased}
\end{proposition}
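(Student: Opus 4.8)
The plan is to show that, under the FSM with equal group sizes $n_1=\dots=n_G=n$, each unit $i$ has the same marginal probability $1/G$ of being assigned to any given treatment group $g$, and that this suffices for unbiasedness of the difference-in-means statistic for $\text{SATE}_{g',g''}$. Writing $\widehat{\text{SATE}}_{g',g''} = \frac{1}{n}\sum_{i:Z_i=g'}Y_i(g') - \frac{1}{n}\sum_{i:Z_i=g''}Y_i(g'') = \frac{1}{n}\sum_{i=1}^N \mathbbm{1}(Z_i=g')Y_i(g') - \frac{1}{n}\sum_{i=1}^N\mathbbm{1}(Z_i=g'')Y_i(g'')$, linearity of expectation gives $\mathbb{E}\{\widehat{\text{SATE}}_{g',g''}\} = \frac{1}{n}\sum_i \Pr(Z_i=g')Y_i(g') - \frac{1}{n}\sum_i\Pr(Z_i=g'')Y_i(g'')$, where the potential outcomes are fixed constants under the randomization-based framework. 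So if $\Pr(Z_i=g)=1/G$ for every $i$ and every $g$, each sum collapses to $\frac{1}{n}\cdot\frac{1}{G}\sum_i Y_i(g) = \frac{N}{nG}\cdot\frac{1}{N}\sum_i Y_i(g) = \frac{1}{N}\sum_i Y_i(g)$ (using $nG=N$), and the difference is exactly $\text{SATE}_{g',g''}$.

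The substantive step is therefore establishing the exchangeability/symmetry claim: for equal group sizes, $\Pr(Z_i = g)$ does not depend on $g$. First I would argue that the joint distribution of $\bm{Z}$ under the FSM is invariant under permutations of the treatment-group labels $\{1,\dots,G\}$. This has two ingredients. (i) The SOM-generating mechanism (SCOMARS, or its multi-group extension for equal group sizes, which as noted in Section \ref{sec_components} reduces to stacking independent uniform random permutations of $(1,\dots,G)$) treats the $G$ labels symmetrically: relabeling the treatments and then generating the SOM has the same distribution as generating the SOM and then relabeling. (ii) The D-optimal selection function depends only on the covariates of the units already held by the \emph{currently choosing} group and the available pool — it never references the group's label — so the unit chosen at each stage is a label-equivariant function of the SOM. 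Composing (i) and (ii), if $\pi$ is any permutation of $\{1,\dots,G\}$, then $\pi(\bm{Z})$ has the same distribution as $\bm{Z}$. Fixing a unit $i$, applying a transposition swapping $g$ and $g'$ gives $\Pr(Z_i=g)=\Pr(Z_i=g')$; since the $G$ events $\{Z_i=g\}_{g=1}^G$ partition the sample space, each has probability $1/G$.

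I would then assemble the two pieces: the symmetry argument yields $\Pr(Z_i=g)=1/G$ for all $i,g$, and the linearity-of-expectation computation above converts this into $\mathbb{E}\{\widehat{\text{SATE}}_{g',g''}\}=\text{SATE}_{g',g''}$. The main obstacle is making the label-equivariance of the selection function fully rigorous — in particular handling the non-invertible/Ridge-augmented case of the D-optimal criterion (the ridge augmentation uses $\underline{\bm{X}}_{\text{full}}$ and the group's own accumulated design matrix, both of which are label-free, so equivariance still holds) and handling random tie-breaking, which must be done in a label-symmetric way (uniformly at random among tied units, independently of labels) for the argument to go through. Once tie-breaking is specified symmetrically and the SOM mechanism is confirmed to be label-symmetric for equal group sizes, the permutation-invariance of $\bm{Z}$ follows and the rest is immediate.
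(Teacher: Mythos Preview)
Your proposal is correct and follows essentially the same approach as the paper's proof: establish $\Pr(Z_i=g)=1/G$ by symmetry, then apply linearity of expectation to conclude unbiasedness. The paper's proof simply asserts the symmetry in one line (``by symmetry, every unit has an equal chance of belonging to one of the $G$ treatment groups''), whereas you spell out more carefully why the SOM mechanism and selection function are label-equivariant; this extra care is a welcome elaboration but not a different argument.
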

\begin{proof}
With equal-sized groups, by symmetry, every unit has an equal chance of belonging to one of the $G$ treatment groups. That is, $P(Z_i = g) = \frac{1}{G}$ for all $g \in \{1,2,...,G\}$. Therefore,
\begin{align}
    \mathbb{E}\Big\{\frac{1}{n_g}\sum_{i:Z_i = g}Y^{\text{obs}}_i|\bm{Y}(g)\Big\} &=     \mathbb{E}\Big\{\frac{G}{N}\sum_{i = 1}^{N} \mathbbm{1}(Z_i = g)Y_i(g)|\bm{Y}(g)\Big\} \nonumber \\
    & = \frac{G}{N}\sum_{i = 1}^{N} P(Z_i = g)Y_i(g) \nonumber \\
    & = \frac{1}{N}\sum_{i = 1}^{N} Y_i(g). \label{eq:expectation}
\end{align}
The proposition follows from Equation \ref{eq:expectation} applied to treatment groups $g'$ and $g''$.
\end{proof}
%%%%%%%%%%%%%%%%%%%%%
%%%%%%%%%%%%%%%%%%%%%%

\subsection{Properties of D-optimal selection function}
\label{sec_properties_dopt}
\subsection{Affine invariance and covariate balance}
\begin{theorem}\normalfont
\begin{enumerate}[label=(\alph*)]
    \item The FSM with the D-optimal selection function is invariant under affine transformations of the covariate vector. 

    \item For continuous, symmetrically distributed covariates and two groups of equal size, the FSM with the D-optimal selection function almost surely produces exact mean-balance on all even transformations of the centered covariate vector.
\end{enumerate}
\label{thm:dopt_properties}
\end{theorem}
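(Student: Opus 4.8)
For part (a), the plan is to check that at every selection stage the D-optimal criterion depends on the covariates only through an affinely invariant quantity, while the rest of the mechanism ignores the covariates entirely. Given an invertible affine map $\bm{X}_i \mapsto \bm{A}\bm{X}_i + \bm{b}$, set $\bm{M} = \begin{psmallmatrix} 1 & \bm{0}^\top \\ \bm{b} & \bm{A} \end{psmallmatrix}$, so that each augmented row transforms by $(1,\bm{X}_i^\top)\mapsto (1,\bm{X}_i^\top)\bm{M}^\top$ and hence $\underline{\tilde{\bm{X}}}_{r,i}\mapsto \underline{\tilde{\bm{X}}}_{r,i}\bm{M}^\top$. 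In the nonsingular case $\det(\underline{\tilde{\bm{X}}}_{r,i}^\top\underline{\tilde{\bm{X}}}_{r,i})$ is multiplied by the constant $(\det\bm{M})^2$, so the $\argmax$ defining the D-optimal choice is unchanged; in the singular case the ridge-augmented objective is the quadratic form $(1,\bm{X}_i^\top)\bigl(\tfrac{1}{\tilde n_{r-1}}\underline{\tilde{\bm{X}}}_{r-1}^\top\underline{\tilde{\bm{X}}}_{r-1} + \tfrac{\epsilon}{N}\underline{\tilde{\bm{X}}}_{\text{full}}^\top\underline{\tilde{\bm{X}}}_{\text{full}}\bigr)^{-1}(1,\bm{X}_i^\top)^\top$ from the proof of Theorem \ref{thm:mahal}, which is likewise invariant under $\underline{\tilde{\bm{X}}}\mapsto\underline{\tilde{\bm{X}}}\bm{M}^\top$ since $\bm{M}$ is invertible. (Equivalently, in the notation of Theorem \ref{thm:mahal}, $\bar{\bm{X}}^*_{r-1}\mapsto\bm{A}\bar{\bm{X}}^*_{r-1}+\bm{b}$ and $\underline{\bm{S}}^*_{r-1}\mapsto\bm{A}\underline{\bm{S}}^*_{r-1}\bm{A}^\top$, so the Mahalanobis-type distance is unchanged.) Since the SOM is generated by SCOMARS without reference to the covariates and ties in the objective are broken uniformly at random, the full randomized assignment mechanism is unchanged, up to relabeling, under the affine map.

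For part (b), I would first use part (a) to reduce to centered covariates, $\bar{\bm{X}}_{\text{full}}=\bm{0}$, with the symmetry hypothesis now reading that the centered sample $\{\bm{X}_i\}$ is closed under $\bm{x}\mapsto-\bm{x}$; continuity of the covariates guarantees that a.s.\ the $\bm{X}_i$ are distinct and nonzero, so this reflection is a fixed-point-free involution pairing the units into $N/2$ mirror pairs $\{\bm{X},-\bm{X}\}$. The crux is to show that the FSM a.s.\ places the two members of each mirror pair in different treatment groups. Recalling that for two equal groups the SOM is a sequence of independent random permutations of $(1,2)$, selection proceeds in rounds of two consecutive picks, and I would induct on rounds with the invariant: after round $t$, the unselected set is symmetric about $\bm{0}$, and the covariate set held by group $1$ equals the negation of the covariate set held by group $2$.

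The inductive step rests on the observation that mirror-image group configurations yield, in each of the three regimes of Theorem \ref{thm:mahal} (empty group, nonsingular, singular/ridge), the \emph{same} matrix $\underline{\bm{S}}^*_{r-1}$ and \emph{opposite} centers $\bar{\bm{X}}^*_{r-1}$ for the two groups; this follows from $\bar{\bm{X}}_2=-\bar{\bm{X}}_1$, $\sum_{j\in 2}\bm{X}_j\bm{X}_j^\top=\sum_{j\in 1}\bm{X}_j\bm{X}_j^\top$, $\bar{\bm{X}}_{\text{full}}=\bm{0}$, and the invariance of $\sum_i\bm{X}_i\bm{X}_i^\top$, plugged into the formulas of Theorem \ref{thm:mahal} (and both groups leave the singular regime at the same round, since mirror configurations have equal rank). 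Hence the objective of whichever group picks second in a round equals the first picker's objective precomposed with $\bm{X}\mapsto-\bm{X}$. If the first picker takes $\bm{X}^{(t+1)}$, a maximizer of its objective over the symmetric unselected set $U$ (a.s.\ unique, or, when the objective happens to be symmetric, attained exactly on one mirror pair), then the reflected objective over $U\setminus\{\bm{X}^{(t+1)}\}$ is uniquely maximized at $-\bm{X}^{(t+1)}$, because its maximum over $U\setminus\{\bm{X}^{(t+1)}\}$ equals the maximum of the original objective over the symmetric set $U\setminus\{-\bm{X}^{(t+1)}\}$, which is attained at $\bm{X}^{(t+1)}$. So round $t+1$ splits the pair $\{\bm{X}^{(t+1)},-\bm{X}^{(t+1)}\}$, the remaining unselected set stays symmetric, and the configurations stay mirror images; the base case $t=0$ is the same argument with $\bar{\bm{X}}^*=\bm{0}$, $\underline{\bm{S}}^*=\underline{\bm{S}}_{\text{full}}$. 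Once every mirror pair is split, each group holds exactly one member of each pair; for an even transformation $\bm{f}$ of the centered covariates, $\bm{f}(\bm{X}_i)=\bm{f}(-\bm{X}_i)$, so each pair contributes equally to $\tfrac{2}{N}\sum_{Z_i=1}\bm{f}(\bm{X}_i)$ and $\tfrac{2}{N}\sum_{Z_i=2}\bm{f}(\bm{X}_i)$, and exact mean-balance follows.

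I expect the main obstacle to be part (b): making the symmetry hypothesis precise and spelling out exactly where continuity enters the ``almost surely'' (distinctness and nonvanishing of the $\bm{X}_i$, and absence of accidental ties in the stage-wise $\argmax$), together with carrying the ``mirror-image configuration $\Rightarrow$ equal $\underline{\bm{S}}^*$, opposite $\bar{\bm{X}}^*$'' bookkeeping uniformly through all three regimes of Theorem \ref{thm:mahal}.
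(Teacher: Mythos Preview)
Your proposal is correct and follows essentially the same route as the paper. For part (a) you work directly with the determinant (and the ridge-augmented quadratic form) via the block matrix $\bm{M}$, whereas the paper passes through the Mahalanobis characterization of Theorem \ref{thm:mahal}; both amount to the same affine-invariance computation. For part (b) your induction on rounds, with the invariant ``mirror-image configurations give equal $\underline{\bm{S}}^*$ and opposite $\bar{\bm{X}}^*$,'' is exactly the paper's argument stated more explicitly; the paper writes out the first two rounds and then says ``proceeding in this manner,'' while you carry the bookkeeping through all three regimes and are more careful about where continuity is used.
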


\subsubsection*{Proof of Theorem \ref{thm:dopt_properties}}
\begin{proof}
(a) We consider the case where $\tilde{n}_{r-1} \geq 1$ and $\underline{\tilde{\bm{X}}}^\top_{r-1} \underline{\tilde{\bm{X}}}_{r-1}$ is invertible. The proofs for the other two cases are similar. By Theorem \ref{thm:mahal}, in this case, the chosen unit $i'$ satisfies,
\begin{equation}
    i' \in \argmax\limits_{i \in \mathcal{R}_{r-1}} (\bm{X}_i - \bar{\bm{X}}_{r-1})^\top (\underline{\bm{S}}_{r-1})^{-1} (\bm{X}_i - \bar{\bm{X}}_{r-1}).
\end{equation}
Consider an affine transformation of the covariate $\bm{X}$ given by $\bm{U} = \underline{\bm{A}}\bm{X} + \bm{b}$, where $\underline{\bm{A}}$ is a $k \times k$ invertible matrix and $\bm{b}$ is a vector of dimension k. Let the corresponding values of $\bar{\bm{X}}_{r-1}$ and $\underline{\bm{S}}_{r-1}$ be $\bar{\bm{U}}_{r-1}$ and $\underline{\bm{S}}_{U,r-1}$, respectively.
We observe that,
\begin{align}
    (\bm{U}_i - \bar{\bm{U}}_{r-1})^\top (\underline{\bm{S}}_{U,r-1})^{-1} (\bm{U}_i - \bar{\bm{U}}_{r-1}) & = \{\underline{\bm{A}}(\bm{X}_i - \bar{\bm{X}}_{r-1})\}^\top (\underline{\bm{A}}\underline{\bm{S}}_{r-1} \underline{\bm{A}}^\top)^{-1} \underline{\bm{A}}(\bm{X}_i - \bar{\bm{X}}_{r-1}) \nonumber \\
    & = (\bm{X}_i - \bar{\bm{X}}_{r-1})^\top (\underline{\bm{S}}_{r-1})^{-1} (\bm{X}_i - \bar{\bm{X}}_{r-1}).
\end{align}
This shows that the D-optimal selection function remains unchanged under affine transformations and hence, FSM with the D-optimal selection function is affine invariant.

\noindent (b) The in-sample symmetry of the data essentially implies that if $\bm{X}$ belongs to the sample, then $-\bm{X}$ also belongs to the sample. Moreover, by the assumption of a continuous data generating distribution, with probability 1, the covariate values are different up to reflection. Now, consider an even transformation $g(\cdot)$, i.e., $g(-\bm{X}) = g(\bm{X})$.
With two groups of equal sizes, the SOM consists of successive random permutations of the `chunk' $(1,2)$. By Theorem \ref{thm:mahal}, for the first pair of stages of selection, the objective function (to maximize) is given by  
\begin{align}
(\bm{X}_i - \bar{\bm{X}}_{\text{full}})^\top (\underline{\bm{S}}_{\text{full}})^{-1} (\bm{X}_i - \bar{\bm{X}}_{\text{full}}) & =  \bm{X}^\top_i (\underline{\bm{S}}_{\text{full}})^{-1} \bm{X}_i. \label{eqA:3.2_2}   
\end{align}
It follows that, if a unit in the sample with covariate $\bm{X}^{(1)}$ maximizes the objective function in Equation \ref{eqA:3.2_2}, then so does the unit with covariate $-\bm{X}^{(1)}$. Moreover, due to the continuous data generating distribution, with probability 1, these are the only two units that maximize this objective function. Therefore, if treatment 1 selects the unit with covariate $\bm{X}^{(1)}$, treatment 2 selects the unit with covariate $-\bm{X}^{(1)}$, and vice-versa. This preserves exact balance on $g(\bm{X})$.

Now, consider the next pair of stages. Without loss of generality, suppose treatment 1 had chosen a unit with covariate $\bm{X}^{(1)}$ and treatment 2 had chosen a unit with covariate $-\bm{X}^{(1)}$ in their respective previous choices. Also, without loss of generality, assume that in this pair of stages, treatment 1 gets to choose first. By Theorem \ref{thm:mahal}, treatment 1 aims to maximize, 
\begin{align}
    &(\bm{X}_i - \bar{\bm{X}}^*_{r-1})^\top (\underline{\bm{S}}^*_{r-1})^{-1} (\bm{X}_i - \bar{\bm{X}}_{r-1}^*) \nonumber\\
    & = \{\bm{X}_i - \frac{1}{1+\epsilon}\bm{X}^{(1)}\}^\top \Big\{\bm{X}^{(1)}\bm{X}^{(1)\top} + \frac{\epsilon}{N}\underline{\bm{X}}^\top_{\text{full}}\underline{\bm{X}}_{\text{full}} - (1+\epsilon) \bm{X}^{(1)}\bm{X}^{(1)\top}   \Big\}^{-1}  \{\bm{X}_i - \frac{1}{1+\epsilon}\bm{X}^{(1)}\}. \label{eqA:3.2_3}    
\end{align}
Also, during treatment 2's turn in this pair of stages, it tries to maximize
\begin{align}
    &(\bm{X}_i + \frac{1}{1+\epsilon}\bm{X}^{(1)})^\top \Big\{\bm{X}^{(1)}\bm{X}^{(1)\top} + \frac{\epsilon}{N}\underline{\bm{X}}^\top_{\text{full}}\underline{\bm{X}}_{\text{full}} - (1+\epsilon) \bm{X}^{(1)}\bm{X}^{(1)\top}   \Big\}^{-1}  (\bm{X}_i + \frac{1}{1+\epsilon}\bm{X}^{(1)}) \nonumber \\
    & = \{(-\bm{X}_i) - \frac{1}{1+\epsilon}\bm{X}^{(1)}\}^\top \Big\{\bm{X}^{(1)}\bm{X}^{(1)\top} + \frac{\epsilon}{N}\underline{\bm{X}}^\top_{\text{full}}\underline{\bm{X}}_{\text{full}} - (1+\epsilon) \bm{X}^{(1)}\bm{X}^{(1)\top}   \Big\}^{-1}  \{(-\bm{X}_i) - \frac{1}{1+\epsilon}\bm{X}^{(1)}\}. \label{eqA:3.2_4}
\end{align}
Equations \ref{eqA:3.2_3} and \ref{eqA:3.2_4} imply that if treatment 1 chooses a unit with covariate value $\bm{X}^{(2)}$, then with probability 1, treatment 2 chooses the unit with covariate value $-\bm{X}^{(2)}$, and vice versa. This shows that, at the end of the second pair of stages in the SOM, exact mean balance on $g(\bm{X})$ is preserved. Proceeding in this manner it follows that, at the end of the selection process, with probability 1, both the treatment groups will have exact balance on $g(\bm{X})$. This completes the proof. 

\end{proof}

It follows from Theorem \ref{thm:dopt_properties}(a) that, for any SOM, the choices made by each treatment group remain unchanged even if the covariate vectors are transformed via an affine transformation (e.g., changing the units of measurement of the covariates).
Therefore, the FSM with the D-optimal selection function self-standardizes the covariates. 
In addition, if the covariate vector is symmetrically distributed in the sample, then by Theorem \ref{thm:dopt_properties}(b), the FSM exactly balances even transformations such as the second, fourth order moments, and the pairwise products of the centerd covariates. An implication of Theorem \ref{thm:dopt_properties}(b) is that, for covariates drawn from symmetric continuous distributions (such as the Normal, t, and Laplace distributions), the FSM tends to balance all these transformations due to the approximate symmetry of the covariates in the sample. The choice of the D-optimal selection function is thus robust in the sense that it allows the FSM to balance a family of transformations of the covariate vector by design, without explicitly including them in the assumed linear model nor requiring the specification of tuning parameters.

\subsubsection{Connection to A-optimality}
\label{sec_connection}

The original FSM used a criterion based on A-optimality as the selection function (see \citealt{morris1979finite}). In this section, we compare the A-and D-optimal selection functions. The A-optimal selection function requires prespecifying a \textit{policy matrix} $\underline{\bm{P}}_{p \times (k+1)}$ and a corresponding vector of \textit{policy weights} $\bm{w}_{p\times 1}$. Here, $\underline{\bm{P}}$ transforms the original vector of regression coefficients to a vector of $p$ linear combinations that are of policy interest, and $\bm{w}$ assigns weights to each combination according to their importance. 
If treatment 1 gets to choose at the $r$th stage, then this criterion selects the unit that minimizes the resulting $\text{trace}\Big\{\underline{\bm{T}} (\underline{\tilde{\bm{X}}}^\top_{r,i} \underline{\tilde{\bm{X}}}_{r,i})^{-1}\Big\}$, where $\underline{\bm{T}} = \underline{\bm{P}}^\top \text{diag}(\bm{w}) \underline{\bm{P}}$.
Proposition \ref{prop:aopt} shows an equivalent characterization of the A-optimal selection function.
\begin{proposition} \normalfont
Let treatment 1 be the choosing group at the $r$th stage. Assume that $\tilde{n}_{r-1} \geq 1$ and $\underline{\tilde{\bm{X}}}^\top_{r-1} \underline{\tilde{\bm{X}}}_{r-1}$ is invertible. The A-optimal selection function chooses unit $i'$ with covariate vector $\bm{X}_{i'} \in \mathbb{R}^k$, where 
%\begin{equation}
 $     i' \in \argmax\limits_{i \in \mathcal{R}_{r-1}}\frac{(1, \bm{X}^\top_i) (\underline{\tilde{\bm{X}}}^\top_{r-1} \underline{\tilde{\bm{X}}}_{r-1})^{-1} \underline{\bm{T}} (\underline{\tilde{\bm{X}}}^\top_{r-1} \underline{\tilde{\bm{X}}}_{r-1})^{-1}  \begin{psmallmatrix}
       1\\
       \bm{X}_i
       \end{psmallmatrix}}{1+(1, \bm{X}^\top_i) (\underline{\tilde{\bm{X}}}^\top_{r-1} \underline{\tilde{\bm{X}}}_{r-1})^{-1}  \begin{psmallmatrix}
       1\\
       \bm{X}_i
       \end{psmallmatrix}}.$
  %  \end{equation}

\label{prop:aopt}
\end{proposition}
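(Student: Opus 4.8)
The plan is to reduce the A-optimal criterion, which is stated as the trace of a matrix inverse, to the scalar ratio in the claim by means of a rank-one update together with the linearity and cyclicity of the trace; the argument runs parallel to the proof of Lemma \ref{lemma:dopt}. First I would note that if treatment 1 tentatively selects unit $i \in \mathcal{R}_{r-1}$, then $\underline{\tilde{\bm{X}}}^\top_{r,i}\underline{\tilde{\bm{X}}}_{r,i} = \underline{\tilde{\bm{X}}}^\top_{r-1}\underline{\tilde{\bm{X}}}_{r-1} + \begin{psmallmatrix}1\\\bm{X}_i\end{psmallmatrix}(1,\bm{X}^\top_i)$, a rank-one perturbation of $\underline{\tilde{\bm{X}}}^\top_{r-1}\underline{\tilde{\bm{X}}}_{r-1}$, which is invertible (indeed positive definite) by hypothesis. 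The Sherman--Morrison--Woodbury formula then gives $(\underline{\tilde{\bm{X}}}^\top_{r,i}\underline{\tilde{\bm{X}}}_{r,i})^{-1} = (\underline{\tilde{\bm{X}}}^\top_{r-1}\underline{\tilde{\bm{X}}}_{r-1})^{-1} - \frac{(\underline{\tilde{\bm{X}}}^\top_{r-1}\underline{\tilde{\bm{X}}}_{r-1})^{-1}\begin{psmallmatrix}1\\\bm{X}_i\end{psmallmatrix}(1,\bm{X}^\top_i)(\underline{\tilde{\bm{X}}}^\top_{r-1}\underline{\tilde{\bm{X}}}_{r-1})^{-1}}{1+(1,\bm{X}^\top_i)(\underline{\tilde{\bm{X}}}^\top_{r-1}\underline{\tilde{\bm{X}}}_{r-1})^{-1}\begin{psmallmatrix}1\\\bm{X}_i\end{psmallmatrix}}$, the denominator being strictly positive since $(\underline{\tilde{\bm{X}}}^\top_{r-1}\underline{\tilde{\bm{X}}}_{r-1})^{-1}$ is positive definite.

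Next I would left-multiply by $\underline{\bm{T}} = \underline{\bm{P}}^\top\,\text{diag}(\bm{w})\,\underline{\bm{P}}$ and take the trace. By linearity of the trace, $\text{trace}\{\underline{\bm{T}}(\underline{\tilde{\bm{X}}}^\top_{r,i}\underline{\tilde{\bm{X}}}_{r,i})^{-1}\}$ equals $\text{trace}\{\underline{\bm{T}}(\underline{\tilde{\bm{X}}}^\top_{r-1}\underline{\tilde{\bm{X}}}_{r-1})^{-1}\}$ minus the trace of the rank-one correction; the leading term does not depend on $i$. Applying the cyclic property of the trace to the correction term moves the vector factors to the outside, and since the resulting quantity is $1\times 1$ it equals its own trace, so the correction term collapses to the scalar $\frac{(1,\bm{X}^\top_i)(\underline{\tilde{\bm{X}}}^\top_{r-1}\underline{\tilde{\bm{X}}}_{r-1})^{-1}\underline{\bm{T}}(\underline{\tilde{\bm{X}}}^\top_{r-1}\underline{\tilde{\bm{X}}}_{r-1})^{-1}\begin{psmallmatrix}1\\\bm{X}_i\end{psmallmatrix}}{1+(1,\bm{X}^\top_i)(\underline{\tilde{\bm{X}}}^\top_{r-1}\underline{\tilde{\bm{X}}}_{r-1})^{-1}\begin{psmallmatrix}1\\\bm{X}_i\end{psmallmatrix}}$. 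Since the A-optimal selection function minimizes $\text{trace}\{\underline{\bm{T}}(\underline{\tilde{\bm{X}}}^\top_{r,i}\underline{\tilde{\bm{X}}}_{r,i})^{-1}\}$ over $i \in \mathcal{R}_{r-1}$ and the leading term is $i$-independent, this minimization is equivalent to maximizing the asserted ratio, which is exactly the claimed characterization.

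There is no genuinely hard step here; the computation is routine and essentially guided. The points that warrant care are (i) checking that the Sherman--Morrison denominator does not vanish, which follows from positive definiteness of $\underline{\tilde{\bm{X}}}^\top_{r-1}\underline{\tilde{\bm{X}}}_{r-1}$, and (ii) making explicit that dropping the $i$-independent term $\text{trace}\{\underline{\bm{T}}(\underline{\tilde{\bm{X}}}^\top_{r-1}\underline{\tilde{\bm{X}}}_{r-1})^{-1}\}$ and converting the minimization into a maximization is legitimate. I would also remark that, in contrast to Lemma \ref{lemma:dopt}, the ratio here is not a monotone function of the single quadratic form $(1,\bm{X}^\top_i)(\underline{\tilde{\bm{X}}}^\top_{r-1}\underline{\tilde{\bm{X}}}_{r-1})^{-1}\begin{psmallmatrix}1\\\bm{X}_i\end{psmallmatrix}$ appearing in the D-optimal criterion, so the A- and D-optimal selection functions generally differ.
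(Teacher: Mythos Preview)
Your proposal is correct and follows essentially the same route as the paper: apply the Sherman--Morrison--Woodbury formula to the rank-one update $\underline{\tilde{\bm{X}}}^\top_{r,i}\underline{\tilde{\bm{X}}}_{r,i} = \underline{\tilde{\bm{X}}}^\top_{r-1}\underline{\tilde{\bm{X}}}_{r-1} + \begin{psmallmatrix}1\\\bm{X}_i\end{psmallmatrix}(1,\bm{X}^\top_i)$, then use linearity and cyclicity of the trace to isolate the $i$-dependent term, converting the minimization into the stated maximization. Your additional remarks on the positivity of the Sherman--Morrison denominator and the contrast with the D-optimal criterion are correct but not part of the paper's proof.
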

The A-optimality criterion provides a family of selection functions depending on $\underline{\bm{P}}$ and $\bm{w}$. For some choices of $\underline{\bm{P}}$ and $\bm{w}$, the selection function is not affine invariant , e.g., $\underline{\bm{P}} = \bm{I}$ and $\bm{w} = (1,1,...,1)^\top$, while for other choices it is, e.g., $\underline{\bm{P}} = \underline{\tilde{\bm{X}}}_{\text{full}}$ and $\bm{w} = (1,1,...,1)^\top$. In particular, the A-optimal selection function with $\underline{\bm{P}} = \underline{\tilde{\bm{X}}}_{\text{full}}$ and $\bm{w} = (1,1,...,1)^\top$ is closely related to the D-optimal selection function. To see this, consider a case where in the selection process, the design matrices in each treatment group scale similarly relative to the design matrix in the full sample, i.e., $(\underline{\tilde{\bm{X}}}^\top_{r-1} \underline{\tilde{\bm{X}}}_{r-1})^{-1} = c_r (\underline{\tilde{\bm{X}}}^\top_{\text{full}} \underline{\tilde{\bm{X}}}_{\text{full}})^{-1}$ for some constant $c_r>0$. In this case, the A-optimal selection function chooses unit $i'$ such that $i' \in \argmax\limits_{i \in \mathcal{R}_{r-1}}(1, \bm{X}^\top_i) (\underline{\tilde{\bm{X}}}^\top_{r-1} \underline{\tilde{\bm{X}}}_{r-1})^{-1} \begin{psmallmatrix}
       1\\
       \bm{X}_i
       \end{psmallmatrix}  \iff i' \in \argmax\limits_{i \in \mathcal{R}_{r-1}} (\bm{X}_i - \bar{\bm{X}}_{r-1})^\top (\underline{\bm{S}}_{r-1})^{-1} (\bm{X}_i - \bar{\bm{X}}_{r-1})$, which is equivalent to the D-optimal selection function. Hence, in this case, the FSM under the D-optimal and A-optimal selection functions make similar choices of units.

%%%%%%%%%%%%%%%%%%%%
%%%%%%%%%%%%%%%%%%%%
\subsection{Algorithms for constructing an SOM}

\subsubsection{The SCOMARS algorithm}
\label{secA_scomars}
Consider a setting with $G=2$ treatment groups of arbitrary sizes $n_1$ and $n_2$. Let $W_r$ be the binary indicator for selection of group 1 stage $r$, $r \in \{1, 2, …, N\}$, with $p_r := P(W_r=1)$ being the marginal probability of selection at stage $r$. Write $S_r := \sum_{j=1}^r W_j$  and $F_r := \mathbb{E}(S_r) =  \sum_{j=1}^{r}p_j$. A treatment assignment is sequentially controlled if $|S_r - F_r|< 1$ for all $r \in \{1, 2, …, N\}$.

The SCOMARS algorithm proceeds as follows:
\begin{itemize}
\item Stage 1, $P(W_1 = 1) = p_1$.
\item Stage $r \geq 2$, $P(W_{r}=1|S_{r-1} = s_{r-1}) = P \Big\{ U \leq \frac{p_{r} - \max(0,s_{r-1} - F_{r-1})}{1 - |s_{r-1} - F_{r-1}|} \Big\}$, where $U \sim \text{Unif}(0,1)$.
\end{itemize}
This algorithm satisfies the \textrm{sequentially controlled} condition, $|S_r - F_r|< 1$ for all $r \in \{1, 2, …, N\}$ (\citealt{morris1983sequentially}). It is Markovian because the probability of selection at stage $r$ depends solely on stage $r-1$.

%%%%%%%%%%%%%%%%%%%%%%%%%%%%%%
%%%%%%%%%%%%%%%%%%%%%%%%%%%%%%
\subsubsection{SOM for multi-group experiments}
We first define the randomized chunk algorithm for generating an SOM for multi-group experiments with equal group sizes. 
\begin{definition}[Randomized chunk algorithm] \normalfont
Suppose $n_1 = n_2 = ... = n_G$. The randomized chunk algorithm generates an SOM by generating and stacking $\frac{N}{G}$ independent random permutations of the `chunk' $(1,2,...,G)$. 
\label{def_randchunk}
\end{definition}
For example, with $N = 12$, $g = 3$, $n_1 = n_2 = n_3 = 4$, one instance of an SOM generated using randomized chunk is $(\underbrace{2,1,3},\underbrace{1,2,3},\underbrace{2,1,3},\underbrace{2,3,1})^\top$.

The following proposition shows that the randomized chunk algorithm is sequentially controlled.
\begin{proposition}
\label{prop:randchunk}
\normalfont
For $G \geq 2$ and $n_1 = n_2 = ... = n_G$, the randomized chunk algorithm satisfies $|S_{ig} - F_{ig}| \leq \frac{G-1}{G}<1$ for all $g \in \{1,2,...,G\}$. 
\label{prop_randchunk}
\end{proposition}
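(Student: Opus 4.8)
The plan is to exploit the rigid block structure of the randomized chunk SOM. Write $N = mG$ with $m = N/G$ the common group size, fix a treatment group $g$, and for the stage index $r \in \{1,\dots,N\}$ (written $i$ in the statement) let $W_{rg}$ be the indicator that group $g$ is chosen at stage $r$, so that $S_{rg} = \sum_{j=1}^{r} W_{jg}$ and $F_{rg} = \mathbb{E}(S_{rg})$. The key observation is that, by Definition \ref{def_randchunk}, the SOM is the vertical concatenation of $m$ independent uniform random permutations of the chunk $(1,2,\dots,G)$, and each such chunk contains $g$ exactly once. Consequently, at the end of any whole number $k$ of chunks, i.e.\ at stage $r = kG$, one has $S_{rg} = k$ deterministically and $F_{rg} = k$, so the deviation vanishes there.

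For a general stage I would write $r = kG + j$ with $k \ge 0$ and $0 \le j \le G-1$. Then $S_{rg} = k + B$, where $B \in \{0,1\}$ is the indicator that $g$ occurs among the first $j$ entries of the $(k+1)$-st chunk. Because that chunk is a uniform random permutation of $\{1,\dots,G\}$, exchangeability gives $\mathbb{E}(B) = j/G$, hence $F_{rg} = k + j/G$ and therefore $S_{rg} - F_{rg} = B - j/G$. It then remains only to bound $|B - j/G|$. If $B = 0$ then $|B - j/G| = j/G \le (G-1)/G$. If $B = 1$ then $j \ge 1$ automatically, since the indicator cannot equal $1$ when none of the entries of the current chunk has yet been revealed; hence $|B - j/G| = 1 - j/G \le 1 - 1/G = (G-1)/G$. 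In either case $|S_{rg} - F_{rg}| \le (G-1)/G < 1$, which is exactly the sequentially controlled condition applied to each group.

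There is no real obstacle here once the block decomposition $r = kG+j$ is in place — the rest is a direct computation. The only two points that call for a moment's care are verifying $\mathbb{E}(B) = j/G$ from the symmetry of a uniform random permutation, and observing that the case $B = 1,\ j = 0$ is vacuous; the latter is precisely what upgrades the crude bound $|B - j/G| \le 1$ to the sharp bound $(G-1)/G$ stated in the proposition, which is attained, e.g., when $j = G-1$ and $g$ happens to be the last entry of the current chunk.
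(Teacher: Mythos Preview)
Your proof is correct and follows essentially the same approach as the paper: decompose the stage index by chunks, note that each chunk contains $g$ exactly once so the deviation resets to zero at chunk boundaries, and within a chunk the deviation is either $j/G$ or $1-j/G$ depending on whether $g$ has yet appeared. The paper conditions on the position $i^*$ of $g$ within the chunk rather than on your indicator $B$, but the two formulations are equivalent and lead to the same case split and bound.
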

\begin{proof}
Let $S_{ig}$ and $F_{ig}$ be the same as defined in Section \ref{sec_multi} ($i \in \{1,2,...,N\}$, $g \in \{1,2,...,G\}$). For equal sized treatment groups, $F_{ig} = \frac{i}{G}$. Now, without loss of generality, it suffices to show that $|S_{i1} - F_{i1}|\leq \frac{G-1}{G}$ for all $i \in \{1,2,...,N\}$. Consider the first chunk in the SOM, which is a random permutation of $(1,2,...,G)$. If treatment 1 appears in position $i^* \in \{1,2,...,G\}$ the permutation ($j \in \{1,2,...,G\}$), then
\begin{equation}
  |S_{i1} - F_{i1}| =
    \begin{cases}
      \frac{i}{G} & \text{if $i\in \{1,...,i^*-1\}$}\\
      1-\frac{i}{G} & \text{if $i\in \{i^*,...,G\}$.}
    \end{cases}     
\end{equation}
In each case, $|S_{i1} - F_{i1}|\leq \frac{G-1}{G}$ for all $i \in \{1,2,...,G\}$. Moreover, since $|S_{G1} - F_{G1}|=0$, the SOM restarts itself after the first chunk. Hence, we can conclude that $|S_{i1} - F_{i1}| \leq \frac{G-1}{G}$ for all $i \in \{1,2,...,G\}$. This completes the proof. 

\end{proof}
Below we describe two algorithms to generate an SOM for multi-group experiments and show that they are sequentially controlled. The key idea in these algorithms is the formation of `supergroups', i.e., combination of one or more treatment groups. For example, with $g = 3$, $n_1 = 10, n_2 = 20, n_3 = 30$, one can consider two supergroups, namely $\{1,2\}$ of size $10 + 20 = 30$ and $\{3\}$ of size $30$.

\begin{theorem}\normalfont
\label{thm:multiscomars1}
For $1\leq G_1 \leq G-1$, let $n_1=n_2 =...=n_{G_1} \neq n^{(1)}$, and $n_{G_1+1}=n_{G_1+2} =...=n_{G} = n^{(2)}$, where $n^{(1)} \neq n^{(2)}$. Consider the following three-stage algorithm.
\begin{enumerate}
    \item Run SCOMARS with supergroups $\{1,...,G_1\}$ and $\{G_1+1,...,G\}$ to generate an SOM at the supergroup level.
    \item Consider the locations of the SOM in step 1 where supergroup $\{1,...,G_1\}$ chooses. Then, use randomized chunk to obtain the selection orders at the levels of the original groups in those locations.
    \item Consider the locations of the SOM in step 1 where supergroup $\{G_1 + 1,...,G\}$ chooses. Then, use randomized chunk to obtain the selection orders at the levels of the original groups in those locations.
\end{enumerate}
The above SOM generating algorithm is sequentially controlled.
\end{theorem}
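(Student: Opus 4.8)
The plan is to exploit the nested structure of the algorithm: step~1 controls the coarse supergroup-level schedule through SCOMARS, steps~2--3 control the within-supergroup allocation through the randomized chunk algorithm, and the two layers of control combine additively via a triangle inequality. Write $A=\{1,\dots,G_1\}$ and $B=\{G_1+1,\dots,G\}$ for the two supergroups (of total sizes $G_1 n^{(1)}$ and $(G-G_1)n^{(2)}$, each internally equal-sized). For a stage $i\in\{1,\dots,N\}$ let $T^A_i$ be the number of the first $i$ stages at which supergroup $A$ is the chooser in the SOM produced by step~1, and let $T^B_i=i-T^A_i$. Since step~1 is precisely SCOMARS run on the two supergroups, it is sequentially controlled at the supergroup level (Section~\ref{secA_scomars}; \citealt{morris1983sequentially}), so with $\phi_i:=\mathbb{E}(T^A_i)$ we have $|T^A_i-\phi_i|<1$, and hence also $|T^B_i-(i-\phi_i)|=|T^A_i-\phi_i|<1$, for all $i$.

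Next I would analyze the within-supergroup layer. Step~2 fills the $T^A_N=G_1 n^{(1)}$ positions at which $A$ chooses with an independently generated randomized-chunk sequence over the labels in $A$; let $c^A_g(m)$ denote the number of times label $g\in A$ occurs among the first $m$ terms of that sequence, so that $S_{ig}=c^A_g(T^A_i)$. Because each chunk is a permutation of $(1,\dots,G_1)$, after $\lfloor m/G_1\rfloor$ complete chunks every label has appeared that many times and the exchangeable remainder contributes expectation $(m\bmod G_1)/G_1$, yielding the exact identity $\mathbb{E}\{c^A_g(m)\}=m/G_1$. As the randomized-chunk sequence is drawn independently of the step-1 SOM, the tower property gives $F_{ig}=\mathbb{E}(S_{ig})=\mathbb{E}(T^A_i)/G_1=\phi_i/G_1$ for $g\in A$. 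Moreover, Proposition~\ref{prop:randchunk} applied to a randomized chunk over $G_1$ labels gives $|c^A_g(m)-m/G_1|\le (G_1-1)/G_1$ for every $m$.

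Combining the two bounds, for $g\in A$:
\begin{align*}
\bigl|S_{ig}-F_{ig}\bigr|
&=\Bigl|c^A_g(T^A_i)-\tfrac{\phi_i}{G_1}\Bigr|
\le \Bigl|c^A_g(T^A_i)-\tfrac{T^A_i}{G_1}\Bigr|+\tfrac{1}{G_1}\bigl|T^A_i-\phi_i\bigr|\\
&\le \tfrac{G_1-1}{G_1}+\tfrac{1}{G_1}\bigl|T^A_i-\phi_i\bigr|<1,
\end{align*}
using the randomized-chunk bound for the first term and the strict SCOMARS bound for the second (so the sum is strictly below $\tfrac{G_1-1}{G_1}+\tfrac{1}{G_1}=1$). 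The argument for $g\in B$ is identical with $G-G_1$, $T^B_i$ and $i-\phi_i$ replacing $G_1$, $T^A_i$ and $\phi_i$; since $1\le G_1\le G-1$ forces both supergroups to be nonempty, this exhausts all $g\in\{1,\dots,G\}$, so the SOM is sequentially controlled.

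The step requiring the most care is the identity $F_{ig}=\phi_i/G_1$: it hinges on the within-supergroup allocation being conditionally independent of the supergroup-level schedule given the number of $A$-turns, so that the exact per-chunk expectation $\mathbb{E}\{c^A_g(m)\}=m/G_1$ transfers to $\mathbb{E}(S_{ig})$ rather than merely to a conditional expectation. Once that bookkeeping is pinned down, the remainder is a direct invocation of Proposition~\ref{prop:randchunk} and the sequential-control property of SCOMARS, together with the triangle inequality.
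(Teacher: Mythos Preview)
Your proof is correct and uses essentially the same approach as the paper: decompose $|S_{ig}-F_{ig}|$ via the triangle inequality into a within-supergroup term bounded by $(G_1-1)/G_1$ via Proposition~\ref{prop:randchunk} and a scaled supergroup-level term bounded strictly below $1/G_1$ via the SCOMARS property. The paper first establishes the special case $G_1=1$ as a separate lemma (splitting into cases according to how many supergroup-$A$ selections have occurred) and then reduces the general case to that lemma, whereas your unified notation $S_{ig}=c^A_g(T^A_i)$ carries out the identical triangle-inequality bound directly for arbitrary $G_1$; the mathematical content is the same.
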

We first prove a special case of Theorem \ref{thm:multiscomars1}, given below in Lemma \ref{lemma:multiscomars1}
\begin{lemma} \normalfont
The algorithm in Theorem \ref{thm:multiscomars1} is sequntially controlled for the special case of $G_1 = 1$.
\label{lemma:multiscomars1}
\end{lemma}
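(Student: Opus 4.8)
The plan is to bound the discrepancy $|S_{ig} - F_{ig}|$ separately for the singleton supergroup and for the groups making up the large supergroup, and then to add the two error sources. Since $G_1 = 1$, the supergroup $\{1\}$ is exactly group $1$, so Step 1 of the algorithm generates the selections of group $1$ directly by SCOMARS. Hence the sequentially controlled property of SCOMARS (Section \ref{secA_scomars}) gives $|S_{i1} - F_{i1}| < 1$ for every $i \in \{1,\dots,N\}$, where, by construction, $F_{i1}$ is the expected number of selections made by group $1$ up to stage $i$ under the full algorithm. This disposes of $g = 1$, so the work is all in $g \in \{2,\dots,G\}$.

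For $g \ge 2$ I would set up the bookkeeping as follows. Let $T_i := i - S_{i1}$ be the number of stages up to $i$ at which the large supergroup $\{2,\dots,G\}$ chooses. By Steps 2--3 the selections among groups $2,\dots,G$ are obtained by stacking independent random permutations of $(2,\dots,G)$ and reading off the first $T_i$ of them; writing $T_i = q(G-1) + s$ with $q = \lfloor T_i/(G-1)\rfloor$ and $0 \le s \le G-2$, each of these groups has been selected either $q$ or $q+1$ times by stage $i$, with the value $q+1$ possible only when $s \ge 1$. This yields $\bigl|S_{ig} - T_i/(G-1)\bigr| \le (G-2)/(G-1)$ for each $g \in \{2,\dots,G\}$ --- the same elementary computation as in the proof of Proposition \ref{prop_randchunk}, now applied on the subsequence of the large supergroup's turns rather than on all stages. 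Separately, since the algorithm treats groups $2,\dots,G$ exchangeably, all $F_{ig}$ with $g \ge 2$ are equal, and taking expectations in the deterministic identity $S_{i1} + \sum_{g=2}^{G} S_{ig} = i$ gives $F_{ig} = (i - F_{i1})/(G-1)$.

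Combining the pieces through the decomposition
\[
S_{ig} - F_{ig} = \Bigl(S_{ig} - \frac{T_i}{G-1}\Bigr) + \frac{F_{i1} - S_{i1}}{G-1},
\]
which is valid because $T_i/(G-1) = (i - S_{i1})/(G-1)$ and $F_{ig} = (i - F_{i1})/(G-1)$, I would take absolute values and apply the two bounds above to obtain
\[
|S_{ig} - F_{ig}| \le \frac{G-2}{G-1} + \frac{|S_{i1} - F_{i1}|}{G-1} < \frac{G-2}{G-1} + \frac{1}{G-1} = 1 ,
\]
which with the $g=1$ case shows the algorithm is sequentially controlled. I expect the only delicate point --- and it is bookkeeping rather than a genuine obstacle --- to be carefully linking the ``internal clock'' of the randomized chunk (which advances only on the large supergroup's turns) to the global stage index, i.e., checking that group $g$'s count up to global stage $i$ equals its count among the first $T_i$ entries of the stacked permutations, and that the identity $F_{ig} = (i - F_{i1})/(G-1)$ holds exactly for all $i$; once these are in place the displayed inequality closes the argument.
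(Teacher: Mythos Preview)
Your proposal is correct and takes essentially the same approach as the paper: handle $g=1$ directly via SCOMARS, and for $g\ge 2$ decompose $S_{ig}-F_{ig}$ into a randomized-chunk error measured on the large supergroup's internal clock plus $1/(G-1)$ times the SCOMARS error for group~1. The paper's proof splits into cases according to how many selections group~1 has made by stage $i$, but that case split is just your variable $T_i = i - S_{i1}$ unpacked, and the resulting triangle-inequality bounds coincide exactly.
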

\begin{proof}
The first step of the algorithm in Theorem \ref{thm:multiscomars1} runs SCOMARS with treatment group 1 and the supergroup $\{2,3,...,G\}$. Thus, the first step itself determines the locations of the SOM where treatment 1 gets to choose. Since SCOMARS is sequentially controlled, we immediately have $|S_{i1} - F_{i1}|<1$ for all $i \in \{1,2,...,N\}$. 

It remains to show that for $g \in \{2,3,...,G\}$, $|S_{ig} - F_{ig}|<1$ for all $i \in \{1,2,...,N\}$. By symmetry, it suffices to show this for $g =2$. Now, the randomized chunk algorithm on the supergroup $\{2,3,...,G\}$ determines the locations of the SOM where treatment 2 gets to choose. We will prove the result by first mapping this SOM to an SOM where treatment 1 is absent, and then by using the sequential controlled property of randomized chunk. 

Let us first denote $1 \leq r_1<r_2<...<r_{n_1 - 1}< r_{n_1} \leq N$ as the stages or locations of the SOM where treatment 1 gets to choose. We consider the following cases,

\noindent (i) Case-1: $i \in \{1,2,...,r_1 - 1 \}$. In this case, by stage $i$, treatment 1 has not made any choices. Now,
\begin{align}
    |S_{i2} - F_{i2}| & = |S_{i2} - \frac{i n^{(2)}}{N}| \nonumber\\ 
    & \leq |S_{i2} - \frac{i}{G-1}| + |\frac{i}{G-1} - \frac{i n^{(2)}}{N}| \nonumber\\ 
    & \leq \frac{G-2}{G-1} + i \frac{n_1}{N(G-1)} \nonumber\\
    & < \frac{G-2}{G-1} + \frac{1}{G-1} = 1. \label{eqA:lemmaA2_1} 
\end{align}
Here the first inequality holds due to triangle inequality. To see that second inequality, consider a new experiment with treatment groups $\{2,...,G\}$ of size $n^{(2)}$ each and an SOM generated by randomized chunk as in the second step of the algorithm in Theorem \ref{thm:multiscomars1}. Let $\tilde{S}_{i2}$ be the number of selections made by treatment $2$ up to stage $i$ in this new experiment and $\tilde{F}_{i2} = \frac{i}{G-1}$ be its expectation. By Proposition \ref{prop_randchunk}, $|\tilde{S}_{i2} - \tilde{F}_{i2}| \leq \frac{G-2}{G-1}$. Now, $|S_{i1} - \frac{i}{G-1}| = |\tilde{S}_{i1} - \frac{i}{G-1}|$, which gives us the second inequality. Finally, the last inequality holds since $\frac{in_1}{N} = F_{i1} <1$.

\noindent (ii) Case-2: $i \in \{r_t, r_t+1,....,r_{t+1} - 1\}$ for some $t \in \{1,2,...,n_1-1\}$. In this case, by stage $i$, treatment 1 has made exactly $t$ choices. Now,
\begin{align}
    |S_{i2} - F_{i2}| & = |S_{i2} - \frac{i n^{(2)}}{N}| \nonumber\\ 
    & \leq |S_{i2} - \frac{i-t}{G-1}| + |\frac{i-t}{G-1} - \frac{i n^{(2)}}{N}| \nonumber\\ 
    & \leq \frac{G-2}{G-1} + \frac{1}{G-1}|t - \frac{in_1}{N}| \nonumber\\
    & < \frac{G-2}{G-1} + \frac{1}{G-1} = 1. \label{eqA:lemmaA2_2} 
\end{align}
Here, the first inequality is due to triangle inequality. To see the second inequality, we again consider the new experiment
described in Case-1. Notice that, $|S_{i2} - \frac{i-t}{G-1}|  = |\tilde{S}_{(i-t)2} - \tilde{F}_{(i-t)2}| \leq \frac{G-2}{G-1}$, where the last inequality holds by Proposition \ref{prop_randchunk}. Finally, the final inequality in Equation \ref{eqA:lemmaA2_2} holds since $|t - \frac{in_1}{N}| = |S_{i1} - F_{i1}|<1$. This completes the proof of the lemma.
\end{proof}
We now prove Theorem \ref{thm:multiscomars1}.
\begin{proof}
We first show that, for $g \in \{1,2,...,G_1\}$, 
\begin{equation}
    |S_{ig} - F_{ig}|<1 \quad \forall i \in \{1,2,...,N\}.
    \label{eqA:thmA2_1}
\end{equation}
To show this, we consider steps 1 and 2 of the algorithm as these two steps are sufficient to determine the location of treatments $1,...,G_1$ in the SOM. We note that, steps 1 and 2 generate an SOM for an experiment with $G_1+1$ treatment groups, namely supergroup $\{G_1+1,...,G\}$ (of size $(G-G_1)n^{(2)}$) and groups $1,2,...,G_1$ (each of size $n^{(1)}$). Thus, by Lemma \ref{lemma:multiscomars1}, it follows that Equation \ref{eqA:thmA2_1} holds for $g \in \{1,2,...,G_1\}$.  

To show that Equation \ref{eqA:thmA2_1} holds for $g \in \{G_1+1,...,G\}$, we first notice that steps 2 and 3 of the algorithm are completely independent and hence can be performed in any order. Therefore, by changing the order of steps 2 and 3 and applying the same argument as before, we get that Equation \ref{eqA:thmA2_1} holds for $g \in \{G_1+1,...,G\}$. This completes the proof of the theorem.
\end{proof}

\begin{theorem}\normalfont
\label{thm:multiscomars2}
Let $G_1,...,G_m$ be such that $1 \leq G_j \leq G-1$ for all $j \in \{1,2,...,m\}$ and $G_1+ G_2+...+G_m = G$. Moreover, for $j \in \{1,2,...,m\}$, let $n^{(j)}$ be the group size of $G_j$ many treatment groups, with $n^{(1)}G_1 = n^{(2)}G_2 = ...=n^{(m)}G_m $. Denote the collection of $G_j$ treatment groups with group sizes $n^{(j)}$ as supergroup $\mathcal{G}_j$. Consider the following multi-stage algorithm.
\begin{enumerate}
    \item Run randomized chunk on supergroups $\mathcal{G}_1,\mathcal{G}_2,...,\mathcal{G}_m$ to generate an SOM at the supergroup level.
    \item For $j \in \{1,2,...,m\}$, consider the locations of the SOM in step 1 where supergroup $\mathcal{G}_j$ chooses. Then, use randomized chunk to obtain the selection orders at the levels of the original groups in those locations.
\end{enumerate}
The above SOM generating algorithm is sequentially controlled. 
\end{theorem}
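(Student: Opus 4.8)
The plan is to view the algorithm as two nested randomized chunks and bound $|S_{ig}-F_{ig}|$ by a single triangle inequality that combines Proposition \ref{prop_randchunk} applied at the supergroup level with Proposition \ref{prop_randchunk} applied inside the relevant supergroup. This mirrors the argument for Theorem \ref{thm:multiscomars1} (and Lemma \ref{lemma:multiscomars1}), with ``SCOMARS on two supergroups'' replaced by ``randomized chunk on $m$ supergroups,'' so that the outer bound $|{\cdot}|<1$ coming from SCOMARS is now the sharper $\tfrac{m-1}{m}$ coming from randomized chunk. Fix a supergroup $\mathcal{G}_j$ and a treatment group $g\in\mathcal{G}_j$. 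Because the condition $n^{(1)}G_1=\cdots=n^{(m)}G_m$ makes all $m$ supergroups have the common total size $N/m$, the Step-1 randomized chunk selects $\mathcal{G}_j$ at each stage with marginal probability $1/m$; hence, writing $T_{ij}$ for the number of selections made by $\mathcal{G}_j$ through stage $i$, we have $\mathbb{E}(T_{ij})=i/m$ and, by Proposition \ref{prop_randchunk} applied to the $m$ equal-sized supergroups, $|T_{ij}-i/m|\le\frac{m-1}{m}$. By symmetry among the $G_j$ groups inside $\mathcal{G}_j$, each makes any given selection of $\mathcal{G}_j$ equally often, so $F_{ig}=\mathbb{E}(S_{ig})=\mathbb{E}(T_{ij})/G_j=\frac{i}{mG_j}$; equivalently $F_{ig}=i\,n^{(j)}/N$ since $n^{(j)}/N=1/(mG_j)$.

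\textbf{Within-supergroup bookkeeping.} In Step 2, the selections allotted to $\mathcal{G}_j$ are filled in, in order, by a fresh randomized chunk on the $G_j$ labels of $\mathcal{G}_j$. Let $\tilde{S}_{tg}$ be the number of times $g$ occurs among the first $t$ entries of that sub-SOM, with $\tilde{F}_{tg}=t/G_j$. The key identity is that at every stage $i$ of the full SOM, $S_{ig}=\tilde{S}_{T_{ij}\,g}$: the units assigned to $g$ by stage $i$ are exactly those appearing among the first $T_{ij}$ selections made by $\mathcal{G}_j$. Proposition \ref{prop_randchunk} applied to the $G_j$-group sub-experiment gives $|\tilde{S}_{tg}-t/G_j|\le\frac{G_j-1}{G_j}$ for every $t$; this is a deterministic bound, so it continues to hold with $t$ replaced by the random value $T_{ij}$.

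\textbf{Combining.} Writing $t=T_{ij}$ and using the two bounds together with the triangle inequality,
\begin{align*}
|S_{ig}-F_{ig}| = \Big|\tilde{S}_{tg}-\tfrac{i}{mG_j}\Big| &\le \Big|\tilde{S}_{tg}-\tfrac{t}{G_j}\Big| + \tfrac{1}{G_j}\Big|t-\tfrac{i}{m}\Big| \\
&\le \tfrac{G_j-1}{G_j} + \tfrac{1}{G_j}\cdot\tfrac{m-1}{m} = \tfrac{mG_j-1}{mG_j} < 1.
\end{align*}
Since $g\in\mathcal{G}_j$ was arbitrary and $1\le G_j\le G-1$, this yields $|S_{ig}-F_{ig}|<1$ for all $i$ and all $g$, i.e., the generated SOM is sequentially controlled. (For $m=1$ the display recovers Proposition \ref{prop_randchunk} directly.)

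\textbf{Main obstacle and remarks.} I expect no analytic difficulty: once Proposition \ref{prop_randchunk} is available, the inequality is a one-line composition. The only real work is the bookkeeping — correctly relating the full-SOM stage index $i$ to the position $T_{ij}$ in the sub-SOM (the identity $S_{ig}=\tilde{S}_{T_{ij}\,g}$) and verifying $F_{ig}=i/(mG_j)$ from the equal-product hypothesis. As in the proof of Theorem \ref{thm:multiscomars1}, it is worth remarking that the Step-2 sub-chunks for distinct supergroups are mutually independent and independent of Step 1, so the SOM is genuinely randomized; but this independence is not needed for the sequentially controlled bound, which is deterministic.
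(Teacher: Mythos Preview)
Your proof is correct, and it takes a genuinely different and more direct route than the paper's. The paper does not compose two applications of Proposition~\ref{prop_randchunk}; instead it invokes Lemma~\ref{lemma:randchunk} (the equivalence between running randomized chunk on $m$ equal supergroups and running SCOMARS on $\mathcal{G}_1$ versus $\{\mathcal{G}_2,\dots,\mathcal{G}_m\}$ followed by randomized chunk on the remainder) to \emph{recast} the two-step algorithm as an instance of the algorithm in Theorem~\ref{thm:multiscomars1} with $G_1+m-1$ treatment groups, and then appeals to that theorem. Thus the paper's argument is a reduction that ultimately rests on the SCOMARS bound inside Lemma~\ref{lemma:multiscomars1}.

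Your argument is more elementary in that it never touches SCOMARS, Lemma~\ref{lemma:randchunk}, or Theorem~\ref{thm:multiscomars1}: it simply uses the deterministic bound of Proposition~\ref{prop_randchunk} twice, once at the supergroup level and once inside $\mathcal{G}_j$, linked by the identity $S_{ig}=\tilde{S}_{T_{ij},g}$ and one triangle inequality. As a bonus you obtain the explicit sharper bound $|S_{ig}-F_{ig}|\le (mG_j-1)/(mG_j)$, whereas the paper only concludes $|S_{ig}-F_{ig}|<1$. The paper's approach, on the other hand, has the conceptual payoff of exhibiting the algorithm here as a special case of the SCOMARS-based construction, which may be useful if one wants to unify the theory; but for the bare statement of sequential control, your argument is cleaner.
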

To prove this theorem, we first use the following Lemma.
\begin{lemma} \normalfont
Let $n_1 = n_2 = ... = n_G = n$. Consider the following SOM generating algorithm.
\begin{enumerate}
    \item Consider the supergroups $\{1\}$ (of size $n$) and $\{2,3,...,G\}$ (of size $(G-1)n$). Generate an SOM at the superpopulation level using SCOMARS.
    \item Consider the locations of the SOM in step 1 where supergroup $\{2,3,...,G\}$ chooses. Then, use randomized chunk to obtain the selection orders at the levels of the original groups in those locations.
\end{enumerate}
This algorithm is equivalent to the randomized chunk algorithm.
\label{lemma:randchunk}
\end{lemma}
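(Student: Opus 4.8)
The plan is to show that the two algorithms are \emph{equivalent} in the sense of inducing the same probability distribution over SOMs. Write an SOM as a sequence $(\ell_1,\dots,\ell_N)$ with each label in $\{1,\dots,G\}$ occurring exactly $n$ times. Such a sequence is uniquely determined by (i) the set $\mathcal{A}\subseteq\{1,\dots,N\}$, $|\mathcal{A}|=n$, of stages at which treatment $1$ chooses, together with (ii) the arrangement of the labels $2,\dots,G$ across the $(G-1)n$ stages in $\mathcal{A}^c$, read in increasing order of stage. I will prove that both algorithms (a) induce the same law on $\mathcal{A}$, and (b) conditionally on $\mathcal{A}$, fill $\mathcal{A}^c$ by stacking $n$ independent uniform permutations of $(2,\dots,G)$; equivalence of the joint laws follows at once.

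Part (b) is essentially immediate. For the lemma's algorithm it is exactly what step $2$ prescribes: randomized chunk on the supergroup $\{2,\dots,G\}$ restricted to the stages in $\mathcal{A}^c$. For the randomized chunk algorithm on all $G$ groups, condition on any realization of $\mathcal{A}$ and note that each length-$G$ chunk contains exactly one stage of $\mathcal{A}$, hence exactly $G-1$ stages of $\mathcal{A}^c$; within a chunk, given the position of treatment $1$, the remaining $G-1$ slots carry a uniform permutation of $(2,\dots,G)$, independently across chunks and independently of the positions in $\mathcal{A}$. Reading $\mathcal{A}^c$ in increasing order of stage thus yields $n$ stacked independent uniform permutations of $(2,\dots,G)$, matching step $2$.

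Part (a) is the crux. Step $1$ of the lemma's algorithm runs SCOMARS on the supergroups $\{1\}$ and $\{2,\dots,G\}$ with constant marginal $p_r=1/G$ (the value that gives treatment $1$ its fair share $n$ of the $N=nG$ units, which is also the marginal of treatment $1$ under the randomized chunk). I will show that the indicator process $W_r:=\mathbbm{1}(\ell_r=1)$ produced by the randomized chunk algorithm on $G$ groups is precisely a SCOMARS process with this $p_r$, hence has the same law as the step-$1$ output (so $\mathcal{A}$ agrees in distribution). First, $W$ is a time-inhomogeneous Markov chain: writing $r=(j-1)G+t$ with $t\in\{1,\dots,G\}$, the value $S_{r-1}$ determines both the number $j-1$ of completed chunks and whether treatment $1$ has already appeared in the current partial chunk (namely $S_{r-1}-(j-1)\in\{0,1\}$), and these two facts determine the conditional law of $W_r$: if treatment $1$ has not yet appeared in chunk $j$ then $P(W_r=1\mid S_{r-1})=1/(G-t+1)$, and otherwise $P(W_r=1\mid S_{r-1})=0$. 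Substituting into the SCOMARS recursion with $p_r=1/G$ and $F_{r-1}=(j-1)+(t-1)/G$, using $s-F_{r-1}=-(t-1)/G$ in the first case and $s-F_{r-1}=1-(t-1)/G$ in the second (where the recursion's numerator becomes nonpositive and is therefore truncated to $0$), reproduces exactly these two numbers; and $P(W_1=1)=1/G$ matches the initial value. Hence $W$ and the step-$1$ SCOMARS process have identical initial laws and transition kernels, so identical laws.

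Combining (a) and (b) shows the two algorithms produce identically distributed SOMs, which proves the lemma. The main obstacle is the verification in part (a) that the randomized-chunk group-$1$ indicator obeys the SCOMARS recursion --- in particular recognizing it as a Markov chain in $S_{r-1}$ and correctly handling the truncation of the SCOMARS conditional probability to zero once treatment $1$ has already been placed in the current chunk; everything else reduces to bookkeeping about the chunk decomposition and the conditional independence across chunks used in part (b).
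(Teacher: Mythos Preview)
Your proposal is correct and takes essentially the same approach as the paper: both arguments reduce the equivalence to checking that the SCOMARS conditional probabilities for treatment~$1$ with $p_r=1/G$ coincide with the conditional probabilities of treatment~$1$'s position under a uniform permutation within each chunk, and both then appeal to step~2 being randomized chunk on $\{2,\dots,G\}$ for the remaining labels. Your decomposition into the law of $\mathcal{A}$ and the conditional law on $\mathcal{A}^c$, and your explicit verification that the randomized-chunk indicator process is Markov in $S_{r-1}$, are somewhat more carefully stated than the paper's version (which computes the SCOMARS probabilities in the first chunk and then asserts the process ``restarts'' at stage $G+1$), but the core computation---matching $1/(G-t+1)$ before treatment~$1$ appears and $0$ after---is identical.
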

Below we prove this lemma.
\begin{proof}
To show that the algorithm is equivalent to randomized chunk, we have to show that it generates a random permutation of $(1,2,...,G)$ for the first $G$ stages, a fresh random permutation of $(1,2,...,G)$ for the next $G$ stages, and so on. Since the locations of groups $\{2,...,G\}$ are chosen using randomized chunk, it thus suffices to show that, treatment 1 gets to choose once (in a random location) in the first $G$ stages, once in the next $G$ stages, and so on.

We use the notation as in Section \ref{secA_scomars}. Now, suppose among the first $G$ stages, treatment 1 gets to choose at stage $r^*$ first. Notice that $r^*$ cannot be greater than $G$ as
\begin{align}
        P(W_{G}=1|S_{G-1} = 0) = P \Big\{ U \leq \frac{\frac{1}{G} - \max(0,0 - F_{G-1})}{1 - |0 - F_{G-1}|} \Big\} = P\Big\{U \leq \frac{1}{G-(G-1)}\Big\} = 1.
        \label{eqA:lemmaA3_1}
\end{align}
Now, for $r \in \{1,2,...,r^*-1\}$ we have,
\begin{align}
        P(W_{r}=1|S_{r-1} = 0) = P \Big\{ U \leq \frac{\frac{1}{G} - \max(0,0 - F_{r-1})}{1 - |0 - F_{r-1}|} \Big\} = P\Big\{U \leq \frac{1}{G-(r-1)}\Big\} = \frac{1}{G-(r-1)}.
        \label{eqA:lemmaA3_2}
\end{align}
For $r^*+1 \leq r \leq G$,
\begin{align}
        P(W_{r}=1|S_{r-1} = 1) = P \Big\{ U \leq \frac{p_{r} - \max(0,1 - F_{r-1})}{1 - |1 - F_{r-1}|} \Big\} = P\Big\{U \leq \frac{\frac{1}{G} - 1 + \frac{r-1}{G}}{\frac{r-1}{G}}\Big\} = 0.
        \label{eqA:lemmaA3_3}
\end{align}
Finally, 
\begin{align}
    P(W_{G+1}=1|S_{G} = 1) = P \Big\{ U \leq \frac{p_{G+1} - \max(0,1 - F_{G})}{1 - |1 - F_{G}|} \Big\} = P\Big(U \leq \frac{1}{G}\Big) = \frac{1}{G}.
    \label{eqA:lemmaA3_4}
\end{align}
Therefore, by Equation \ref{eqA:lemmaA3_3}, if treatment 1 selects at the $r^*$th stage, it never selects again  $2,3,...,G$. Also,  by Equation \ref{eqA:lemmaA3_2}, before the $r^*$th stage, the conditional probabilities of treatment 1 selecting are same as what it would have been under random permutation of the group labels. Finally, by Equation \ref{eqA:lemmaA3_4}, the process restarts itself at the $(G+1)$th stage, which is equivalent to starting a fresh new random permutation of the group labels. This completes the proof of the lemma.
\end{proof}
We now prove Theorem \ref{thm:multiscomars2}.
\begin{proof}
By the symmetry of the problem, it suffices to show that $|S_{i1} - F_{i1}|<1$ for all $i \in \{1,2,...,N\}$. Without loss of generality, we assume that $\mathcal{G}_1 = \{1,2,...,G_1\}$, which implies that treatment 1 belongs to supergroup $\mathcal{G}_1$. Now, it suffices to focus on the following to steps of the algorithm:
\begin{enumerate}
\item Run randomized chunk on supergroups $\mathcal{G}_1,\mathcal{G}_2,...,\mathcal{G}_m$ to generate an SOM at the supergroup level.
\item Consider the locations of the SOM in step 1 where supergroup $\mathcal{G}_1$ chooses. Then, use randomized chunk to obtain the selection orders at the levels of the original groups in those locations.
\end{enumerate}
This is because, these two steps completely determine the locations of treatment 1 in the SOM. By Lemma \ref{lemma:randchunk}, these two steps can be equivalently performed as follows.
\begin{enumerate}
\item Consider the supergroups $\mathcal{G}_1$ (of size $n^{(1)}G_1$) and $\{\mathcal{G}_2,...,\mathcal{G}_m\}$ (of size $(m-1)n^{(1)}G_1$). Generate an SOM at this supergroup level using SCOMARS.
\item Consider the locations of the SOM in step 1 where supergroup $\{\mathcal{G}_2,...,\mathcal{G}_m\}$ chooses. Then, use randomized chunk to obtain the selection orders at the levels of $\mathcal{G}_j$ in those locations.

\item Consider the locations of the SOM in step 1 where supergroup $\mathcal{G}_1$ chooses. Then, use randomized chunk to obtain the selection orders at the levels of the original groups in those locations.
\end{enumerate}
We note that this above algorithm is exactly equivalent to the SOM generating algorithm in Theorem \ref{thm:multiscomars1} for an experiment with $G_1+m-1$ treatment groups, namely, $1,2,...,G_1,\mathcal{G}_2,\mathcal{G}_3,...,\mathcal{G}_m$. Thus, by Theorem \ref{thm:multiscomars1}, we have $|S_{i1} - F_{i1}|<1$ for all $i \in \{1,2,...,N\}$.
\end{proof}
%%%%%%%%%%%%%%%%%%%%
%%%%%%%%%%%%%%%%%%%%
\subsection{FSM for stratified experiments}
In this section, we discuss two potential approaches to use an FSM for stratified experiments. We consider stratified experiments where the treatment group sizes within each stratum are set by the investigator beforehand. To accommodate the FSM to such experiments, we again need to carefully construct an SOM. In particular, we append the SOM with an additional column of stratum labels, indicating which stratum the treatment group selects from at each stage of the selection process. This column of stratum labels is specified in such a way that the resulting SOM satisfies the group size requirements within each stratum.

Conceptually, the most straightforward approach is to generate a separate SOM for each stratum. This is equivalent to setting the column of stratum labels as $(\underbrace{1,...,1}_{m_1},\underbrace{2,...,2}_{m_2},...,\underbrace{S,....,S}_{m_S})^\top$, where $S$ is the number of strata and $m_s$ is the size of $s$th stratum, $s \in \{1,2,...,S\}$.
This approach is easy to implement and can be useful if, e.g., data on each stratum is available at different stages of the experiment, akin to a sequential experiment. However, in this approach, the treatment groups only get to explore the covariate space of a single stratum for a number of successive stages of selection and hence may not make the most efficient choices. 
We address this issue with an alternative approach. 
For ease of exposition, we consider two strata: 1 and 2. Let $n_{1g}$ and $n_{2g}$ be the (fixed) sizes of treatment group $g \in \{1,2,...,G\}$ in strata 1 and 2, respectively, where $n_{1g} + n_{2g} = n_g$. 
In this approach, we first generate a usual SOM with group sizes $n_1,...,n_G$. For $g \in \{1,2,...,G\}$, we then select the order of the strata that treatment $g$ chooses from by running a SCOMARS algorithm with group sizes $n_{1g}$ and $n_{2g}$. 
By allowing the treatment groups to select units from different strata in a balanced manner, this approach mimics the unstratified FSM where the covariate space of the entire sample is explored for choosing units. 
Also, by design, this approach satisfies the size requirement of each treatment group within each stratum.

%%%%%%%%%%%%%%%%%%%%
%%%%%%%%%%%%%%%%%%%%
\subsection{FSM for sequential experiments}
In this section, we describe our approach to using the FSM for sequential experiments.
Suppose treatment 1 gets to choose at the first stage of selection for the new batch. Let $\underline{\tilde{\bm{X}}}_{\text{old}}$ be the design matrix based on units already assigned to treatment 1. Also, for each unit $i$ in the new batch, let $\underline{\tilde{\bm{X}}}_{\text{new},i} := \begin{psmallmatrix}
\underline{\tilde{\bm{X}}}_{\text{old}}\\
(1, \bm{X}^\top_i) 
\end{psmallmatrix}$ be the resulting design matrix in treatment group 1 if unit $i$ is selected. Treatment 1 selects the unit that maximizes $\det(\underline{\tilde{\bm{X}}}^\top_{\text{new},i}\underline{\tilde{\bm{X}}}_{\text{new},i})$.
In other words, we use the design matrix based on all the units already assigned to the choosing treatment group to evaluate the D-optimal selection function for each unit in the new batch, and select the unit that maximizes the selection function. By carrying over the existing design matrix to the new batch, this approach tends to correct for any existing covariate imbalances.

%%%%%%%%%%%%%%%%%%%
%%%%%%%%%%%%%%%%%%%
%%%%%%%%%%%%%%%%%%%%
%%%%%%%%%%%%%%%%%%%%
%%%%%%%%%%%%%%%%%%%%
\subsection{A simulation study}
\label{sec_simulation}
\subsubsection{Setup}
We now compare the performance of the FSM to complete randomization and rerandomization in a simulation study. 
Here, $N=120$, $G=2$, $n_1 = n_2 = 60$, and $k=6$.
The covariates are generated following the design of  \cite{hainmueller2012balancing}:
\begin{equation}
%\vspace{-1.8\topsep}
\begin{psmallmatrix}
X_{1}\\
X_{2}\\
X_3
\end{psmallmatrix} \sim
\mathcal{N}_3\left\{\begin{psmallmatrix}
0\\
0\\
0
\end{psmallmatrix},\begin{psmallmatrix}
2 & 1 & -1\\
1 & 1 & -0.5\\
-1 & -0.5 & 1
\end{psmallmatrix}\right\},\hspace{0.1cm}
 X_4 \sim \text{Unif}(-3,3), \hspace{0.1cm} X_5 \sim \chi^2_1, \hspace{0.1cm} X_6 \sim \text{Bernoulli}(0.5). \label{dgp}
\end{equation}
\noindent In this design, $X_4$, $X_5$, and $X_6$ are mutually independent and separately independent of $(X_1,X_2,X_3)^\top$. 
We draw a sample of 120 units once from the data generating mechanism in (\ref{dgp}). 
Conditional on this sample, we compare four different assignment methods, namely a completely randomized design (CRD), rerandomization with 0.01 acceptance rate (RR 0.01), rerandomization with 0.001 acceptance rate (RR 0.001), and the FSM. 
Both RR 0.01 and RR 0.001 use as rerandomization criteria the Mahalanobis distance between the two treatment groups on the original covariates. 
The FSM uses a linear potential outcome model on the original covariates and the D-optimal selection function.
For each design we draw 800 independent assignments. The assignments under the FSM are generated using the open source R package \texttt{FSM} available on CRAN. 
The total runtime of the FSM for the 800 simulated experiments was about one and a half minutes on a Windows 64-bit computer with an Intel(R) Core i7 processor.
See \cite{chattopadhyay2021randomized} for detailed step-by-step instructions and vignettes on the use of FSM package.  

%%%%%%%%%%%%%%%%%%%%%%%%%%%%
%%%%%%%%%%%%%%%%%%%%%%%%%%%%
\subsubsection{Balance}

We evaluate balance on the main and transformed covariates.
Figures \ref{fig:simu_asmd}(a) and \ref{fig:simu_asmd}(b) show density plots of the Absolute Standardized Mean Differences (ASMD; \citealt{rosenbaum1985constructing}, \citealt{stuart2010matching}) of the six main covariates and their second-order transformations (including squares and pairwise products), respectively. 
A smaller ASMD for a covariate indicates better mean-balance on that covariate between the two treatment groups. 
Figure \ref{fig:simu_asmd}(a) indicates that both rerandomization methods improve balance on the means of the original covariates over CRD. 
As expected, the ASMD distribution under RR 0.001 is more concentrated than that of RR 0.01, with 32\% smaller mean ASMD than RR 0.01. 
Both the FSM and RR 0.001 have similar distributions of the ASMD with FSM having moderately (9\%) smaller mean ASMD. See Table \ref{tab_app:simu_asmd_org} for a comparison of the average ASMD of each covariate. 

\begin{figure}[!ht]
%\centering
\begin{subfigure}{.33\textwidth}
  %\centering
  \includegraphics[scale =0.4]{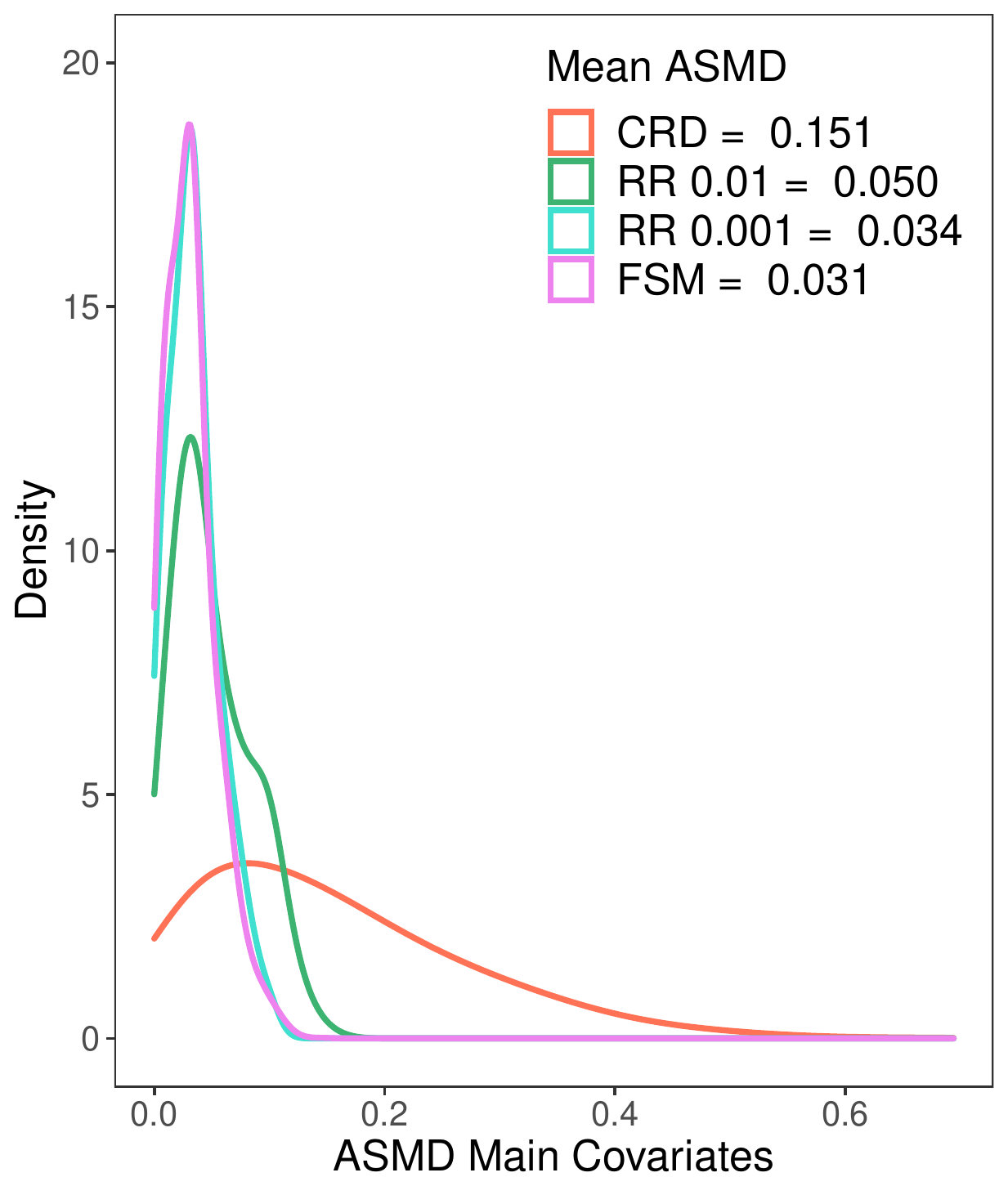}
  \caption{\footnotesize Main covariates}
  %\label{fig:simu_asmd_org}
\end{subfigure}%
\begin{subfigure}{.33\textwidth}
  %\centering
  \includegraphics[scale =0.4]{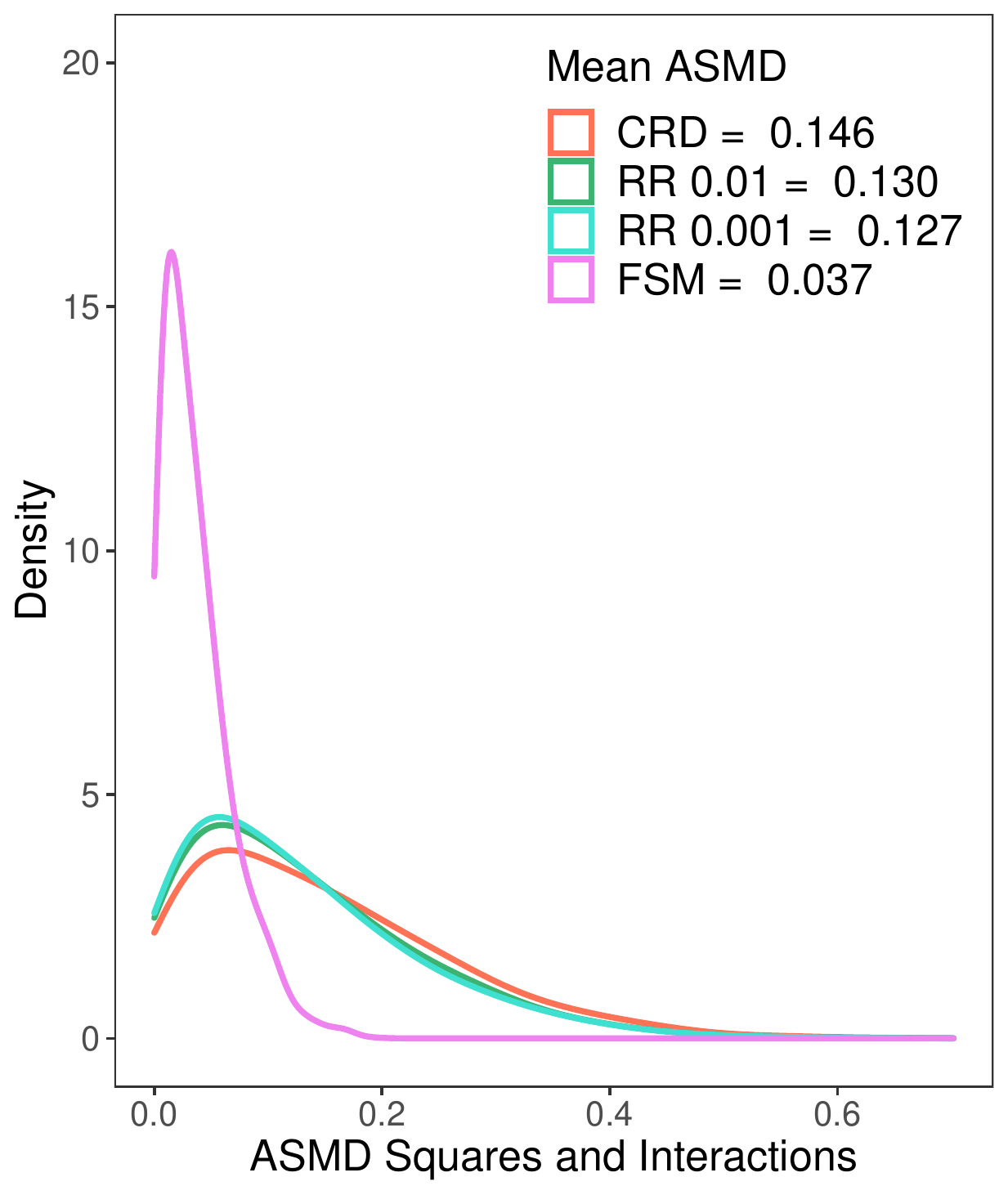}
  \caption{\footnotesize Squares and pairwise products}
  %\label{fig:simu_asmd_sqint}
\end{subfigure}
\begin{subfigure}{.33\textwidth}
  %\centering
  \includegraphics[scale =0.4]{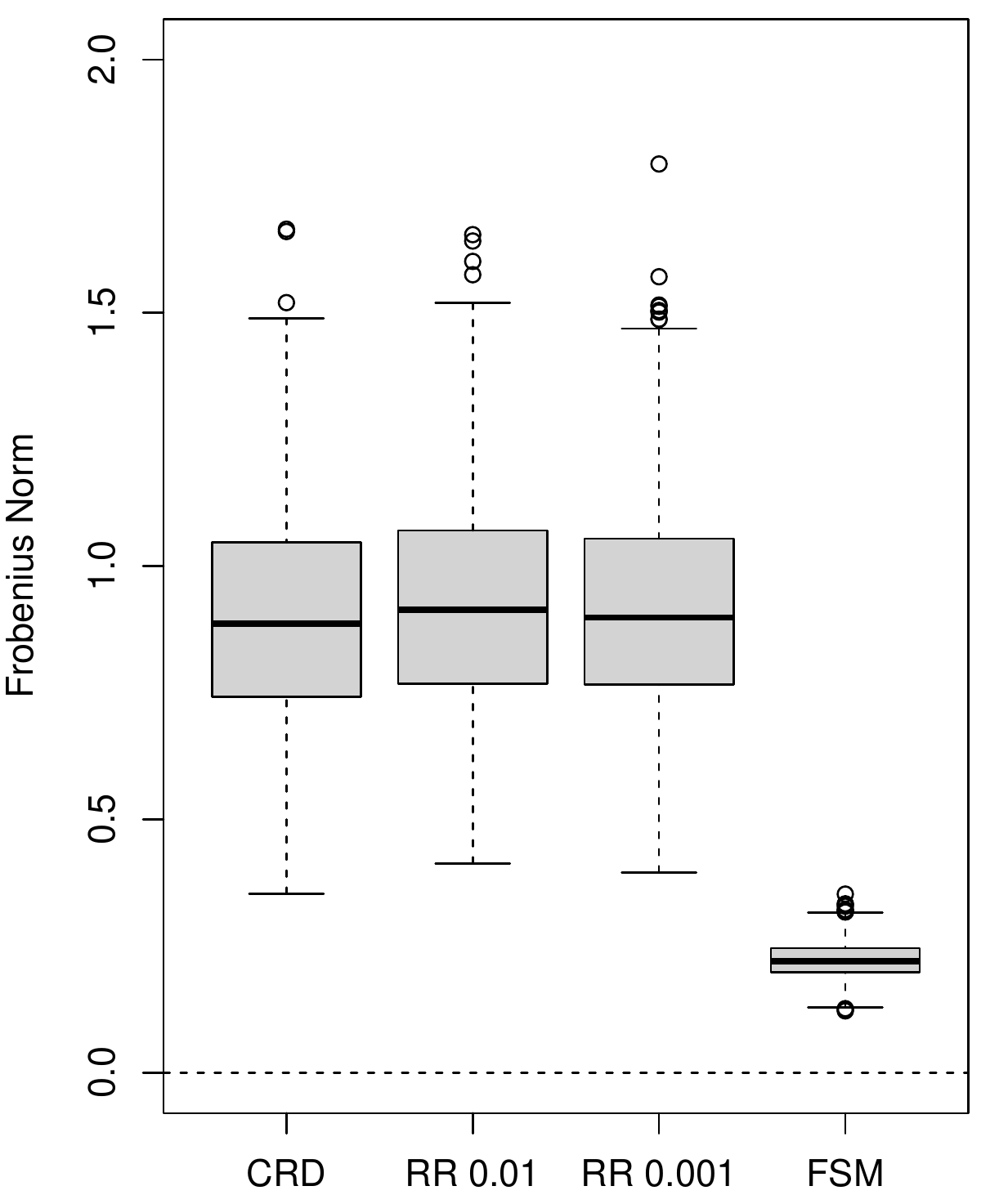}
  \caption{\footnotesize Frobenius norm}
  %\label{fig:simu_frob_cor}
\end{subfigure}
\caption{Panels (a) and (b) show distributions of absolute standardized mean differences (ASMD) of the main covariates and all their second-order transformations across 800 randomizations. For each plot, the legend presents the average ASMD across simulations for the four methods. Panel (c) shows distributions of discrepancies between the correlation matrices of the covariates in the treatment and the control group (as measured by the Frobenius norm, $||\underline{\bm{R}}_1 - \underline{\bm{R}}_2||_F$).
On average the FSM achieves better covariate balance.
In terms of the main covariates, the FSM marginally outperforms RR 0.001. 
In terms of the second-order transformations and correlation matrices, the FSM substantially outperforms RR 0.001.}
\label{fig:simu_asmd}
\end{figure}

%\begin{figure}[htp]
%\centering
%\includegraphics[width=.3\textwidth]{graphics/simu_main_final.pdf}\hfill
%\caption{(a)}
%\includegraphics[width=.3\textwidth]{graphics/simu_sqint_final.pdf}\hfill
%\includegraphics[width=.3\textwidth]{graphics/simu_frob_final.pdf}
%\caption{default}
%\label{fig:figure3}
%\end{figure}

Figure \ref{fig:simu_asmd}(b) shows that the imbalances of covariate transformations are substantially smaller with the FSM than with CRD, RR 0.01, and RR 0.001. 
In fact, the FSM achieves a 70\% reduction in the mean ASMD with respect to RR 0.001.
Thus, although the FSM and RR 0.001 exhibit comparable balance in terms of the main covariates, the FSM balances these transformations of the covariates much better than RR 0.001. 
This highlights the improved robustness of the FSM against model misspecification.
Moreover, reducing the tuning parameter of rerandomization from 0.01 to 0.001 yields only 2\% improvement in the mean ASMD.\footnote{In fact, for some covariate transformations, reducing this tuning parameter exacerbates imbalance (see Table \ref{tab_app:simu_asmd_sqint}).}
In Figure \ref{fig:simu_asmd}(b), both RR 0.01 and RR 0.001 often produce ASMD larger than 0.1, and in some cases, larger than 0.5, indicative of substantial imbalances on these covariate transformations.

For each method, we also compare balance in the overall correlation structure of the covariates. Figure \ref{fig:simu_asmd}(c) shows the boxplots of the distributions of $||\underline{\bm{R}}_1 - \underline{\bm{R}}_2||_F$. 
The FSM outperforms the other three designs with at least 75\% smaller average $||\underline{\bm{R}}_1 - \underline{\bm{R}}_2||_F$.
In particular, among the 800 randomizations, the highest value of $||\underline{\bm{R}}_1 - \underline{\bm{R}}_2||_F$ under FSM is smaller than the corresponding lowest value under the other three designs, indicating that in terms of the correlation structure (and hence the interactions) of the covariates, the least balanced realization of the 800 FSMs exhibits better balance than the best balanced realization of the 800 complete randomizations and rerandomizations.
%\begin{figure}[H]
%\centering
%\includegraphics[scale =0.4]{}
%\caption{Distributions of discrepancies between the correlation matrices of the covariates in the treatment and the control group (as measured by the Frobenius norm, $||\underline{\bm{R}}_1 - \underline{\bm{R}}_2||_F$) across 800 randomizations. The FSM produces substantially lower discrepancies than the other three methods, indicating markedly improved balance on the correlations of the covariates.}
%\label{fig:simu_frob_cor}
%\end{figure}

%%%%%%%%%%%%%%%%%%%%%%%%%%%%
%%%%%%%%%%%%%%%%%%%%%%%%%%%%
\subsubsection{Efficiency}

We now compare the efficiency of the  methods under both model- and randomization-based approaches to inference. 
Under the model-based approach, we consider a potential outcome model where $\mathbb{E}\{Y_i(g)|\bm{X}_i\}$ is linear in $\bm{X}_i$ (Model A1) and another model where $\mathbb{E}\{Y_i(g)|\bm{X}_i\}$ is linear in $\bm{X}_i$ and all its second-order transformations (Model A2).
For each potential outcome model, we fit the corresponding observed outcome model by OLS and estimate $\text{PATE}_{2,1}$ using the regression imputation method described in Section \ref{sec_inference}. 
Tables \ref{tab:simu_var_model}(a) and \ref{tab:simu_var_model}(b) show the average and maximum  model-based standard error (SE) of the regression imputation estimator relative to the FSM across 800 randomizations under the two models.

\begin{singlespacing}
\begin{table}[H]
   \caption{Average and maximum model-based standard errors relative to the FSM across randomizations. Under Model A1 (linear model on the main covariates), the FSM and RR exhibit similar performance, improving over CRD. Under Model A2 (linear model on the main covariates and their second-order transformations), the FSM is considerably more efficient than both CRD and RR.}
   \begin{subtable}{.5\linewidth}
   \centering
   \caption{\footnotesize Model A1}
   \scalebox{0.75}{
   %  \centering
            \begin{tabular}{p{2.5cm}cccc}
    \toprule 
    \multirow{2}{5cm}{} & \multicolumn{4}{c}{Designs}\\
   \cline{2-5}
    & CRD & RR 0.01 & RR 0.001 & FSM\\
    \toprule
Average SE & 1.03 & 1.00 & 1.00 & 1.00 \\
Maximum SE & 1.13 & 1.00 & 1.00 & 1.00 \\
\bottomrule
  \end{tabular}
}
    \end{subtable}%
    \begin{subtable}{.5\linewidth}
    \centering
   \caption{\footnotesize Model A2}
   \scalebox{0.75}{
       %\centering
    %   \caption{\footnotesize (b) RMSE }
        \begin{tabular}{p{2.5cm}cccc}
    \toprule 
    \multirow{2}{5cm}{} & \multicolumn{4}{c}{Designs}\\
   \cline{2-5}
    & CRD & RR 0.01 & RR 0.001 & FSM\\
    \toprule
Average SE & 1.39 & 1.27 & 1.26 & 1.00 \\
Maximum SE & 3.61 & 1.97 & 1.80 & 1.00 \\
\bottomrule 
  \end{tabular}
  }
  \end{subtable}
  \label{tab:simu_var_model}
    \end{table}
\end{singlespacing}

Under Model A1, since both rerandomization and the FSM are able to adequately balance the means of the original covariates, 
%by the OPCODE theorem, 
they lead to lower SE (hence, higher efficiency) than CRD. 
Across randomizations, the worst case SE under RR 0.01, RR 0.001, and the FSM are 13\% smaller than under CRD. 
Under Model A1, the FSM has similar model-based SE as the two rerandomization methods. 
However, under Model A2, the FSM uniformly outperforms the other three designs, with a 26\% reduction in average SE and an 80\% reduction in maximum SE than RR 0.001. 
%By the OPCODE theorem, t
This improvement in efficiency can be attributed to the balance achieved by the FSM on the main covariates and their squares and pairwise products. 
In sum, when the model assumed at the design stage is correct and is used at the analysis stage, the FSM is as efficient as the two rerandomizations for estimating the treatment effect. 
However, when the model assumed at the design stage is misspecified and later corrected by augmenting transformations of the covariates (e.g., squares and pairwise products), the FSM is considerably more efficient and robust than the other designs. 

Under the randomization-based approach, we compare the standard errors of the difference-in-means statistic under each design. Following \cite{hainmueller2012balancing}, the potential outcomes are generated using the models: $Y(1) = X_1 + X_2 + X_3 - X_4 +X_5 + X_6 + \eta$,  $Y(2) = Y(1)$ (Model B1) and $Y(1) = (X_1 + X_2 + X_5)^2 + \eta$, $Y(2) = Y(1)$ (Model B2), where $\eta \sim \mathcal{N}(0,1)$. 
Both generative models satisfy the sharp-null hypothesis of zero treatment effect for every unit and hence, $\text{SATE}_{2,1} = 0$. Conditional on these potential outcomes, $\text{SATE}_{2,1}$ is estimated under each design using the standard difference-in-means estimator. The corresponding randomization-based SE of this estimator is obtained by generating 800 randomizations of the design and computing the standard deviation of the difference-in-means estimator across these 800 randomizations.   
Table \ref{tab:simu_var_rand} shows the randomization-based SE of the difference-in-means statistic for $\text{SATE}_{2,1}$ under each model. 

\begin{singlespacing}
\begin{table}[H]
 \caption{Randomization-based standard errors relative to the FSM. The standard error for the FSM is 0.2 under Model B1 (linear model on the main covariates) and 0.43 under Model B2 (linear model on the main covariates and their second-order transformations). Especially under Model B2, the FSM is considerably more efficient than both CRD and RR.}
   \begin{subtable}{.5\linewidth}
   \centering
   \caption{\footnotesize Model B1}
   \scalebox{0.75}{
   %  \centering
            \begin{tabular}{p{1.5cm}cccc}
    \toprule 
    \multirow{2}{5cm}{} & \multicolumn{4}{c}{Designs}\\
   \cline{2-5}
    & CRD & RR 0.01 & RR 0.001 & FSM\\
    \toprule
SE & 2.72 & 1.26 & 1.08 & 1 \\
\bottomrule
  \end{tabular}
}
    \end{subtable}%
    \begin{subtable}{.5\linewidth}
    \centering
   \caption{\footnotesize Model B2}
   \scalebox{0.75}{
       %\centering
    %   \caption{\footnotesize (b) RMSE }
        \begin{tabular}{p{1.5cm}cccc}
    \toprule 
    \multirow{2}{5cm}{} & \multicolumn{4}{c}{Designs}\\
   \cline{2-5}
    & CRD & RR 0.01 & RR 0.001 & FSM\\
    \toprule
SE & 5.69 & 4.56 & 4.47 & 1 \\
\bottomrule 
  \end{tabular}
  }
  \end{subtable}
\label{tab:simu_var_rand}
    \end{table}
\end{singlespacing}

Under Model B1, the potential outcomes depend linearly on the covariates and therefore balancing the means of the covariates improves efficiency. 
This is reflected in Table \ref{tab:simu_var_rand} as the FSM has the smallest SE, closely followed by RR 0.001. 
Under Model B2, the potential outcomes depend linearly on the squares and pairwise products of the covariates. 
By better balancing these transformations, the FSM yields a considerably smaller SE than the other designs. 
In particular, under Model B2, the SE under the FSM is 67\% smaller than the SE under RR 0.001.
Therefore, as in the model-based approach, in the randomization-based approach the FSM exhibits comparable efficiency to rerandomization under correct-specification of the outcome model and considerable robustness under model misspecification.

%%%%%%%%%%%%%%%%%%%%%%%%%%
%%%%%%%%%%%%%%%%%%%%%%%%%
\subsubsection{Additional tables and figures from the simulation study}

\begin{singlespacing}
\begin{table}[H]
   \caption{\footnotesize Averages of the ASMD of the original covariates across 800 randomizations.}
\scalebox{0.85}{
     \centering
            \begin{tabular}{p{3cm}cccc}
    \toprule 
    \multirow{2}{5cm}{Covariates} & \multicolumn{4}{c}{Designs}\\
   \cline{2-5}
    & CRD & RR 0.01 & RR 0.001 & FSM\\
    \toprule
$X_1$ & 0.162 & 0.051 & 0.035 & 0.029 \\ 
$X_2$ & 0.156 & 0.048 & 0.033 & 0.025 \\ 
$X_3$ & 0.158 & 0.049 & 0.033 & 0.042 \\ 
$X_4$ & 0.150 & 0.049 & 0.034 & 0.029 \\ 
$X_5$ & 0.140 & 0.052 & 0.034 & 0.029 \\ 
$X_6$ & 0.141 & 0.052 & 0.036 & 0.035 \\ 
\hline 
Mean & 0.151 & 0.050 & 0.034 & 0.032\\ 
\bottomrule
  \end{tabular}
}
\label{tab_app:simu_asmd_org}
\end{table}
\end{singlespacing}

\begin{singlespacing}
\begin{table}[H]
\scalebox{0.85}{
         \begin{tabular}{p{3cm}cccc}
    \toprule 
    \multirow{2}{5cm}{Covariate \\ transformations} & \multicolumn{4}{c}{Designs}\\
   \cline{2-5}
    & CRD & RR 0.01 & RR 0.001 & FSM\\
    \toprule
$X_1 X_2$ & 0.144 & 0.153 & 0.148 & 0.041 \\ 
$X_1 X_3$ & 0.144 & 0.140 & 0.137 & 0.041 \\ 
$X_1 X_4$ & 0.141 & 0.148 & 0.147 & 0.023 \\ 
$X_1 X_5$ & 0.150 & 0.135 & 0.134 & 0.035 \\ 
$X_1 X_6$ & 0.152 & 0.109 & 0.101 & 0.030 \\ 
$X_2 X_3$ & 0.147 & 0.147 & 0.146 & 0.051 \\ 
$X_2 X_4$ & 0.140 & 0.155 & 0.150 & 0.027 \\ 
$X_2 X_5$ & 0.147 & 0.143 & 0.136 & 0.030 \\ 
$X_2 X_6$ & 0.152 & 0.115 & 0.104 & 0.026 \\ 
$X_3 X_4$ & 0.141 & 0.143 & 0.152 & 0.032 \\ 
$X_3 X_5$ & 0.149 & 0.140 & 0.139 & 0.096 \\ 
$X_3 X_6$ & 0.148 & 0.099 & 0.091 & 0.035 \\ 
$X_4 X_5$ & 0.148 & 0.132 & 0.130 & 0.037 \\ 
$X_4 X_6$ & 0.152 & 0.100 & 0.095 & 0.027 \\ 
$X_5 X_6$ & 0.146 & 0.095 & 0.094 & 0.024 \\ 
$X_1^2$ & 0.140 & 0.145 & 0.143 & 0.031 \\ 
$X_2^2$ & 0.151 & 0.155 & 0.150 & 0.038 \\ 
$X_3^2$ & 0.144 & 0.136 & 0.132 & 0.041 \\ 
$X_4^2$ & 0.143 & 0.145 & 0.147 & 0.053 \\ 
$X_5^2$ & 0.142 & 0.073 & 0.067 & 0.013 \\
\hline
Mean & 0.146 & 0.130 & 0.127 & 0.037 \\
\hline
$X_5^{1.5}$ & 0.141 & 0.060 & 0.048 & 0.018 \\ 
$X_2^3$ & 0.155 & 0.090 & 0.081 & 0.071 \\ 
$X_4^4$ & 0.140 & 0.143 & 0.147 & 0.072 \\ 
$\frac{1}{4+X_3}$ & 0.157 & 0.073 & 0.064 & 0.050 \\ 
\hline
Mean & 0.148 & 0.092 & 0.085 & 0.053 \\
\bottomrule 
  \end{tabular}
}
\caption{\footnotesize Averages of the ASMD of squares, pairwise products, and other transformations of the covariates across 800 randomizations.} 
\label{tab_app:simu_asmd_sqint}      
\end{table}
\end{singlespacing}

%\begin{singlespacing}
%\begin{table}[H]
%\scalebox{0.85}{
%         \begin{tabular}{p{3cm}cccc}
%    \toprule 
%    \multirow{2}{5cm}{p-values} & \multicolumn{4}{c}{Designs}\\
%   \cline{2-5}
%    & CRD & RR 0.01 & RR 0.001 & FSM\\
%    \toprule
%MCM & 0.49 & 0.58 & 0.59 & 0.63 \\
%MMCM & 0.51 & 0.48 & 0.49 & 0.60  \\
%\bottomrule
%  \end{tabular}
%}
%\caption{\footnotesize Averages of the p-values of the MCM and MMCM tests over 800 randomizations of each design.} 
%\label{tab_app:simu_mcm}      
%\end{table}
%\end{singlespacing}

\begin{figure}[H]
\centering
\includegraphics[scale =0.5]{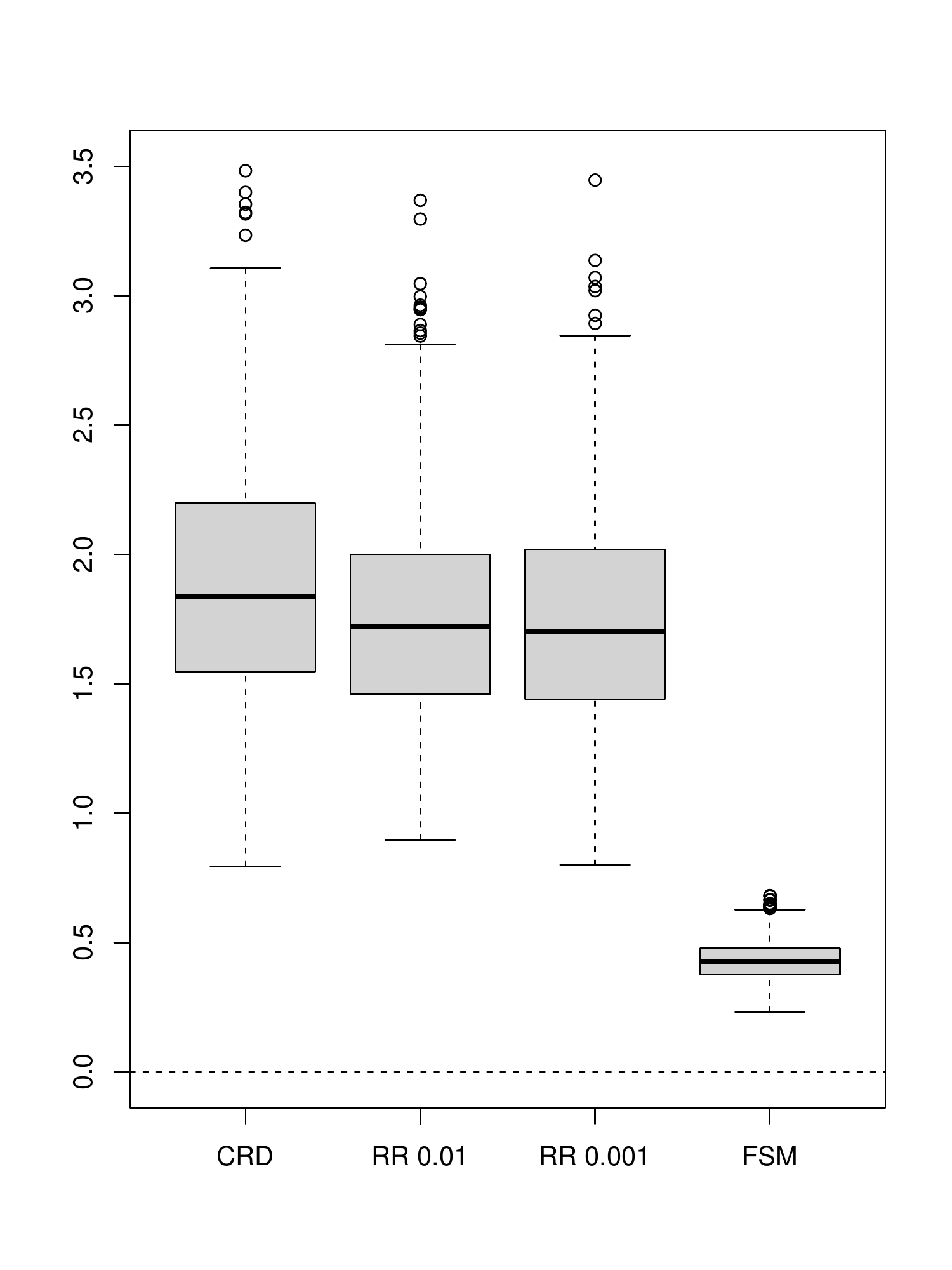}
\caption{\footnotesize Boxplot of the distribution of $||\underline{\bm{S}}_1 - \underline{\bm{S}}_2||_F$ across 800 randomizations, where $\underline{\bm{S}}_g$ is the sample covariance matrix of the covariates in treatment group $g \in \{1,2\}$.}
\label{fig:simu_frob}
\end{figure}

%%%%%%%%%%%%%%%%%%%%
%%%%%%%%%%%%%%%%%%%%
\subsubsection{Additional tables and figures from the Health Insurance Experiment}
\label{appsec_hie}

\begin{singlespacing}
\begin{table}[H]
\centering
\scalebox{0.85}{
\begin{tabular}{p{6cm}cccc}
\toprule 
\multirow{2}{5cm}{Covariates} & \multicolumn{4}{c}{Designs}\\
 \cline{2-5}
& CRD & RR Wilks & RR Mahalanobis & FSM\\
\toprule
$X_{1}:$ Log family size & 0.052 & 0.039 & 0.038 & 0.012 \\ 
$X_{2}:$ Log family income & 0.052 & 0.040 & 0.043 & 0.010 \\ 
$X_{3}:$ Max hourly wage & 0.051 & 0.042 & 0.047 & 0.017 \\ 
$X_{4}:$ Adult med visits & 0.049 & 0.043 & 0.041 & 0.014 \\ 
$X_{5}:$ Kid med visits & 0.048 & 0.039 & 0.040 & 0.010 \\
  \hline
$X_{6}:$ Female & 0.047 & 0.039 & 0.040 & 0.010 \\ 
 $X_{7}:$ Age 0 to 5 & 0.053 & 0.038 & 0.039 & 0.010 \\ 
 $X_{8}:$ Age 6 to 17 & 0.051 & 0.041 & 0.039 & 0.011 \\ 
 $X_{9}:$ Age 18 to 44 & 0.053 & 0.038 & 0.040 & 0.010 \\ 
 $X_{10}:$ Male HS Grad & 0.051 & 0.038 & 0.041 & 0.006 \\ 
 $X_{11}:$ Male more than HS & 0.048 & 0.037 & 0.041 & 0.006 \\ 
 $X_{12}:$ Insured & 0.049 & 0.040 & 0.038 & 0.010 \\ 
 $X_{13}:$ Excellent health & 0.052 & 0.040 & 0.037 & 0.009 \\ 
 $X_{14}:$ Good health & 0.053 & 0.038 & 0.037 & 0.010 \\ 
  \hline
 $X_{15}:$ Family income mis & 0.052 & 0.038 & 0.041 & 0.011 \\ 
 $X_{16}:$ Max hourly wage mis & 0.051 & 0.038 & 0.041 & 0.013 \\ 
 $X_{17}:$ Adult med visits mis & 0.054 & 0.040 & 0.040 & 0.011 \\ 
 $X_{18}:$ Kid med visits mis & 0.057 & 0.041 & 0.039 & 0.011 \\ 
 $X_{19}:$ Education male mis & 0.048 & 0.038 & 0.041 & 0.008 \\ 
 $X_{20}:$ Insured mis & 0.048 & 0.039 & 0.038 & 0.011 \\ \hline
Mean & 0.051 & 0.039 & 0.040 & 0.011\\
   \bottomrule
\end{tabular}
}
\caption{\footnotesize Average ASMD of the main covariates between treatment groups $1$ and $2$ across 400 randomizations.}
\label{tab:hie1}
\end{table}
\end{singlespacing}

\begin{singlespacing}
\begin{table}[H]
\centering
\scalebox{0.85}{
\begin{tabular}{p{4cm}cccc}
\toprule 
\multirow{2}{5cm}{Treatment group} & \multicolumn{4}{c}{Designs}\\
 \cline{2-5}
& CRD & RR Wilks & RR Mahalanobis & FSM\\
\toprule
$1,2$ & 0.051 & 0.039 & 0.040 & 0.011 \\ 
$1,3$ & 0.055 & 0.041 & 0.043 & 0.011 \\ 
$1,4$ & 0.049 & 0.038 & 0.039 & 0.010 \\ 
$2,3$ & 0.056 & 0.043 & 0.045 & 0.012 \\
$2,4$ & 0.053 & 0.040 & 0.041 & 0.010 \\
$3,4$ & 0.056 & 0.042 & 0.044 & 0.012 \\
  \hline
  Mean & 0.053 & 0.040 & 0.042 & 0.011 \\
   \bottomrule
\end{tabular}
}
\caption{\footnotesize Averages of the ASMD between each pair of treatment groups across the original covariates and across 400 randomizations.}
\label{tab:hie_avg_org}
\end{table}
\end{singlespacing}

\begin{singlespacing}
\begin{table}[H]
\centering
\scalebox{0.85}{
\begin{tabular}{p{4cm}cccc}
\toprule 
\multirow{2}{5cm}{Covariates} & \multicolumn{4}{c}{Designs}\\
 \cline{2-5}
& CRD & RR Wilks & RR Mahalanobis & FSM\\
\toprule
$X_1X_2$ & 0.053 & 0.039 & 0.041 & 0.020 \\ 
$X_1X_3$ & 0.053 & 0.047 & 0.046 & 0.027 \\ 
$X_1X_4$ & 0.054 & 0.045 & 0.045 & 0.020 \\ 
$X_1X_5$ & 0.049 & 0.040 & 0.041 & 0.013 \\ 
$X_2X_3$ & 0.054 & 0.049 & 0.053 & 0.038 \\ 
$X_2X_4$ & 0.050 & 0.045 & 0.048 & 0.017 \\ 
$X_2X_5$ & 0.052 & 0.039 & 0.039 & 0.015 \\ 
$X_3X_4$ & 0.054 & 0.043 & 0.045 & 0.022 \\ 
$X_3X_5$ & 0.050 & 0.042 & 0.046 & 0.022 \\ 
$X_4X_5$ & 0.054 & 0.044 & 0.045 & 0.015 \\ 
$X^2_1$ & 0.053 & 0.041 & 0.040 & 0.026 \\ 
$X^2_2$ & 0.051 & 0.041 & 0.042 & 0.015 \\ 
$X^2_3$ & 0.057 & 0.055 & 0.058 & 0.026 \\ 
$X^2_4$ & 0.053 & 0.053 & 0.053 & 0.012 \\ 
$X^2_5$ & 0.051 & 0.043 & 0.044 & 0.004 \\ 
  \hline
  Mean  & 0.053 & 0.044 & 0.046 & 0.019\\
\bottomrule
\end{tabular}
}
\caption{\footnotesize Averages of the ASMD of the squares and pairwise products of the (demeaned) covariates $X_1$,..., $X_5$ between treatment groups $1$ and $2$ across 400 randomizations.}
\label{tab:hie2}
\end{table}
\end{singlespacing}

\begin{singlespacing}
\begin{table}[H]
\centering
\scalebox{0.85}{
\begin{tabular}{p{4cm}cccc}
\toprule 
\multirow{2}{5cm}{Treatment group} & \multicolumn{4}{c}{Designs}\\
 \cline{2-5}
& CRD & RR 0.01 & RR 0.001 & FSM\\
\toprule
$1,2$ & 0.053 & 0.044 & 0.046 & 0.019 \\ 
$1,3$ & 0.056 & 0.046 & 0.048 & 0.020 \\ 
$1,4$ & 0.051 & 0.043 & 0.044 & 0.017 \\ 
$2,3$ & 0.058 & 0.049 & 0.049 & 0.021 \\
$2,4$ & 0.054 & 0.046 & 0.046 & 0.018 \\
$3,4$ & 0.058 & 0.048 & 0.049 & 0.023 \\
  \hline
  Mean & 0.055 & 0.046 & 0.047 & 0.020 \\
   \bottomrule
\end{tabular}
}
\caption{\footnotesize Averages of the ASMD between each pair of treatment groups across the squares and pairwise products of the (demeaned) covariates $X_1$,..., $X_5$ and across 400 randomizations.}
\label{tab:hie_avg_sqint}
\end{table}
\end{singlespacing}

\begin{figure}[H]
    \centering
    \includegraphics[scale = 0.7]{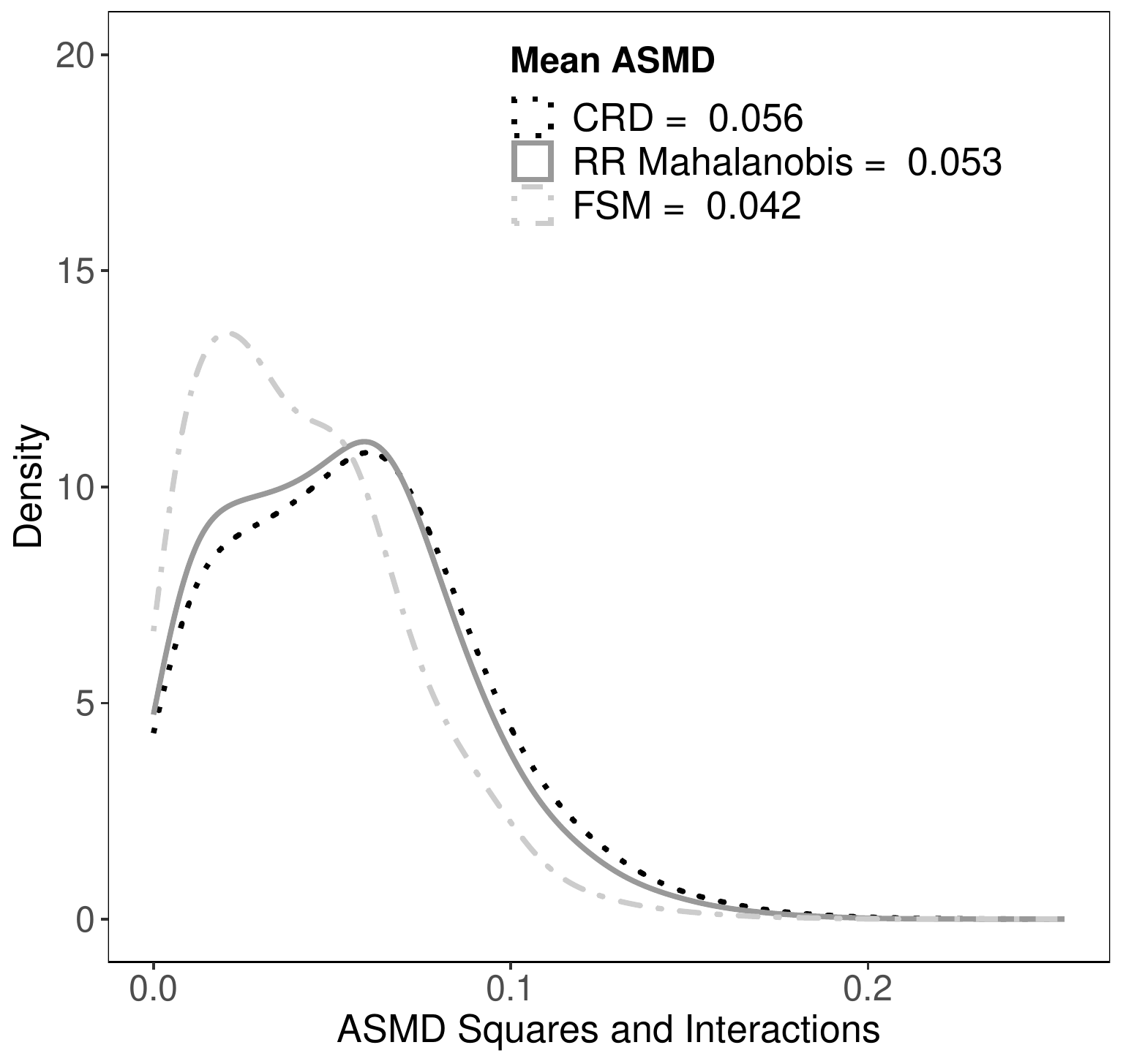}
    \caption{Distributions of ASMD of all cubes and three-way interactions of the non-binary covariates across randomizations.}
    \label{fig:hie_threeway}
\end{figure}

\begin{figure}[!ht]
%\centering
\begin{subfigure}{.52\textwidth}
  %\centering
  \includegraphics[scale =0.7]{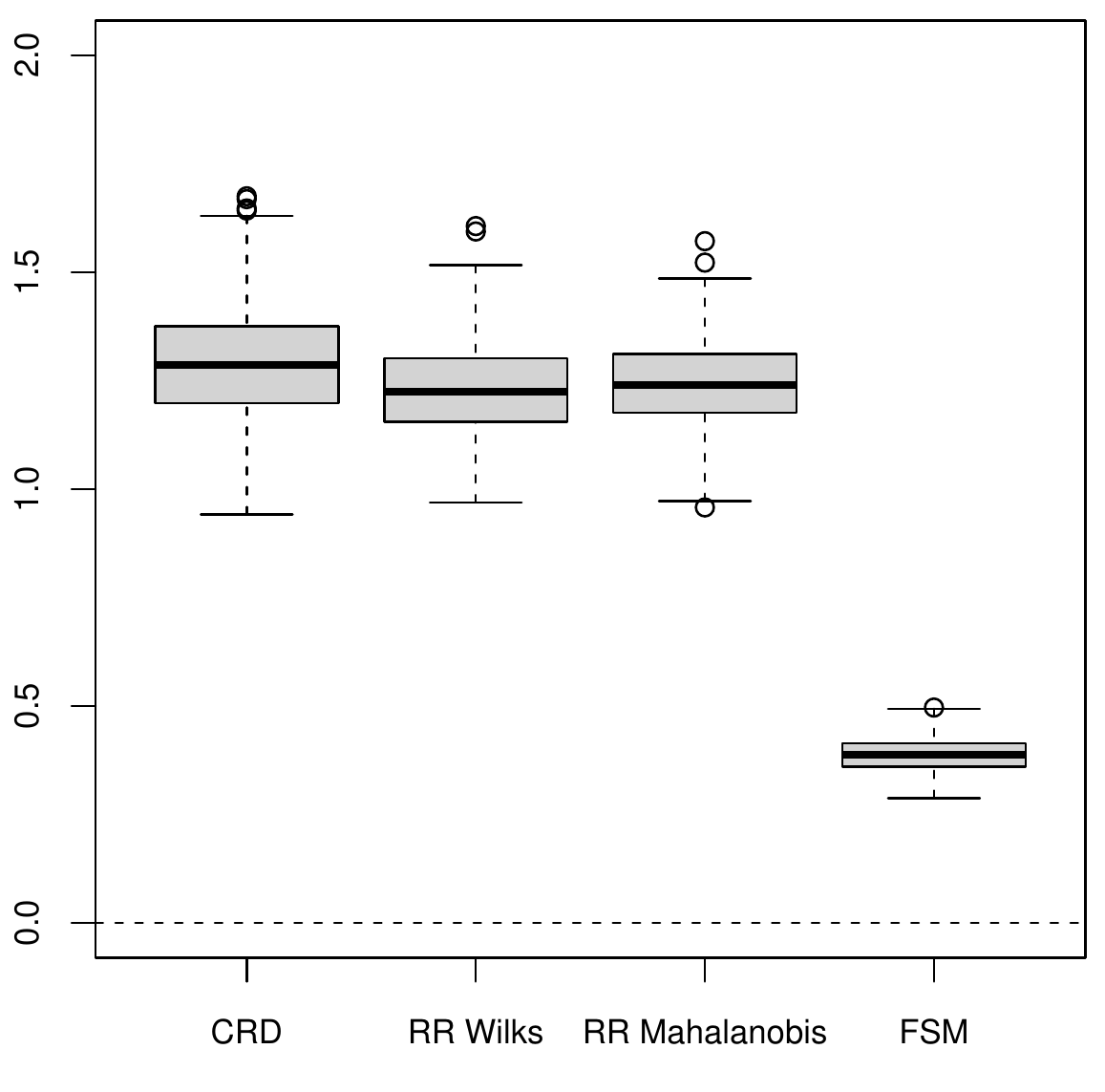}
  \caption{\footnotesize $||\underline{\bm{R}}_1 - \underline{\bm{R}}_2||_F$}
  \label{fig:simu_asmd_org}
\end{subfigure}%
\begin{subfigure}{.52\textwidth}
  %\centering
  \includegraphics[scale =0.7]{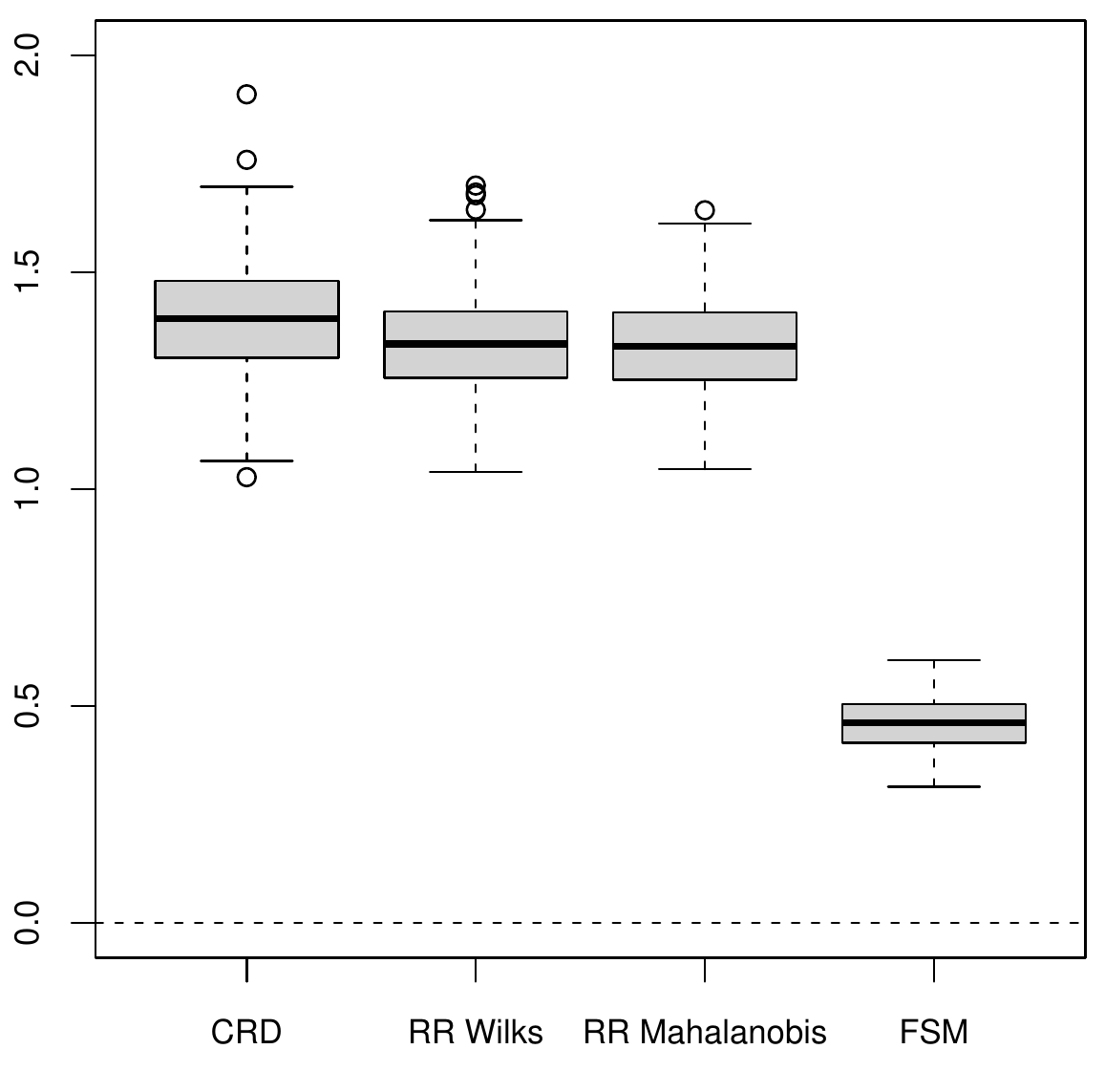}
  \caption{\footnotesize $||\underline{\bm{R}}_3 - \underline{\bm{R}}_4||_F$}
  \label{fig:simu_asmd_sqint}
\end{subfigure}
\caption{\footnotesize Distributions of discrepancies of the correlation matrices of the covariates in the treatment groups of the HIE data across randomizations.
The discrepancies are measured by $||\underline{\bm{R}}_{g} - \underline{\bm{R}}_{g'}||_F$, where $\underline{\bm{R}}_g$ is the sample correlation matrix of the covariates in treatment group $g$ and $||\cdot||_F$ is the Frobenius norm.
The FSM systematically produces lower discrepancies than the other methods, exhibiting substantially improved balance on the correlations of the covariates.
}
\label{figfrob}
\end{figure}

\begin{figure}[H]
%\centering
\begin{subfigure}{.52\textwidth}
  %\centering
  \includegraphics[scale =0.7]{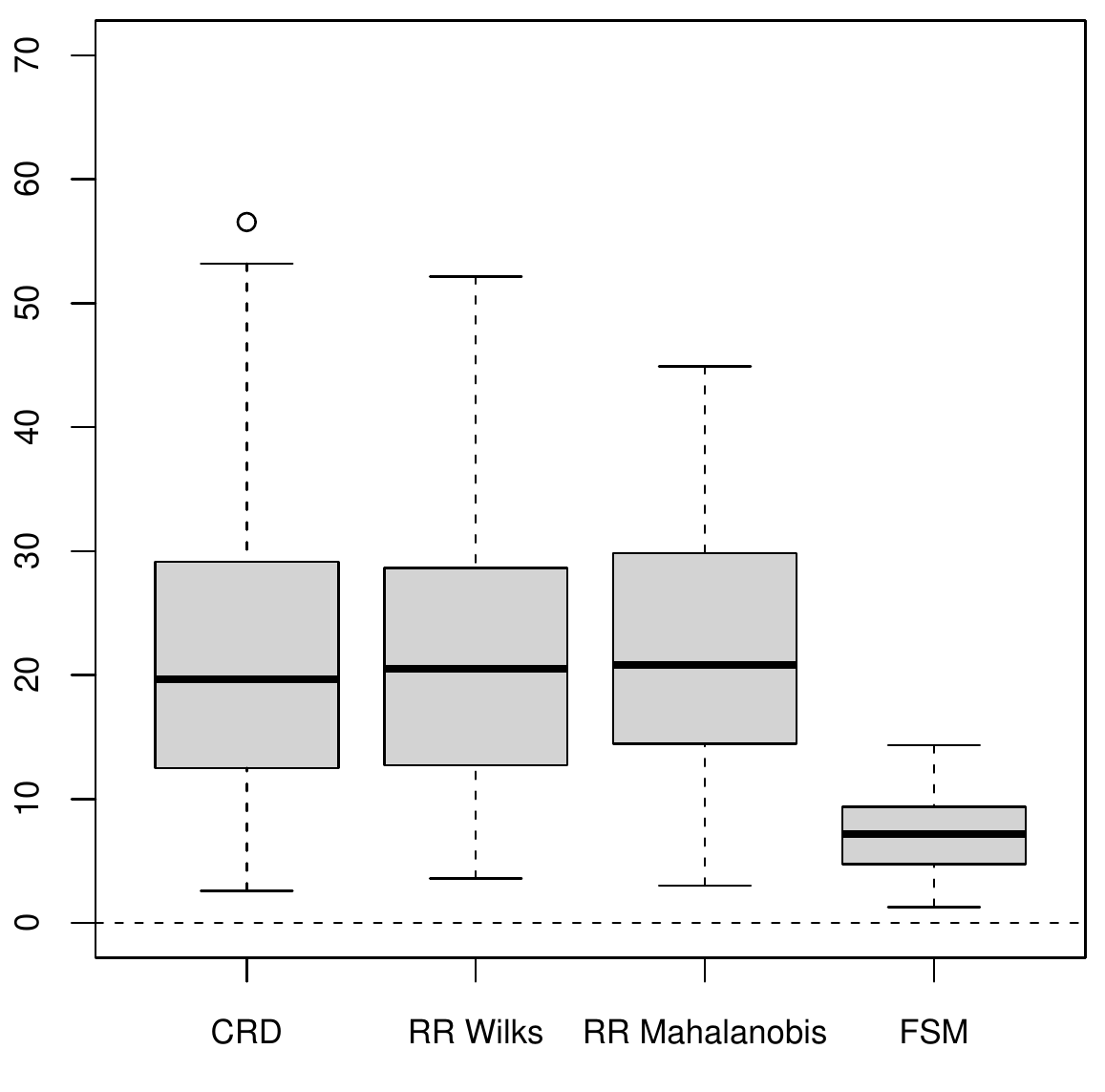}
  \caption{\footnotesize $||\underline{\bm{S}}_1 - \underline{\bm{S}}_2||_F$}
  \label{fig:simu_asmd_org}
\end{subfigure}%
\begin{subfigure}{.52\textwidth}
  %\centering
  \includegraphics[scale =0.7 ]{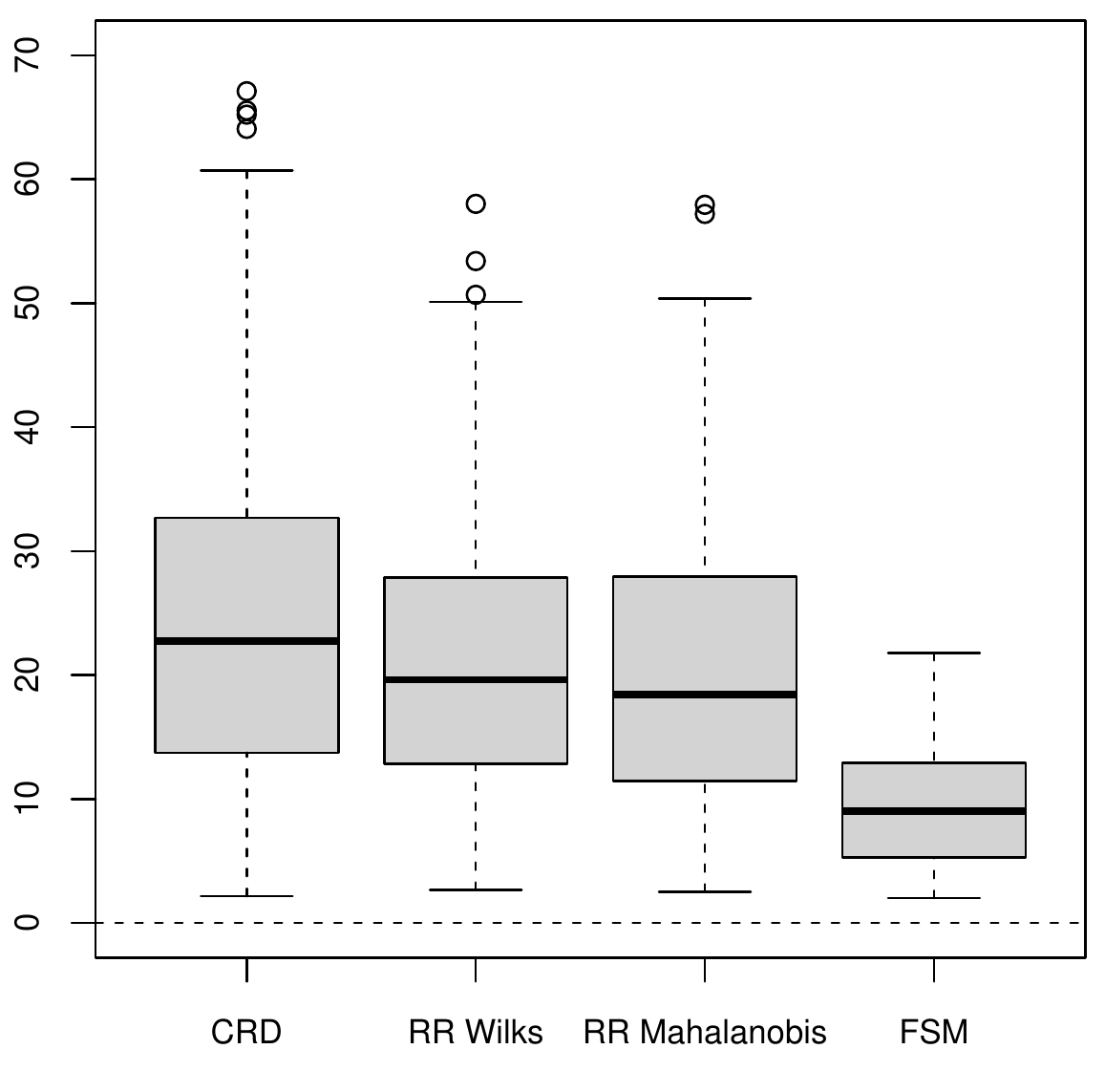}
  \caption{\footnotesize $||\underline{\bm{S}}_3 - \underline{\bm{S}}_4||_F$}
  \label{fig:simu_asmd_sqint}
\end{subfigure}
\caption{\footnotesize Boxplot of the distribution of $||\underline{\bm{S}}_{g} - \underline{\bm{S}}_{g'}||_F$ across 400 randomizations, where $\underline{\bm{S}}_g$ is the sample covariance matrix of the covariates in treatment group $g \in \{1,2,3,4\}$.}
\label{fig:test}
\end{figure}

We now compare the efficiency of the designs in the randomization-based approach with four additional potential outcome models given below.
\begin{itemize}
    \item Model B3: $Y(3) = 5 - 3X_1 + X_2 + X_3 - 0.2X_4 + 0.8X_5 + \eta$, $Y(3) = Y(2)$.
    \item Model B4: $Y(3) = 5 - 2X^2_1 + 0.5X^2_3 + 0.5X^2_5 + 5X_1X_2 - 0.8X_3X_5 + \eta$, $Y(3) = Y(2)$.
    \item Model B5: $Y(3) = 10 + 8X_1X_2 + 3X_2X_5 - 0.5X_3X_5 + \eta$, $Y(3) = Y(2)$.
    \item Model B6: $Y(3) = 0.8X_1X_2 - 3X^2_3 + \frac{1}{1+X_4} - 4X^3_1 + \eta$
\end{itemize}
For each model, the error term $\eta \sim \mathcal{N}(0,1.5^2)$.
Under each design, $\text{SATE}_{3,2}$ is estimated using the standard difference-in-means estimator and the corresponding randomization-based SE is obtained by generating 800 randomizations and computing the standard deviation of the estimator across these 800 randomizations.  
The results are summarized in Table \ref{tab:hie_var_rand2}.

\begin{singlespacing}
\begin{table}[H]
   \caption{Randomization-based standard errors relative to the FSM under Models B3, B4, B5, B6}
 \begin{tabular}{p{3cm}cccc}
    \toprule 
    \multirow{2}{5cm}{} & \multicolumn{4}{c}{Designs}\\
   \cline{2-5}
    & CRD & RR Wilks & RR Mahalanobis & FSM\\
    \toprule
Model B3 & 2.12 & 1.56 & 1.71 & 1 \\
Model B4 & 2.14 & 1.75 & 1.81 & 1 \\
Model B5 & 2.92 & 2.33 & 2.36 & 1 \\
Model B6 & 1.51 & 1.37 & 1.33 & 1 \\
\bottomrule
  \end{tabular}
\label{tab:hie_var_rand2}
\end{table}
\end{singlespacing}

%%%%%%%%%%%
We finish this section by evaluating and comparing the covariate balance on the main covariates and the second order transformations, for CRD, RR, and the FSM, where RR uses the Mahalanobis distance based on the main covariates only and accepts $0.1\%$ of the assignments.

\begin{figure}[H]
%\centering
\begin{subfigure}{.33\textwidth}
  %\centering
  \includegraphics[scale =0.43]{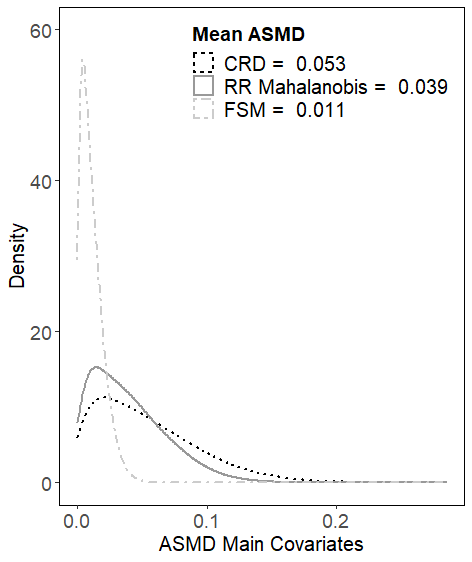}
  \caption{\footnotesize Main covariates}
  %\label{fig:simu_asmd_org}
\end{subfigure}%
\begin{subfigure}{.33\textwidth}
  %\centering
  \includegraphics[scale =0.43]{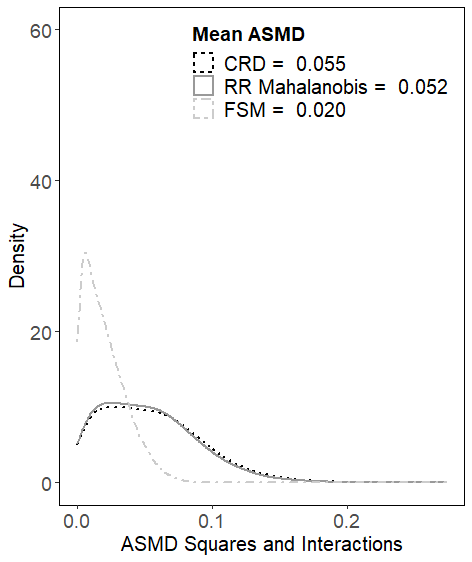}
  \caption{\footnotesize Squares and pairwise products}
  %\label{fig:simu_asmd_sqint}
\end{subfigure}
\begin{subfigure}{.33\textwidth}
  %\centering
  \includegraphics[scale =0.43]{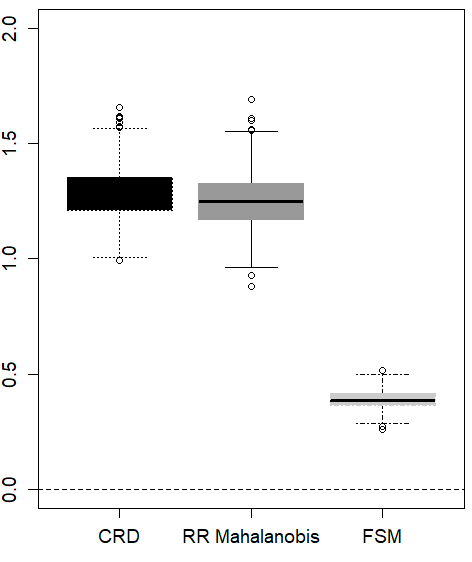}
  \caption{\footnotesize Frobenius norm}
  %\label{fig:simu_frob_cor}
\end{subfigure}
\caption{Distributions of absolute standardized mean differences (ASMD) of the main covariates (panel (a)) and the squares and pairwise products of the scaled covariates (panel (b)) across randomizations. For each plot, the legend presents the average ASMD across simulations for the four methods. Panel (c) shows distributions of discrepancies between the correlation matrices of the covariates in treatment groups 1 and 2 (as measured by the Frobenius norm, $||\underline{\bm{R}}_1 - \underline{\bm{R}}_2||_F$).
In terms of the main covariates, second-order transformations, and correlation matrices, the FSM substantially outperforms CRD and RR.}
\label{fig:test1_old}
\end{figure}

%%%%%%%%%%%%%%%%%%
%%%%%%%%%%%%%%%%%%
\subsubsection{Additional figures from the case studies}
\label{sec_additional2}

\begin{figure}
\centering
\includegraphics[scale =0.85]{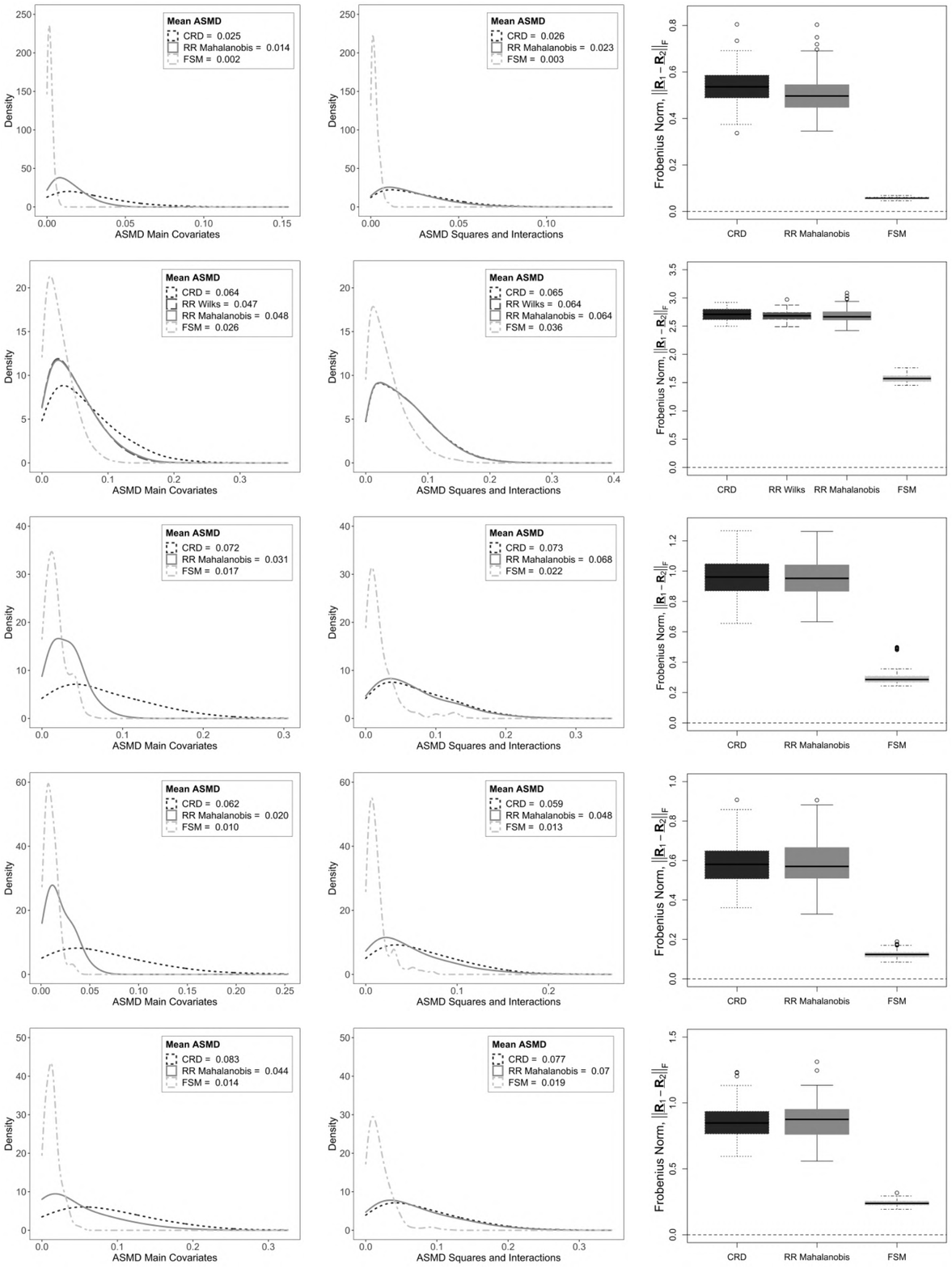}
\caption{Distributions of the absolute standardized mean differences of the main covariates and their squares and interactions, and the Frobenius norms of $\underline{\bm{R}}_1 - \underline{\bm{R}}_2$ under complete randomization, rerandomization, and the FSM, for the five studies: (1) Angrist, (2) Blattman, (3) Durocher, (4) Finkelstein, (5) Lalonde.}
\label{fig:five_studies}
\end{figure}

\begin{figure}
\centering
\includegraphics[scale =0.85]{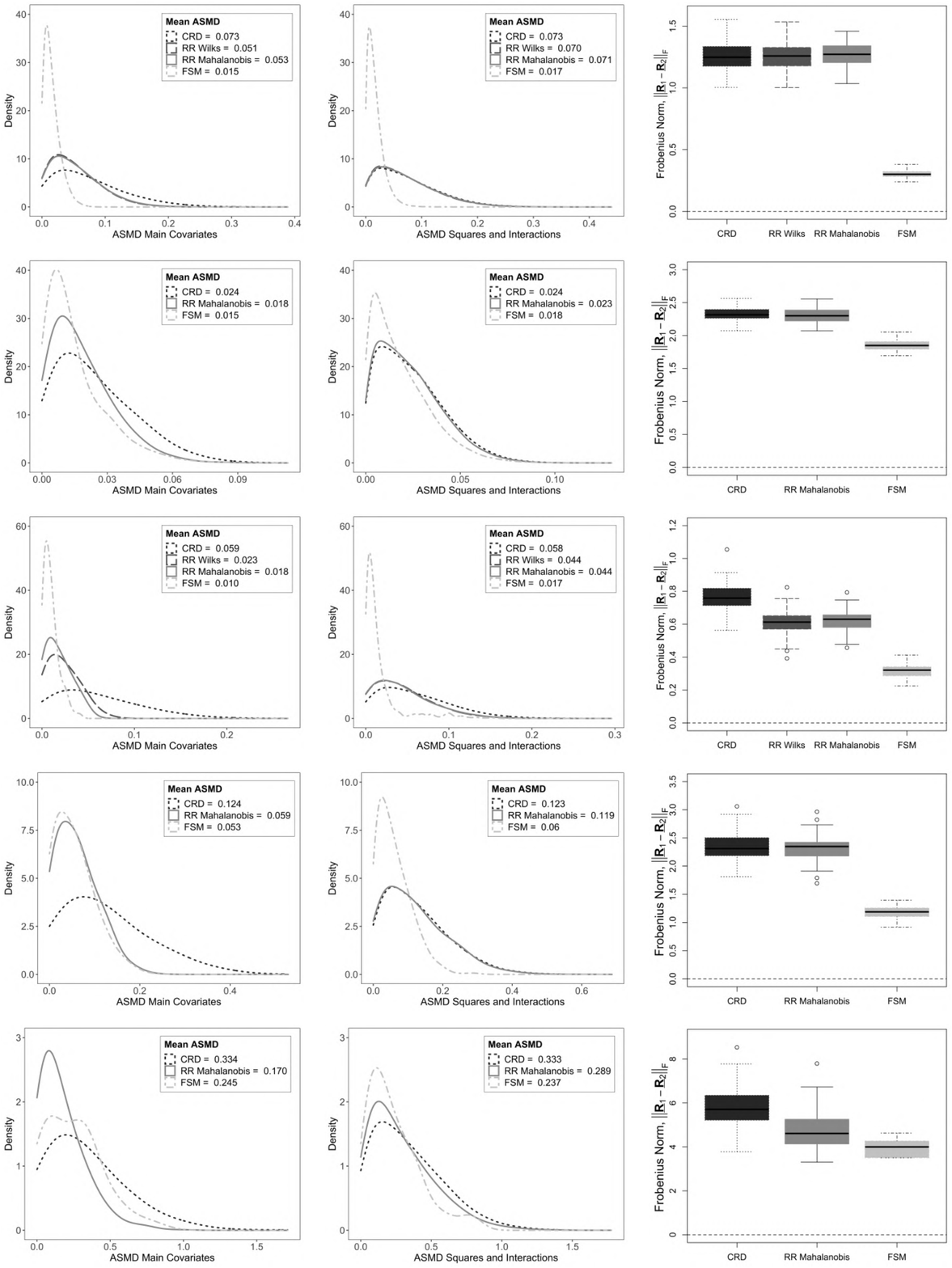}
\caption{Distributions of the absolute standardized mean differences of the main covariates and their squares and interactions, and the Frobenius norms of $\underline{\bm{R}}_1 - \underline{\bm{R}}_2$ under complete randomization, rerandomization, and the FSM, for the five studies: (6) Ambler, (7) Crepon, (8) Dupas, (9) Karlan, (10) Wantchekon.}
\label{fig:five_studies2}
\end{figure}

\end{document}